\pdfoutput=1
\documentclass[11pt]{scrartcl}

\usepackage[a4paper, total={16cm, 24cm}]{geometry}
\usepackage{microtype}
\usepackage{multicol}

\usepackage{xspace}

\usepackage{amsmath}
\usepackage{amssymb}
\usepackage{amsthm}
\usepackage{nicefrac}
\usepackage{thm-restate}
\usepackage{mathtools}

\renewcommand\tableofcontents{\listoftoc*{toc}} %

\usepackage[dvipsnames]{xcolor}
\usepackage{graphicx}
\usepackage{tikz}
\usetikzlibrary{backgrounds,calc}

\usepackage{natbib}
\usepackage{hyperref}
\hypersetup{
    pdfencoding=auto, 
    psdextra,
    colorlinks=true,
    citecolor=green!40!black,
    linkcolor=red!50!black,
    urlcolor=blue!80!black
}
\usepackage[nameinlink]{cleveref}
\usepackage{doi}

\usepackage{tabularx}
\usepackage{booktabs}
\usepackage{multirow}

\usepackage{algorithm}
\usepackage{algorithmic}

\usepackage{pdfpages}
\usepackage{multicol}

\newcommand{\N}{\mathbb{N}}

\newcommand{\Yes}{\textsc{Yes}\xspace}
\newcommand{\No}{\textsc{No}\xspace}

\newcommand{\agents}{A}
\newcommand{\numAgents}{{|\agents|}}

\newcommand{\alloc}{\pi}

\newcommand{\dFn}{\operatorname{d}}
\newcommand{\relSet}{M}

\DeclareMathOperator{\cost}{cost}

\newcommand{\Gpref}{D_{\operatorname{pref}}}

\newcommand{\probName}[1]{\textsc{#1}\xspace}

\DeclareMathOperator{\dist}{dist}

\newcommand{\Oh}[1]{{\mathcal{O}\left(#1\right)}}

\NewDocumentCommand{\cc}{ O{} O{} m }{\mbox{%
    \expandafter\ifx\expandafter\relax\detokenize{#2}\relax\else{#2-}\fi%
    \textsf{#3}%
    \expandafter\ifx\expandafter\relax\detokenize{#1}\relax\else{-#1}\fi%
    }\xspace}

\newcommand{\NPc}{\cc[complete]{NP}}
\newcommand{\NPcness}{\cc[completeness]{NP}}
\newcommand{\FPT}{\cc{FPT}}
\newcommand{\XP}{\cc{XP}}
\newcommand{\W}[1][1]{\cc{W[#1]}}
\newcommand{\Wh}[1][1]{\cc[hard]{W[#1]}}

\newcommand{\PLS}{\cc{PLS}}

\newtheorem{theorem}{Theorem}
\newtheorem{lemma}{Lemma}
\newtheorem{corollary}{Corollary}
\newtheorem{observation}{Observation}
\newtheorem{proposition}{Proposition}
\newtheorem{definition}{Definition}
\newtheorem{example}{Example}
\newtheorem{claim}{Claim}
\newenvironment{claimproof}[1][Proof.]{\noindent\emph{#1}\hspace{0.15cm}}{\hfill~$\blacktriangleleft$\medskip}
\newtheorem{rrule}{Reduction Rule}

\newtheorem{op}{Open Problem}
\newenvironment{openProblem}{\begin{op}}{\end{op}}

\definecolor{cbBlue}{HTML}{332288}
\definecolor{cbOrange}{HTML}{DDCC77}
\definecolor{cbGreen}{HTML}{44AA99}
\definecolor{cbRed}{HTML}{882255}

\title{Stability in Distance Preservation Games on Graphs\footnote{An extended abstract of this work has been published in the Proceedings of the 25th International Conference on Autonomous Agents and Multiagent Systems, {AAMAS}~'26~\citep{DeligkasEGKS2026}.}}

\usepackage{authblk}

\author[1]{Argyrios Deligkas}
\author[1]{Eduard Eiben}
\author[1]{Tiger-Lily Goldsmith}
\author[2]{\authorcr{}Dušan Knop}
\author[2,3]{Šimon Schierreich}
\affil[1]{Royal Holloway, University of London}
\affil[2]{Czech Technical University in Prague} 
\affil[3]{AGH University of Krakow}

\begin{document}

\maketitle

\begin{abstract}
	\begin{center}
		\textbf{\textsf{Abstract}} \smallskip
	\end{center}
    We introduce a new class of network allocation games called \emph{graphical distance preservation games}. Here, we are given a graph, called a topology, and a set of agents that need to be allocated to its vertices. Moreover, every agent has an ideal (and possibly different) distance in which to be from some of the other agents. Given an allocation of agents, each one of them suffers a cost that is the sum of the differences from the ideal distance for each agent in their subset. The goal is to decide whether there is a stable allocation of the agents, i.e., no agent would like to deviate from their location. Specifically, we consider three different stability notions: envy-freeness, swap stability, and jump stability. We perform a comprehensive study of the (parameterized) complexity of the problem in three different dimensions: the topology of the graph, the number of agents, and the structure of preferences of the agents.
\end{abstract}

\vspace{4pt}
\hrule
\vspace{4pt}
{\small\tableofcontents}
\vspace{8pt}
\hrule
\newpage

\section{Introduction}
It is the time of year to organize the annual banquet of your organization, and your task is to allocate the seats for the attendees. 
Of course, you could choose the seats in an arbitrary way, but you know that this is not the smartest idea. 
You are aware that a) particular people should ideally be seated at specific distances -- not too close, but not too far either, so they can keep an eye on each other -- and b) the allocation should be ``{\em stable}'', e.g., chatty-Bob cannot find an empty seat
close to grumpy-Joe, or complaining-Jack will not envy cool-Bill for his seat, or adventurous-Alice and poetical-Ada will not both benefit from swapping their seats, as they like their current seats more.

Situations like the above occur in many scenarios, ranging from office and house allocations to the positioning of supervisors and employees on assembly lines. 
In each one of these scenarios, there is an underlying network whose nodes correspond to available positions, and a set of agents, each with their own preferred distance for the subset of agents they are interested in. 
For example, a supervisor wants to be able to reach each of their supervisees within at most three minutes, while each supervisee wants to be next to their friends, but at a distance of fifty meters from their supervisor. 
The goal, as argued in our initial example, is to place the agents on the nodes of the graph such that the allocation is stable.

Motivated by similar problems, very recently, \citet{AzizCLNW2025} introduced an elegant model, called {\em distance preservation games}, in order to formally study them.
In their model, there was a set of agents, each of whom had an ideal distance for a subset of the remaining agents. 
The goal was to place the agents on the real unit-interval such that an objective associated with the costs of the agents was met; the cost of an agent is the sum of the differences from their ideal distance for every agent in their subset.
They have studied social cost, i.e., the sum of agents' costs, and {\em jump stable} placements, i.e., no agent would like to jump to a different position of the interval.
Nevertheless, there exist many scenarios, like some of the ones above, that cannot be captured by this model, due to its domain: we can have a more complex space than the interval, or it may not be physically possible to place an agent anywhere we want.

\subsection{Our Contribution}
Our contribution is threefold. 
First, we introduce {\em graphical distance preservation games} by augmenting the model of \citet{AzizCLNW2025}.
Now, instead of the line segment, we have a {\em topology}, i.e., a graph, where we have to allocate the agents on its vertices. 
Then, we perform a comprehensive study of the (parameterized) complexity of finding envy-free allocations on this type of game.
Finally, we initiate the study of jump stable and swap stable allocations. 

Under envy-free allocations, 
we first observe that while they are guaranteed to exist when there are only two agents (\Cref{prop:EF:two_agents}), this is no longer true for three agents or more (\Cref{thm:EF:notGuaranteed}).
In fact, 
we prove that the problem is \NPc even when the agents have symmetric preferences, i.e., for every pair of agents~$i, j$, the ideal distance of~$i$ from~$j$ matches the ideal distance of~$j$ from~$i$, and furthermore every ideal distance is equal to 1 (\Cref{thm:EF:NPh}). Hence, in order to hope for tractability we have to impose some constraints on the instances. We do this in three different dimensions.
Additionally, we observe that there exist games that admit swap stable solutions, but no envy-free ones~(\Cref{thm:swapAndNotEF}).

The first restriction we consider is on the structure of the topology,~$G$.
We begin by proving that the problem can be solved in polynomial time when the underlying graph is a clique or a star (\Cref{thm:EF:star_clique}).
Unfortunately, this is possibly the best we can hope for under this restriction -- at least for ``standard'' structural parameters -- as we prove that the problem becomes \NPc if~$G$ has vertex cover number 2 (\Cref{thm:EF:pNPh:vc}), or if~$G$ is a tree of depth 2 and agents have symmetric preferences (\Cref{thm:NPh:tree}), or if it is a path (\Cref{thm:EF:NPh:path}).
These results strongly indicate that we need a different restriction in order to get tractability.

Our second dimension of restriction is on the number of agents. 
First, we prove that the problem remains \NPc even when the size of the graph is significantly larger than the number of agents (\Cref{thm:EF:NPh:boundedAgents}) and we complement this negative result with an \XP algorithm parameterized by the number of agents (\Cref{thm:EF:XP:numAgents}). Hence, the problem is efficiently-solvable for any constant number of agents.
Whether this \XP algorithm can be improved to \FPT or there is a complementary \W[1]-hardness remains open.
On the positive side though, if we consider as parameters the number of agents and structural parameters of the topology,~$G$, we can design a number of fixed parameter algorithms. 
These structural parameters include: 
the vertex cover number of~$G$ (\Cref{thm:EF:FPT:agentsPlusVC}), which complements the negative result of \Cref{thm:EF:pNPh:vc};
the {\em neighborhood diversity} of~$G$ (\Cref{thm:EF:nd:agents:FPT}); 
the {\em modular-width} of~$G$ (\Cref{thm:EF:mw:agents:FPT}), which to the best of our knowledge is the first time that it is used within computational social choice;
the {\em diameter} of~$G$ (\Cref{thm:EF:diameter:agents:FPT}), which complements \Cref{thm:NPh:tree} that proves hardness for topologies with diameter~$4$;
the distance to clique (\Cref{cor:EF:distToClique:agents:FPT}).

The third dimension restricts the {\em preferences} of the agents. We view the preferences as a directed graph~$\Gpref$, where an arc from~$i$ to~$j$ denotes that agent~$i$ is interested in their distance from agent~$j$.
We prove that the problem is polynomial-time solvable if~$\Gpref$ is either an in-star (\Cref{thm:EF:Gpref:instar:poly}), or an out-star (\Cref{thm:EF:Gpref:outStar:poly}). 
Interestingly, the in-star case -- which at first glance seems like an ``easy'' case -- requires a technical, tedious algorithm that is based on dynamic programming.
Our last result for this dimension considers the case where only few agents (out of the many) have preferences, i.e., there are only a few vertices in~$\Gpref$ with non-zero degree.
Then, we prove that the problem is \Wh parameterized by non-zero degree vertices and we complement this by an \XP algorithm (\Cref{thm:EF:Gpref:nonZeroDegree:poly}).

Our last set of results considers jump and swap stability. 
First, we prove that stable allocations fail to exist even on paths with two agents for jump stability (\Cref{prop:jumpSwap:twoAgents:noExist}), and three agents for swap stability (\Cref{prop:swap:notExist}). 
We complement this by showing two positive results.
Under symmetric preferences, both jump and swap stable allocations do always exist and, in addition, they can be computed in polynomial time (\Cref{thm:swapJump:symmetric:polytime}). 
We note here that~\citet{AzizCLNW2025}, using similar arguments, proved in their setting that although jump stable allocations exist for the unit interval, it is \PLS-complete to find one; it is the graph structure that allows for tractability.
Finally, if~$\Gpref$ is acyclic, then again both jump and swap stable allocations always exist and they can be computed in polynomial time (\Cref{thm:swapJump:acyclic:polytime}). Again, this result follows the same lines as~\citet{AzizCLNW2025}, who proved a similar result for jump stable allocations in their setting.

\subsection{Related Work}%

In recent years, finding stable and fair allocations of agents to a topology has emerged as an important research direction in algorithmic game theory and computational social choice literature, capturing various real-life problems.

One of the best-known models in this line of research is the so-called \emph{Schelling games on graphs}~\citep{ChauhanLM2018,AgarwalEGISV2021,BullingerSV2021,BiloBLM2022,KanellopoulosKV2022,BiloBDLMS2023,FriedrichLMS2023,DeligkasEG2024,KreiselBFN2024}, which turned out to be an important tool in the study of segregation. Here, we are given a topology and a set of agents partitioned into~$k$ colors, and the goal is to allocate these agents to the topology in a stable way. Similarly to our work, the notions of stability studied are swap and jump stability. The crucial difference from our model is that the preferences are inherent and the agents are interested only in the number of agents of their own color in their direct neighborhood (resp. in the diversity~\citep{NarayananSV2025} or variety~\citep{NarayananOTV2025} of their neighborhood).

In the \emph{seat arrangement} problems~\citep{BodlaenderHJOOZ2020,Wilczynski2023,CeylanCR2023,BerriaudCW2023,AzizLSV2025}, we are given a social network over the set of agent and the so-called seating graph, and the goal is to allocate the agents to vertices of the seating graph so that the allocation satisfies some desirable stability notion; importantly, envy-freeness, swap stability, and jump stability. However, similarly to Schelling games, agents are only interested in their friends (according to the social network) allocated to their direct neighborhood in the seating graph.

In the \emph{refugee housing} problem~\citep{KnopS2023,Schierreich2023,Schierreich2024,LisowskiS2025}, we have a topology, a set of inhabitants, and a set of refugees. The inhabitants are already occupying some vertices (or houses) of the topology, and the goal is to allocate the refugees to the empty houses so that the preferences of both the inhabitants and the refugees are respected. Importantly, the preferences in refugee housing are not based on the distances; the agents are interested only in the agents (of the opposite type) allocated to their direct neighborhood, and positions of some agents (the inhabitants) are already fixed.%

In all three previously discussed models, agents were only interested in agents allocated to their direct neighborhoods. This is not the case in the \emph{topological distance games (TDGs)}~\citep{BullingerS2023,DeligkasEKS2024b}, where we are also given a topology and a set of agents that must be allocated to the topology. In the allocation, the utility of an agent depends on both the agent's inherent utilities for other agents and its distance from them on the topology. That is, in TDGs, agents are interested in the whole allocation, not only their neighbors. However, they cannot express ideal distances to other agents as in our model.

Finally, preferences based on distances between agents also appear in other contexts in collective decision making. One such example is \emph{social distance games}~\citep{BranzeiL2011,KaklamanisKP2018,BalliuFMO2019,BalliuFMO2022,GanianHKRSS2023}, a class of \emph{hedonic games}, where we are given a social network over a set of agents, and the goal is to partition these agents into groups called coalitions. The utility of agents is computed as a (weighted) distance to all other agents in the same coalition, and the stability requirements usually arise from coalition formation literature and are very different from ours.

\section{The Model}\label{sec:model}

We use~$\N$ to denote the set of all positive integers. For any ~$i\in\N$, we denote~$[i] := \{1,\ldots,i\}$ and set~$[i]_0 = [i] \cup \{0\}$.

A \emph{distance preservation game}\footnote{From now on, we drop `graphical' from the name of our model, as it is shorter and there is no possibility of confusion.} (DPG) is a quadruplet~$\Gamma=(\agents,(\relSet_a)_{a\in\agents},(\dFn_a)_{a\in\agents},G)$, where~$\agents$ is a set of \emph{agents},~$\relSet_a \subseteq \agents\setminus\{a\}$ is a (possibly empty) \emph{relationship set} containing all ``interesting agents'' from the perspective of agent~$a\in\agents$,~$\dFn_a\colon \relSet_a\to\N$ is a \emph{distance function} assigning to each~$b\in \relSet_a$ the ideal distance agent~$a$ wants to be from~$b$, and~$G$ is a simple, undirected, and connected graph called \emph{topology} with~$|V| \geq \numAgents$. Based on agents' relationship sets, we define a \emph{preference graph}~$\Gpref$, which is a directed graph over~$\agents$ with an arc~$(a,b)$,~$a,b\in\agents$, if and only if~$b\in \relSet_a$. If for every pair of agents~$a,b\in\agents$ such that~$b \in \relSet_a$ it holds that also~$a \in \relSet_b$ and~$\dFn_a(b) = \dFn_b(a)$, we say that the preferences are \emph{symmetric}. Moreover, if~$\relSet_a = \emptyset$ for some agent~$a\in\agents$, we say that~$a$ is \emph{indifferent}.

The ultimate goal is to find an \emph{allocation}~$\alloc\colon\agents\to V$, which is an injective mapping between the set of agents and the set of vertices of the topology. We use~$V^\alloc$ to denote the set of all vertices used by the allocation~$\alloc$, and we say that a vertex~$v \in V$ is \emph{empty} if~$v\not\in V^\alloc$ and \emph{occupied} otherwise. Let~$\alloc$ be an allocation and~$a,b\in\agents$ be a pair of distinct agents. A vertex~$v\in V^\alloc$ such that~$\alloc(a) = v$ is called a \emph{location} of agent~$a$. We use~$\alloc^{a\leftrightarrow b}$ to denote the allocation in which agents~$a$ and~$b$ swapped their positions, and all other agents remain allocated to the same vertices. Similarly, for an agent~$a\in\agents$ and an empty vertex~$v\in V\setminus V^\alloc$, we use~$\alloc^{a\mapsto v}$ to denote the allocation where the agent~$a$ is allocated to the vertex~$v$ and the positions of all other agents remain the same.

Naturally, not all allocations are equally good with respect to agents' relationship sets and distance functions. Therefore, we define and study several axioms capturing conditions under which, if met, the agents consider an allocation acceptable. Let~$a\in\agents$ be an agent and~$\alloc$ be an allocation. Then, the cost for agent~$a$ in the allocation~$\alloc$, denoted~$\cost(a,\alloc)$, is defined as
\[
    \cost(a,\alloc) = \sum_{b\in \relSet_a} \Big|\dFn_a(b) - \dist_G(\alloc(a),\alloc(b))\Big|\,,
\]
where~$\dist_G(u,v)$ is the length of a shortest path between the vertices~$u$ and~$v$ in the topology~$G$. Intuitively, the cost agent~$a$ suffers from agent~$b$ is equal to the difference between the distance prescribed by the distance function and the actual distance of these two agents with respect to the allocation~$\alloc$.

First, we are interested in \emph{fair} allocations, where the agents are not envious of the positions of other agents. Specifically, we adapt the well-known notion of envy-freeness~\citep{Foley1967} from the fair division literature to our setting. Formally, envy-freeness in our context is defined as follows.

\begin{definition}%
    An allocation~$\alloc$ is called \emph{envy-free} (EF) if there is no pair of agents~$a,b\in\agents$ such that \[\cost(a,\alloc) > \cost(a,\alloc^{a\leftrightarrow b})\,.\] If~$\cost(a,\alloc) > \cost(a,\alloc^{a\leftrightarrow b})$, then we say that~$a$ \emph{envies}~$b$.
\end{definition}

In other words, envy-freeness requires that no agent can benefit from swapping positions with another agent. Note that in the definition of envy-freeness, we could also compare~$\cost(a,\alloc)$ with the cost of a in~$\alloc^{a\mapsto \alloc(b)}$, assuming that we removed~$b$ from the instance. However, since the position of~$b$ affects the cost of~$a$, and therefore~$b$ can be seen as \emph{externality} for~$a$, we define envy-freeness with respect to swaps to be in line with similar models~\citep{Velez2016,AzizSSW2023,DeligkasEKS2024}.

If an agent~$a$ envies an agent~$b$, then the corresponding allocation is not considered acceptable regardless of the impact of the swap on the cost of agent~$b$. In certain scenarios, agents are \emph{altruistic} and care about the costs for other agents. Therefore, we define a weaker notion of stability based on the swapping of two agents.

\begin{definition}%
    An allocation~$\alloc$ is called \emph{swap stable} if there is no pair of distinct agents~$a,b\in\agents$ such that
    \begin{equation*}
        \cost(a,\alloc) > \cost(a,\alloc^{a\leftrightarrow b})\quad\text{and}\quad\cost(b,\alloc) > \cost(b,\alloc^{b\leftrightarrow a})\,.
    \end{equation*}
    If a pair of agents~$a,b$ can improve their cost by swapping positions, we say that~$a$ and~$b$ admit \emph{swap deviation}.
\end{definition}

Not surprisingly, there is a certain relation between envy-freeness and swap stability.

\begin{observation}
    If an allocation~$\alloc$ is envy-free, then it is also swap stable.
\end{observation}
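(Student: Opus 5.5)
The argument is by contraposition, unfolding the two definitions. We show that if $\alloc$ is not swap stable, then it is not envy-free.

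Suppose $\alloc$ is not swap stable. Then there exists a pair of distinct agents $a,b\in\agents$ admitting a swap deviation. In particular, the first conjunct of the swap-deviation condition holds:
\[
\cost(a,\alloc) > \cost(a,\alloc^{a\leftrightarrow b}).
\]
This is exactly the statement that $a$ envies $b$. Hence $\alloc$ is not envy-free, which establishes the contrapositive.

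No step here is a real obstacle; the implication holds because swap stability only rules out a conjunction of two strict improvements, while envy-freeness rules out either one alone. The only detail to check is notational. The definitions use $\alloc^{a\leftrightarrow b}$ and $\alloc^{b\leftrightarrow a}$, and these denote the same allocation. However, the argument needs only the first conjunct, so this fact is not even required.
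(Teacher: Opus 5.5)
Your proof is correct and is essentially the paper's argument. The paper phrases it as a contradiction rather than a contrapositive, observing that a swap deviation makes $a$ envy $b$ (and vice versa); as you note, one conjunct already suffices.
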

\begin{proof}
    Let~$\alloc$ be an envy-free allocation. For the sake of contradiction, let~$a,b\in\agents$ be a pair of distinct agents that admit a swap deviation, that is, we have~$\cost(a,\alloc) > \cost(a,\alloc^{a\leftrightarrow b})$ and~$\cost(b,\alloc) > \cost(b,\alloc^{b\leftrightarrow a})$. Clearly, agent~$a$ envies~$b$ and vice versa, which contradicts that~$\alloc$ is envy-free.
\end{proof}

In the opposite direction, not all swap stable allocations are envy-free. We give a formal argument for this claim later in \Cref{thm:swapAndNotEF}.

Observe that in  envy-free (or swap stable) allocations, agents are envious of the positions \emph{of other agents}. In particular, an allocation is considered fair even though there is an agent~$a\in\agents$ and an \emph{empty} vertex~$v\in V$ such that~$\cost(a,\alloc) > \cost(a,\alloc^{a \mapsto v})$. One can argue that even though such an allocation is formally fair, selfish agents would still have an incentive to move to a better location. Motivated by this, we define \emph{jump stability}.

\begin{definition}%
    An allocation~$\alloc$ is called \emph{jump stable} if there is no agent~$a\in\agents$ and an empty vertex~$v\in V$ such that \[
        \cost(a,\alloc) > \cost(a,\alloc^{a\mapsto v})\,.
    \]
    If an agent~$a$ can improve its cost by jumping to an empty vertex, then we say that~$a$ admits a \emph{jump deviation}.
\end{definition}

Naturally, we can combine envy-freeness with jump stability. However, as we show in the following proposition, then the problem reduces to a simple decision of the existence of envy-free allocations for a carefully constructed instance with no empty vertex.

\begin{proposition}
    A distance preservation game~$\Gamma = (\agents,(\relSet_a)_{a\in\agents},(\dFn_a)_{a\in\agents},G)$ admits an envy-free and jump stable allocation~$\alloc$ if and only if the distance preservation game~$\Gamma' = (A',(\relSet'_a)_{a\in\agents'},(\dFn'_a)_{a\in\agents'},G)$, where~$\agents' = \agents \cup \{ x_i \mid i \in [|V|-|A|] \}$ and~$\relSet'_a = \relSet_a$ and~$\dFn'_a = \dFn_a$ whenever~$a\in\agents$ and~$\relSet'_a = \dFn'_a = \emptyset$, otherwise, admits an envy-free allocation~$\alloc'$.
\end{proposition}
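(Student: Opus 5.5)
We prove both directions by converting between an allocation $\alloc$ of $\Gamma$ and an allocation $\alloc'$ of $\Gamma'$. Since $|\agents'| = |V|$, every allocation of $\Gamma'$ is a bijection onto $V$. The dummy agents $x_i$ exactly fill the vertices that are empty under $\alloc$. The key observation concerns an original agent $a\in\agents$ and a dummy $x_i$ with $\alloc'(x_i)=v$. Because $\relSet'_a=\relSet_a\subseteq\agents$ contains no dummies, $a$'s cost depends only on the positions of agents in $\agents$. Hence $\cost(a,\alloc'^{a\leftrightarrow x_i})=\cost(a,\alloc^{a\mapsto v})$, where $\alloc$ is the restriction of $\alloc'$ to $\agents$. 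In words, swapping with a dummy is the same as jumping to the empty vertex that dummy occupies.

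($\Rightarrow$) Let $\alloc$ be envy-free and jump stable in $\Gamma$. Extend it to $\alloc'$ by placing the dummies $x_1,\dots,x_{|V|-|A|}$ bijectively on $V\setminus V^\alloc$ in any order. We check every ordered pair $(a,b)$ for envy.
\begin{itemize}
\item A dummy $a$ is indifferent: its cost is an empty sum, equal to $0$ in every allocation, so it envies no one.
\item If $a,b\in\agents$, then $a$'s costs under $\alloc'$ and $\alloc'^{a\leftrightarrow b}$ equal those under $\alloc$ and $\alloc^{a\leftrightarrow b}$. So $a$ does not envy $b$ by envy-freeness of $\alloc$.
\item If $a\in\agents$ and $b=x_i$ sits on $v$, the key observation turns envy into a jump deviation of $a$ to the empty vertex $v$ in $\alloc$. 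This contradicts jump stability.
\end{itemize}

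($\Leftarrow$) Let $\alloc'$ be envy-free in $\Gamma'$ and let $\alloc$ be its restriction to $\agents$. This is an allocation of $\Gamma$, and its empty vertices are exactly the dummies' locations.
\begin{itemize}
\item Envy between $a,b\in\agents$ in $\alloc$ would be envy in $\alloc'$, by the same cost identity as above.
\item A jump deviation of $a$ to an empty vertex $v=\alloc'(x_i)$ would, by the key observation, be envy of $a$ toward $x_i$ in $\alloc'$.
\end{itemize}
Both contradict envy-freeness of $\alloc'$.

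No step is a real obstacle. The only point needing care is that the relationship sets of original agents are unchanged, so dummies contribute nothing to any original agent's cost. We should also note that the $x_i$ are distinct from $\agents$ and that when $|V|=|A|$ the instance is unchanged; jump stability then holds trivially since there are no empty vertices.
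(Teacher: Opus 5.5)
Your proof is correct and takes essentially the same route as the paper: place the dummies on the empty vertices for one direction and restrict to $\agents$ for the other. Both rely on the fact that swapping with a dummy equals jumping to its vertex, since no original agent's relationship set contains a dummy. Your version also spells out the cost identities that the paper leaves implicit, for instance its remark that removing agents cannot introduce new envy.
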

\begin{proof}
    Let~$\alloc$ be an envy-free and jump stable allocation for~$\Gamma$ and, for the sake of contradiction, assume that~$\alloc'$, which is the same as~$\alloc$ for agents of~$\agents$ and allocates the added agents~$x_i$ arbitrarily to empty vertices of~$G$, is not an envy-free allocation for~$\Gamma'$. Clearly, no added agent~$x_i$ is envious, as his relationship set is empty, and therefore his cost is always zero. Hence, the envious agent in~$\Gamma$ is some~$a\in\agents$. However, since~$\alloc$ is  envy-free for~$\Gamma$ and the topology is the same in both~$\Gamma$ and~$\Gamma'$,~$a$ cannot envy some agent~$b\in\agents\setminus\{b\}$. Therefore,~$a$ must be envious of some added agent~$x_i$,~$i\in[|V|-\numAgents]$; that is, it holds that~$\cost(a,\alloc') > \cost(a,\alloc'^{a\leftrightarrow x_i})$. However, since~$\alloc'(x_i)$ is an empty vertex according to~$\alloc$, it also holds that~$\cost(a,\alloc) > \cost(a,\alloc^{a\mapsto \alloc'(x_i)})$, which contradicts that~$\alloc$ is jump stable.

    In the opposite direction, let~$\alloc'$ be an envy-free allocation for~$\Gamma$ and let~$\alloc$ be an allocation~$\alloc'$ with the domain narrowed to only~$\agents$. We claim that~$\alloc$ is an envy-free and jump stable allocation for~$\Gamma$. Envy-freeness directly follows from the fact that if we remove an agent from an instance, we cannot introduce new envy. It remains to show that~$\alloc$ is jump stable. For the sake of contradiction, let there be~$a\in\agents$ and an empty vertex~$v\in V\setminus V^\alloc$ so that~$\cost(a,\alloc) > \cost(a,\alloc^{a\mapsto v})$. Then, however, in~$\Gamma'$, vertex~$v$ is occupied by some added agent~$x_i$, meaning that~$\cost(a,\alloc') > \cost(a,\alloc'^{a \leftrightarrow x_i})$, which contradicts that~$\alloc'$ is  envy-free for~$\Gamma'$.
\end{proof}

To illustrate our model and the studied stability notions, we conclude with a running example.

\begin{example}
    Let us have an instance with three agents~$a$,~$b$, and~$c$, and let the topology~$G$, the preference graph~$\Gpref$, and the allocation~$\alloc$ be as depicted on \Cref{fig:example}.
    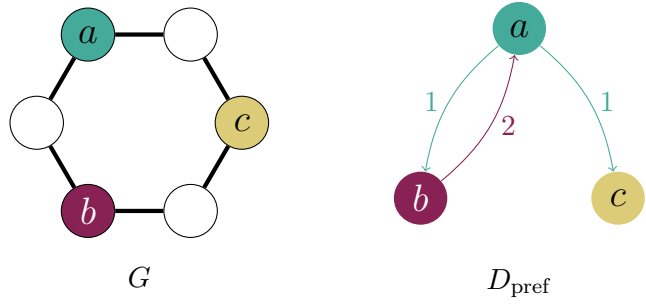
\begin{figure}[bt!]
        \centering
        \begin{tikzpicture}[scale=1.5]
            \def\radius{1cm}
            \coordinate (A) at (90:\radius);
            \coordinate (B) at (210:\radius);
            \coordinate (C) at (330:\radius);

            \node[circle,fill=cbGreen,inner sep=2pt,minimum width=20pt] (a) at (A) {\Large$a$};
            \node[circle,fill=cbRed,text=white,inner sep=2pt,minimum width=20pt] (b) at (B) {\Large$b$};
            \node[circle,fill=cbOrange,inner sep=2pt,minimum width=20pt] (c) at (C) {\Large$c$};
            
            \draw[->,cbGreen,bend right=20] (a) to node [midway,left] {$1$} (b);
            \draw[->,cbGreen,bend left=20] (a) to node [midway,right] {$1$} (c);
            \draw[->,cbRed,bend right=20] (b) to node [midway,right] {$2$} (a);

            \node at (0,-1.25) {$\Gpref$};
            
            \def\radius{0.9cm}

            \node[draw,circle,inner sep=2.5pt,xshift=-5cm,yshift=0.25cm,fill=cbOrange,minimum width=20pt] at ({1*60 - 60}:\radius) (N1) {\Large$c$};
            \node[draw,circle,inner sep=2.5pt,xshift=-5cm,yshift=0.25cm,minimum width=20pt] at ({2*60 - 60}:\radius) (N2) {};
            \node[draw,circle,inner sep=2.5pt,xshift=-5cm,yshift=0.25cm,fill=cbGreen,minimum width=20pt] at ({3*60 - 60}:\radius) (N3) {\Large$a$};
            \node[draw,circle,inner sep=2.5pt,xshift=-5cm,yshift=0.25cm,minimum width=20pt] at ({4*60 - 60}:\radius) (N4) {};
            \node[draw,circle,inner sep=2.5pt,xshift=-5cm,yshift=0.25cm,fill=cbRed,text=white,minimum width=20pt] at ({5*60 - 60}:\radius) (N5) {\Large$b$};
            \node[draw,circle,inner sep=2.5pt,xshift=-5cm,yshift=0.25cm,minimum width=20pt] at ({6*60 - 60}:\radius) (N6) {};
            
            \draw[ultra thick] (N1) -- (N2) -- (N3) -- (N4) -- (N5) -- (N6) -- (N1);
            \node[xshift=-3.5cm] at (-1,-1.2) {$G$};
        \end{tikzpicture}
        \caption{An example of a topology with allocated agents (on the left) and a preference graph (on the right). Observe that~$\relSet_a = \{b,c\}$,~$\relSet_b = \{a\}$, and~$\relSet_c = \emptyset$, that is, agent~$c$ is indifferent. The preferences are clearly not symmetric.}
        \label{fig:example}
    \end{figure}
    The cost of agent~$b$ is zero, as~$b$ requires to be at distance two from~$a$ and this is satisfied. The cost for agent~$c$ is always zero, since he is indifferent. The cost of agent~$a$, on the other hand, is two, as she requires to be at distance one from both remaining agents, but her distance from both of them is two.

    Regarding stability of~$\alloc$, clearly no agent~$b$ or~$c$ can participate in a swap, as they are already on their best position. Hence,~$\alloc$ is swap stable. Similarly, agent~$a$ is not envious of the position of any of the other two agents, as if she swaps with one of them, the distances are preserved. Finally, allocation~$\alloc$ is not jump stable, as agent~$a$ can jump to the common neighbor~$v$ of~$b$ and~$c$ and decrease her cost to zero. In response, agent~$b$ will jump so she is distance~$2$ from agent~$a$. It is easy to see that~$\alloc'$ is jump stable.
\end{example}

\section{Envy-Freeness}

As we showed in the previous section, envy-freeness is closely related to all other stability notions assumed in this work. Thus, we begin our investigation with this notation.

First, we study the conditions under which envy-free allocations are guaranteed to exist. It is easy to see that if there are two agents, any allocation is envy-free, as if the agents swap their positions, their distance remains the same.

\begin{proposition}
\label{prop:EF:two_agents}
    If~$\numAgents = 2$, an envy-free allocation is guaranteed to exist and can be found in constant time.
\end{proposition}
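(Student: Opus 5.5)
The plan is to show that \emph{every} allocation is envy-free when~$\numAgents = 2$, so the algorithm can simply output an arbitrary one. Let~$\agents = \{a,b\}$. Since~$G$ is connected and~$|V| \geq \numAgents = 2$, it has at least two distinct vertices; we pick any two, say~$u$ and~$v$, and set~$\alloc(a) = u$ and~$\alloc(b) = v$. This is an injective mapping, hence a valid allocation. Producing it only requires reading two vertex identifiers, so it takes constant time in the usual model where the vertex list of~$G$ is given.

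It remains to verify envy-freeness, and the only pairs to check are~$(a,b)$ and~$(b,a)$. Consider~$a$ and the swapped allocation~$\alloc^{a\leftrightarrow b}$, in which~$a$ sits on~$v$ and~$b$ sits on~$u$. The relationship set~$\relSet_a$ is contained in~$\agents\setminus\{a\} = \{b\}$. If~$\relSet_a = \emptyset$, then~$a$ has cost zero in every allocation and cannot envy anyone. If~$\relSet_a = \{b\}$, then
\[
\cost(a,\alloc^{a\leftrightarrow b}) = \bigl|\dFn_a(b) - \dist_G(v,u)\bigr| = \bigl|\dFn_a(b) - \dist_G(u,v)\bigr| = \cost(a,\alloc)\,,
\]
using the symmetry of~$\dist_G$ in an undirected graph. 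So the strict inequality required for envy fails. The argument for~$b$ is symmetric, so neither agent envies the other.

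The only points needing care are that the relationship set of an agent can contain at most the other agent, and that distances in an undirected graph are symmetric. There is no real obstacle here; the statement is a direct consequence of the definitions.
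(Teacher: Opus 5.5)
Your proof is correct and follows the same approach as the paper: an arbitrary allocation is envy-free, because swapping the two agents leaves their mutual distance unchanged, as $\dist_G$ is symmetric in an undirected graph. Your extra remarks, that $\relSet_a \subseteq \{b\}$ and that the allocation can be built in constant time, only make explicit what the paper leaves implicit.
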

\begin{proof}
    Let~$G$ be a topology and~$\alloc$ be an arbitrary allocation of our agents. Since the graph is undirected, we have that~$\dist_G(\alloc(a_1),\alloc(a_2))$ is the same as~$\dist_G(\alloc(a_2),\alloc(a_1))$. Therefore, if the agents swap their positions, their utility remains the same. Consequently,~$\alloc$ is envy-free.
\end{proof}

However, as we show in our next result, already with three agents, envy-free allocation may not exist.

\begin{proposition}\label{thm:EF:notGuaranteed}
    For every~$\numAgents \geq 3$, an envy-free allocation is not guaranteed to exist.
\end{proposition}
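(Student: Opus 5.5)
The plan is to give, for every $n \geq 3$, one explicit game with $\numAgents = n$ and then check by a short case analysis that none of its allocations is envy-free. The construction is as follows.

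Let the topology $G$ be the star $K_{1,n-1}$, with center $s$ and leaves $\ell_1,\dots,\ell_{n-1}$, so that $|V| = n = \numAgents$ and every allocation occupies every vertex. For $n=3$ this star is just the path on three vertices. There are three \emph{active} agents $a,b,c$ with symmetric preferences: $\relSet_a=\{b,c\}$, $\relSet_b=\{a,c\}$, $\relSet_c=\{a,b\}$, and every ideal distance equal to $1$. The remaining $n-3$ agents are indifferent. Two facts about distances are used throughout: two distinct leaves are at distance $2$, and the center is at distance $1$ from every leaf. Hence an active agent's cost is just the number of other active agents at distance $2$ from it.

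Fix an arbitrary allocation $\alloc$. Because all vertices are occupied, some agent $z$ sits at $s$. There are two cases.

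\begin{enumerate}
\item \emph{$z$ is not an active agent.} Then $a$, $b$, $c$ all sit on leaves, so $\cost(a,\alloc)=2$. In $\alloc^{a\leftrightarrow z}$, agent $a$ moves to $s$ while $b$ and $c$ stay on leaves, so $a$ is at distance $1$ from both and its cost is $0$. Thus $a$ envies $z$.
\item \emph{$z$ is active,} say $z=c$ after renaming. Then $a$ and $b$ are on distinct leaves, so $\cost(a,\alloc)=1$: distance $1$ to $c$ and distance $2$ to $b$. In $\alloc^{a\leftrightarrow c}$, agent $a$ is at $s$ and both $b$ and $c$ are on leaves, so $\cost(a,\alloc^{a\leftrightarrow c})=0$. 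Thus $a$ envies $c$.
\end{enumerate}

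In both cases $\alloc$ is not envy-free. Since $\alloc$ was arbitrary, the game has no envy-free allocation, which proves the statement.

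The only point needing care is the case split: because $|V|=\numAgents$, the center is always occupied, so a beneficial swap partner at $s$ always exists. Note that this also shows non-existence already under symmetric preferences with all ideal distances equal to $1$. That is consistent with \Cref{thm:EF:star_clique}, which concerns deciding existence on stars, not guaranteeing it.
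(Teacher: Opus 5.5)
Your proof is correct, and it takes a genuinely different route from the paper's. The paper works on cycles, with a directed-cycle preference graph in which each $a_i$ wants its successor at distance two. For $\numAgents=3$ it uses $C_4$ (one empty vertex) and a case analysis on whether $a_1,a_2$ are adjacent or antipodal. For $\numAgents>3$ it takes a cycle on $\numAgents$ or $\numAgents+1$ vertices and only asserts that not all agents can be satisfied and that some unsatisfied agent then envies the occupant of a vertex at the right distance; the parity argument and the existence of such an occupied vertex are left implicit. You instead isolate a single obstruction: on a star, whoever sits at the center has cost $0$, while every active agent on a leaf has positive cost. Using $|V|=\numAgents$ to force the center to be occupied then gives one fully explicit argument that is uniform in $\numAgents$. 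Your instances also prove more. Non-existence already holds for symmetric preferences with all ideal distances equal to $1$, and on stars, even though existence is always guaranteed on cliques (cf.\ \Cref{thm:EF:star_clique}). Moreover, by \Cref{thm:swapJump:symmetric:polytime} your instances always admit a swap stable allocation, so they witness \Cref{thm:swapAndNotEF} for every $\numAgents\ge 3$ (the paper reuses its $C_4$ instance there). The one feature of the paper's construction that you lose is that it has no indifferent agents. You can recover that as well: letting all $\numAgents$ agents pairwise want distance $1$ on the same star works verbatim, since every leaf agent then has cost $\numAgents-2\ge 1$ and envies the central agent.
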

\begin{proof}
    Let~$\numAgents = 3$ and let the topology~$G$ be a cycle with four vertices. Each agent~$a_i$,~$i\in[\numAgents]$, has~$M_i = \{a_{(i\bmod \numAgents)+1}\}$ and requires this agent at distance two. For the sake of contradiction, assume that there is an envy-free allocation~$\alloc$. First, let~$\dist_G(\alloc(a_1),\alloc(a_2)) = 1$. If~$a_3$ is at distance two from~$a_2$, then~$a_1$ is envious of~$a_3$'s position. Hence,~$a_3$ is at a distance exactly one from~$a_2$. However, in this case,~$a_3$ is simultaneously at distance two from~$a_1$, which means that~$a_2$ is envious of~$a_1$'s position. Thus, it must be the case that~$\dist(\alloc(a_1),\alloc(a_2)) = 2$. Regardless of the position of~$a_3$, she is always at distance one from~$a_1$ and envies agent~$a_2$. That is, no envy-free allocation exists.

    For~$\numAgents > 3$, the topology is just a cycle with~$\numAgents$ or~$\numAgents+1$ vertices, depending on whether~$\numAgents$ is even or odd, and each agent~$a_i$,~$i\in[\numAgents]$, requires the agent~$a_{(i\bmod \numAgents)+1}$ to be at distance two. Clearly, this is not satisfiable for all agents, and as almost all vertices are occupied, there is always at least one envious agent.
\end{proof}

With the previous no-instance in hand, we finally give a formal proof that there are games admitting swap stable but not envy-free allocations, as promised in \Cref{sec:model}.

\begin{proposition}\label{thm:swapAndNotEF}
    There is a distance preservation game~$\Gamma$ that admits a swap stable allocation but no envy-free allocation.
\end{proposition}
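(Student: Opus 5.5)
The natural witness is the three-agent instance from the proof of \Cref{thm:EF:notGuaranteed}. The topology~$G$ is the cycle on four vertices, which I label~$v_0,v_1,v_2,v_3$ in cyclic order. The agents are~$a_1,a_2,a_3$, and each~$a_i$ has~$\relSet_{a_i}=\{a_{(i \bmod 3)+1}\}$ with ideal distance two. That proof already shows this game admits no envy-free allocation, so only the first half of the statement needs an argument: exhibiting a swap stable allocation.

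I would take~$\alloc(a_1)=v_0$,~$\alloc(a_2)=v_2$,~$\alloc(a_3)=v_1$ and compute the costs.
\begin{itemize}
\item Agent~$a_1$ wants~$a_2$ at distance two, and~$\dist_G(v_0,v_2)=2$, so~$\cost(a_1,\alloc)=0$.
\item Agent~$a_2$ wants~$a_3$ at distance two, but~$\dist_G(v_2,v_1)=1$, so~$\cost(a_2,\alloc)=1$.
\item Agent~$a_3$ wants~$a_1$ at distance two, but~$\dist_G(v_1,v_0)=1$, so~$\cost(a_3,\alloc)=1$.
\end{itemize}

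Next I would rule out every swap deviation. A swap deviation requires both participants to strictly decrease their cost. Agent~$a_1$ has cost zero, which cannot decrease since costs are nonnegative, so no pair containing~$a_1$ admits a swap deviation. The only remaining pair is~$\{a_2,a_3\}$.

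For this pair, the two agents exchange positions with each other. Hence their mutual distance is unchanged, as in \Cref{prop:EF:two_agents}. The cost of~$a_2$ depends only on its distance to~$a_3$, so it stays at~$1$ and~$a_2$ does not strictly improve. Therefore~$a_2$ and~$a_3$ do not admit a swap deviation, and~$\alloc$ is swap stable.

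There is no real obstacle here. The only care needed is choosing an allocation in which every agent who could gain from a swap is either already at cost zero or would be swapping with the very agent it cares about. The chosen allocation achieves exactly this.
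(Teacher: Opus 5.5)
Your proof is correct and follows essentially the same approach as the paper. It uses the same three-agent instance on the four-cycle from \Cref{thm:EF:notGuaranteed} and the same allocation up to relabeling: $a_3$ sits between $a_1$ and $a_2$, and $a_1$ has cost zero. It also rules out the only candidate pair $\{a_2,a_3\}$ the same way, since $a_2$'s distance to $a_3$, the only agent it cares about, is unchanged by their swap.
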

\begin{proof}
    Recall the instance from \Cref{thm:EF:notGuaranteed} with~$\numAgents = 3$. It was shown that such a game does not admit envy-free allocation. However, assume an allocation with~$\alloc(a_1) = v_1$,~$\alloc(a_3) = v_2$, and~$\alloc(a_2) = v_3$ (we suppose that~$v_1,\ldots,v_5$ is a DFS order if we start with the vertex~$v_1$). Then, the cost for the agent~$a_1$ is zero, and the costs for both~$a_2$ and~$a_3$ are one. Therefore,~$a_1$ participates in no swap deviation. Although the cost for~$a_3$ in~$\alloc^{a_3\leftrightarrow a_2}$ is zero, that is, decreased, the agent~$a_2$ is in~$\alloc^{a_3\leftrightarrow a_2}$ still at distance one from~$a_3$, meaning that the swap is not beneficial to him. That is, the allocation~$\alloc$ is swap stable.
\end{proof}

Since there are instances with no envy-free allocation, it is natural to ask, given a distance preservation game~$\Gamma$, how hard it is to decide whether~$\Gamma$ admits an  envy-free allocation. In our first hardness result, we show that this problem is \NPc, even if the preferences of our agents are symmetric, i.e., if an agent~$a$ requires agent~$b$ to be at distance~$\delta$, then also agent~$b$ requires agent~$a$ to be at the same distance.

\begin{theorem}\label{thm:EF:NPh}
    It is \NPc to decide whether a distance preservation game~$\Gamma$ admits an envy-free allocation, even if the preferences are symmetric and the domain of each~$\dFn_a$,~$a\in\agents$, is~$\{1\}$.
\end{theorem}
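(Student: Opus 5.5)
We reduce from \probName{Clique}, restricted to a class where local optima of the placement problem cannot imitate cliques; a natural choice is \probName{Clique} on regular graphs, which is \NPc. With symmetric preferences and every ideal distance equal to~$1$, the preference graph $\Gpref$ is effectively an undirected graph $H$ on the agents. The cost of an agent $a$ is then $\sum_{b\in \relSet_a}(\dist_G(\alloc(a),\alloc(b))-1)$: the total excess distance to its $H$-neighbours. In particular, an allocation embedding $H$ as a subgraph of $G$ has all costs zero and is trivially envy-free. The whole difficulty is the converse: envy-free allocations with positive costs must be ruled out.

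\textbf{Construction.} Given $(G_0,k)$, build $\Gamma$ with the following parts.
\begin{enumerate}
\item Take $k$ ``clique agents'' that pairwise prefer distance~$1$, so their part of $H$ is $K_k$.
\item Let the topology $G$ consist of $G_0$ together with a padding gadget, e.g.\ a universal-like hub that makes all pairwise distances at most~$2$. Then the cost of a clique agent is exactly the number of other clique agents not adjacent to it.
\item Fill every remaining vertex of $G$ with indifferent agents, so that $|V|=\numAgents$.
\end{enumerate}
Because every vertex is occupied, swapping with an indifferent agent acts as a jump (cf.\ the proposition reducing envy-freeness plus jump stability to envy-freeness on a full instance). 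Hence envy-freeness forces each clique agent to sit on a vertex maximising its number of neighbours among the other clique agents, over all vertices of $G$. Indifferent agents never envy anyone.

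\textbf{Correctness, forward direction.} A $k$-clique in $G_0$ yields a zero-cost allocation, which is envy-free.

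\textbf{Correctness, backward direction.} Assume an envy-free allocation. We must show the clique agents occupy a $k$-clique of $G_0$. We proceed in three steps.
\begin{enumerate}
\item Show they cannot sit in the gadget part. The hub vertices should be designed to be adjacent to few $G_0$-vertices relative to what is available, or to be attractive only in a way that is itself checkable; alternatively, replace the single hub by long pendant paths so that distance is bounded only inside $G_0$.
\item Show that if some clique agent $a$ has a non-neighbour $b$ among the clique agents, then some empty-in-$H$ vertex (occupied by an indifferent agent) is adjacent to strictly more clique agents than $\alloc(a)$, so $a$ envies the indifferent agent sitting there.
\item Derive the contradiction in step~2 from a counting argument. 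Since the indifferent agents occupy all vertices outside the clique agents' set $S$, each clique agent's degree into $S$ must be at least the maximum over all $v\notin S$ of $|N(v)\cap S|$. To make this argument work, I would augment each vertex of $G_0$ with private gadgets so that a vertex of $G_0$ adjacent to $t$ members of $S$ exists whenever $S$ is not a clique and some member has fewer than $t$ neighbours in $S$. For instance, add to every vertex a twin copy (a false twin adjacent to the same neighbourhood). Then each $a\in S$ sees a twin of any $b\in S$ that is not adjacent to $a$: that twin is adjacent to all of $N(b)\cap S$ plus possibly $b$ itself, so $a$ could gain by moving there.
\end{enumerate}

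\textbf{Main obstacle.} Tuning this twin/padding gadget so that every non-clique configuration creates envy, while keeping distances controlled and cliques intact (twins must not create spurious $k$-cliques), is the step I expect to be hardest. It will likely need the regularity of $G_0$ or additional pendant structure attached to $G_0$. Membership in \NP is immediate, since envy-freeness of a given allocation can be checked by evaluating all $\Oh{\numAgents^2}$ swaps.
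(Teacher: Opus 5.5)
\textbf{There is a genuine gap: the backward (soundness) direction is not established.} You leave your ``main obstacle'' open, and the mechanism you sketch for it does not work.

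A false twin $b'$ of $b$ has the same open neighbourhood and is \emph{not} adjacent to $b$. So a clique agent $a$ moving onto $b'$ sees exactly $|N(b)\cap S|$ other clique agents, not ``plus possibly $b$ itself'', and nothing forces this to exceed $|N(a)\cap S|$. For example, if $S$ induces $K_k$ minus a perfect matching, every member has $k-2$ neighbours in $S$, and no twin move lowers anyone's cost. True twins would supply the missing $+1$, but they double the clique number. Making the hub non-universal (your step~1) destroys the distance-$\le 2$ property on which your formula ``cost $=$ number of non-adjacent clique agents'' rests. As written, the reduction is not shown to be correct.

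\textbf{The missing idea is that the universal hub already closes the argument; step~1 works against you.} Let $u$ be adjacent to every vertex of $G_0$, and let all vertices be occupied. In an envy-free $\alloc$, some agent $x$ sits on $u$. Any clique agent $a\neq x$ with $\cost(a,\alloc)>0$ envies $x$: in $\alloc^{a\leftrightarrow x}$, agent $a$ is on $u$ and hence at distance~$1$ from every other agent, wherever $x$ lands. So every clique agent has cost $0$, i.e., the clique agents occupy a clique of $G$.

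One clique agent may legitimately sit on $u$, so do not try to exclude the hub. Instead take $k+1$ clique agents: $G$ has a $(k+1)$-clique iff $G_0$ has a $k$-clique. No regularity, twins, or counting are needed, and the instance is symmetric with all prescribed distances equal to~$1$. This is exactly the reduction the paper uses for \Cref{thm:EF:Gpref:nonZeroDegree:poly}.

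For the present theorem the paper instead reduces from \probName{$3$-Partition}. It uses $N$ cliques of size $B$ plus an apex, and a guard agent who wants everyone at distance~$1$ and is therefore forced onto the apex by envy. Item agents of the same element are then forced into a common clique by a majority/envy argument. That construction has no indifferent agents. The paper relies on this later, when it remarks that parameterizing by the number of indifferent agents is hopeless; the Clique route needs many indifferent agents.
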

\begin{proof}
    We reduce from the \probName{$3$-Partition} problem. In this problem, we are given a multi-set~$S=\{s_1,\ldots,s_{3N}\}$ of integers such that~$\sum_{i\in[3N]} s_i = N\cdot B$, and the goal is to decide whether a set~$\mathcal{X}$ of~$N$ pairwise disjoint~$3$-sized subsets of~$S$ exists so that the elements of every~$X\in\mathcal{X}$ sum up to exactly~$B$. The problem is known to be \NPc even if~$B/4 < s < B/2$ for every~$s\in S$~\citep{GareyJ1975}.

    Given an instance~$S$ of \probName{$3$-Partition}, we construct an equivalent distance preservation game~$\Gamma$ as follows. The topology~$G$ consists of a disjoint union of~$N$ cliques~$C_1,\ldots,C_N$, each of size exactly~$B$, and one apex vertex~$v_g$ connected to all other vertices. The set of agents contains exactly~$s_i$ \emph{item agents}~$a_i^1,\ldots,a_i^{s_i}$ for every~$s_i\in S$, and one \emph{guard agent}~$g$. For every item agent~$a_i^j$, we have~$M_{a_i^j} = \{ a_i^1,\ldots,a_i^{s_i} \}\setminus \{a_i^j\}$. For the guard agent, the set~$M_{g}$ contains all the other agents. Each agent requires to be at distance one from all other agents he cares about.

    For correctness, assume first that~$S$ is a \Yes-instance and~$\mathcal{X} = (X_1,\ldots,X_N)$ is a solution partition. We construct the following allocation~$\alloc$ and claim that it is envy-free. First, we set~$\alloc(g) = v_g$. Then, let~$X_j = \{s_{i_1},s_{i_2},s_{i_3}\}$ be a set of~$\mathcal{X}$. For every item agent~$a_i^j$, where~$i\in{i_1,i_2,i_3}$ and~$j\in[s_i]$, we arbitrarily allocate this agent to an empty vertex of the clique~$C_j$. Note that since~$\mathcal{X}$ is a solution, it must hold that~$s_{i_1}+s_{i_2} + s_{i_3} = B$. Since~$V(C_j) = B$, such an empty vertex always exists. To show that~$\alloc$ is envy-free, we just observe that the cost of all agents is zero: The guard is at distance one from all other agents and the item agents are in the same clique as all other agents corresponding to the same item.

    In the opposite direction, let~$\alloc$ be an envy-free allocation for~$\Gamma$. First, observe that if the guard agent is not allocated to~$v_g$, then~$g$ is envious towards~$\alloc^{-1}(v_g)$, and at least one agent is allocated to~$v_g$ as the number of agents is equal to the number of vertices of~$G$. Next, we show that all item agents corresponding to the same~$s_i\in S$ are allocated to the same clique~$C_j$.

    \begin{claim}
        For every~$i\in[3N]$ there exists one~$\ell\in[N]$ such that~$\alloc(a_i^j) \in V(C_\ell)$ for every~$j\in[s_i]$.
    \end{claim}
    \begin{claimproof}
        For the sake of contradiction, assume that~$\alloc$ is envy-free and the statement is not true. Without loss of generality, let~$C_1$ be a clique with the largest number of agents of~$\{a_i^1,\ldots,a_i^{s_i}\}$ allocated to it. Let~$a_i^j$ be an agent so that~$\alloc(a_i^j)\not\in V(C_1)$. Since~$\alloc(a_i^j)\not= v_g$ by the previous arguments, it must be the case that~$\alloc(a_i^j)\in V(C_\ell)$ for some~$\ell\in[2,N]$. The cost of agent~$a_i^j$ in~$\alloc$ is~$s_i - |\{ a_i^{j'} \mid j'\in[s_i]\setminus\{j\} \land \alloc(a_i^{j'}) \in V(C_\ell) \}|$. Let~$b$ be an agent with~$\alloc(b)\in V(C_1)$ and~$b\not\in M_{a_i^j}$ (such an agent always exists since each~$s_i$ is smaller than~$B/2$). In an allocation~$\alloc^{a_i^j\leftrightarrow b}$, the cost for agent~$a_i^j$ is~$s_i - |\{a_i^{j'} \mid j'\in[s_i]\setminus\{j\} \land \alloc(a_i^{j'}) \in V(C_1) \}|$, which is by at least one smaller than~$s_i - |\{ a_i^{j'} \mid j'\in[s_i]\setminus\{j\} \land \alloc(a_i^{j'}) \in V(C_\ell) \}|$ since~$C_1$ contains the largest number of agents corresponding to~$s_i$. That is, agent~$a_i^j$ has envy towards agent~$b$, which contradicts that~$\alloc$ is envy-free. Hence, all item agents corresponding to the same item~$s_i$ are allocated to the same clique.
    \end{claimproof}

    Now, from the fact that~$g$ is allocated to~$v_g$, that all item agents corresponding to the same~$s_i$ are part of the same clique, that each clique is of size exactly~$B$, and that for every~$s_i$ we have more than~$B/4$ item agents, we directly obtain that if we take~$\mathcal{X} = (X_1,\ldots,X_N)$ so that~$X_j = \{ s_i \mid \alloc(a_i^1) \in V(C_j) \}$, we obtain a solution for~$S$. This finishes the proof.
\end{proof}

\subsection{Restricted Topology}\label{sec:EF:topology}

The first restriction we study is the restriction of the underlying topology to which we allocate. It is reasonable to assume that if the topology is well-structured, then the problem should become easier. And indeed, in our first result, we support this intuition with two polynomial-time solvable cases. Specifically, if the topology is a star graph or a clique, efficient algorithms are possible.

\begin{theorem}
\label{thm:EF:star_clique}
    If the topology~$G$ is a star graph or a clique, the existence of an envy-free allocation can be decided in polynomial time.
\end{theorem}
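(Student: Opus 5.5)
The plan is to exploit the high symmetry of both topologies: in each case, swapping two agents changes very few pairwise distances. This reduces envy-freeness to a handful of conditions that can be checked directly.

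\emph{Clique.} If~$G$ is a clique, then~$\dist_G(\alloc(a),\alloc(b)) = 1$ for every pair of distinct agents and every allocation~$\alloc$. Hence~$\cost(a,\alloc) = \sum_{b\in\relSet_a}|\dFn_a(b)-1|$ does not depend on~$\alloc$ at all. In particular~$\cost(a,\alloc)=\cost(a,\alloc^{a\leftrightarrow b})$ for all~$a,b$, so every allocation is envy-free. The algorithm simply outputs \Yes together with an arbitrary injective allocation.

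\emph{Star.} Let~$G$ have center~$c$ and leaves~$L$. Distances are~$1$ between~$c$ and a leaf and~$2$ between two distinct leaves. The first step is to show that two leaf agents never envy each other. If~$a$ and~$b$ both sit on leaves and swap, then~$a$ is still on a leaf, so its distance to every other leaf agent is still~$2$, its distance to~$b$ is still~$2$, and its distance to the center agent (if any) is still~$1$. Thus~$a$'s cost is unchanged. Consequently, envy can only occur between the center agent~$z=\alloc^{-1}(c)$ and some leaf agent, and whether it occurs depends only on which agent occupies~$c$, not on how the leaves are assigned. There are two kinds of envy to check for a candidate center agent~$z$:
\begin{itemize}
\item A leaf agent~$a$ envies~$z$ iff
\[
\sum_{b\in\relSet_a}|\dFn_a(b)-1| < \sum_{b\in\relSet_a\setminus\{z\}}|\dFn_a(b)-2| + [z\in\relSet_a]\,|\dFn_a(z)-1|.
\]
Here the left-hand side is~$a$'s cost after moving to~$c$, where all its distances become~$1$.
\item The center agent~$z$ envies a leaf agent~$a$ iff
\[
\sum_{b\in\relSet_z}|\dFn_z(b)-1| > [a\in\relSet_z]\,|\dFn_z(a)-1| + \sum_{b\in\relSet_z\setminus\{a\}}|\dFn_z(b)-2|.
\]
\end{itemize}

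The algorithm therefore guesses the center occupant~$z\in\agents$, or leaves~$c$ empty if~$|V|>\numAgents$; this gives at most~$\numAgents+1$ options. For each option it places the remaining agents on leaves arbitrarily and checks the conditions above for all~$\Oh{\numAgents}$ partners~$a$. Each check takes polynomial time, so the whole procedure is polynomial.

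If~$c$ is empty, all agents are on leaves and, by the argument above, the allocation is automatically envy-free. So the answer is always \Yes when~$|V|>\numAgents$, and the real work is in the case~$|V|=\numAgents$. The main point requiring care is the claim that the leaf assignment is irrelevant. This needs checking both for the swapping agent's own distances and for its distance to the swap partner, and the two displayed inequalities are exactly where this bookkeeping is done.
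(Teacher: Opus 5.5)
Your proof is correct and follows the same route as the paper. For a clique every allocation is envy-free; for a star you enumerate the $\numAgents+1$ choices for the occupant of the center, place the remaining agents arbitrarily on the interchangeable leaves, and check envy-freeness. Your explicit envy inequalities are a correct refinement that the paper leaves implicit. So is your remark that an empty center makes every allocation envy-free, which means the answer is trivially \Yes whenever $|V|>\numAgents$.
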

\begin{proof}
    If the topology is a clique, then all distances are the same. Hence, an arbitrary allocation is envy-free, as agents cannot change their distance to other agents by swapping positions.

    When the topology is a star, we have~$\numAgents + 1$ possibilities for~$\alloc^{-1}(c)$, where~$c$ is the center of~$G$: either some agent~$a\in\agents$ is allocated to~$c$ or it is empty. As the leaves are symmetric with respect to the distances to other vertices, we can arbitrarily allocate the remaining agents to the leaves for every choice of~$\alloc^{-1}(c)$ and check in polynomial time whether at least one of these allocations is envy-free.
\end{proof}

However, as we demonstrate in the remainder of this subsection, the tractability boundary cannot be pushed much further. Observe that stars are the only family of connected graphs with the vertex cover number one. In the following, we show that on topologies with the vertex cover number two, deciding the existence of an envy-free allocation becomes computationally intractable.

\begin{theorem}\label{thm:EF:pNPh:vc}
    It is \NPc to decide whether a distance preservation game~$\Gamma$ admits an envy-free allocation, even if the topology~$G$ is of the vertex cover number~$2$.
\end{theorem}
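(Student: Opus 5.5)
The plan is a polynomial-time reduction from \probName{Independent Set}, which is \NPc in the strong sense, so the unary blow-up of the construction is harmless. The topology has vertex cover $\{u,w\}$: two adjacent centers $u$ and $w$, a set $L_u$ of leaves attached only to $u$, a set $L_w$ of leaves attached only to $w$, and possibly a set $C$ of common neighbours of $u$ and $w$. In this graph two leaves on the same side are at distance~$2$, two leaves on opposite sides are at distance~$3$, and a common neighbour is at distance~$1$ from both centers and at distance~$2$ from every leaf. Every graph of this form has vertex cover number at most~$2$, so membership in \NP is immediate and only hardness needs work.

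The encoding is as follows. Given a graph $H$ and an integer $k$, we set $|L_u| = k$ and $|L_w| = |V(H)| - k$, and add one \emph{vertex agent} $x_v$ for every $v \in V(H)$. The intended meaning is that $x_v$ placed in $L_u$ means $v$ is selected. For every edge $vv' \in E(H)$, agents $x_v$ and $x_{v'}$ prefer each other at distance~$3$, with symmetric preferences, so that two adjacent selected vertices sitting together in $L_u$ pay a penalty. We add two \emph{guard agents} $g_u$ and $g_w$, mirroring the guard~$g$ of \Cref{thm:EF:NPh}, to pin the centers: the guards' ideal distances are chosen so that any guard not on its center envies whoever occupies that center, exactly as in the first step of the correctness proof of \Cref{thm:EF:NPh}. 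Finally we add \emph{padding agents} with carefully chosen distance-$1$ or distance-$2$ preferences to the guards, making the $u$-side and the $w$-side asymmetric. The total number of agents equals $|V(G)|$, so no vertex is empty, as in \Cref{thm:EF:NPh}.

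The proof then proceeds in three steps.
\begin{enumerate}
\item Show that in every envy-free allocation the guards sit on $u$ and $w$, and the padding agents are forced into $C$ or into a fixed side.
\item Show that the forward direction works. An independent set $S$ of size $k$ yields the allocation $x_v \in L_u \iff v \in S$. This allocation must be envy-free; in particular, no agent in $L_w$ should gain by swapping into $L_u$, and the padding is tuned to guarantee this.
\item Show the backward direction. If two adjacent vertex agents both lie in $L_u$, then one of them strictly decreases its cost by swapping with some agent in $L_w$. Consequently the agents in $L_u$ form an independent set of size $k$.
\end{enumerate}

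The main obstacle is the backward direction, and specifically the local nature of envy-freeness. A swap changes \emph{all} of the swapping agent's distances simultaneously, so moving $x_v$ to $L_w$ may fix its conflict with a neighbour in $L_u$ while creating new conflicts with neighbours that are already in $L_w$. Symmetrically, in the forward direction agents in $L_w$ with many neighbours outside $S$ may envy positions in $L_u$. To handle this, I would first try to make conflicts one-sided. Concretely, I would replace the edge preferences by \emph{edge agents} placed in $C$, whose preferences to the two endpoint agents make any endpoint in $L_w$ cost-free, so that only configurations with both endpoints in $L_u$ generate a strict improvement. If this still fails, I would instead reduce from a problem whose constraints are naturally one-sided, such as \probName{Exact Cover by 3-Sets}, with set agents choosing a side, and argue as in the Claim inside the proof of \Cref{thm:EF:NPh}.
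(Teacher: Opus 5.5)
There is a genuine gap. The only concrete gadget you specify does not encode \probName{Independent Set}, and the obstacle you flag at the end breaks it outright; the (unspecified) padding cannot tune it away.

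In your topology the leaves of $L_w$ are pairwise at distance~$2$, just like those of $L_u$. So the symmetric distance-$3$ preference penalises an edge with both endpoints in $L_w$ exactly as much as one inside $L_u$. Vertex agents care only about each other. Let $o(v)$ and $t(v)$ count the $H$-neighbours of $v$ whose agents lie on the same and on the opposite side as $x_v$. Then:
\begin{itemize}
\item $x_v$ has cost $o(v)$;
\item swapping with an agent $y$ on the opposite side gives cost $t(v)$, or $t(v)-1$ if $y$ is a neighbour's agent;
\item moving onto a centre or into $C$ gives cost at least $\deg_H(v)$.
\end{itemize}
So envy-freeness is a local-max-cut condition with prescribed side sizes. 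Padding agents cannot affect it, because they appear in no vertex agent's relationship set.

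Both directions fail.
\begin{itemize}
\item Forward: take $H$ a triangle plus an isolated vertex and $k=2$, a \Yes-instance. Every placement of the vertex agents on the leaves with $|L_u|=2$ puts two triangle vertices on one side and the third on the other. An agent of that pair drops its cost from $1$ to $0$ by swapping with the agent of the third.
\item Backward: two adjacent agents in $L_u$ whose vertices have most of their neighbours in $L_w$ envy nobody, so step~3 is false.
\end{itemize}
Your first fallback does not work as stated either. Every vertex of $C$ is at distance~$2$ from every leaf, so an agent sitting in $C$ cannot distinguish the two sides. The \probName{Exact Cover by 3-Sets} route is only a suggestion.

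The missing idea is to create positions that are optimal for every agent of a class and are always occupied. Then envy-freeness collapses to ``all these agents have cost~$0$'', a global constraint instead of a local-improvement one. The paper does this as follows:
\begin{itemize}
\item It reduces from \probName{Cubic Bisection}, whose symmetry between the two parts matches the two-sided geometry.
\item It makes the centres $c_1,c_2$ \emph{non-adjacent}, joined through a set $M$ of $k$ common neighbours; each star has $B=\frac{3N-k}{2}$ leaves.
\item It has one agent per \emph{edge} of $H$, wanting the agents of edges sharing an endpoint at distance~$2$.
\item It adds two guards wanting all edge agents at distance~$1$; a counting argument pins them to $c_1$ and $c_2$.
\end{itemize}
Every vertex of $M$ is at distance~$2$ from all non-centre vertices. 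So any edge agent with positive cost---necessarily one with an adjacent edge's agent on a leaf of the other star, at distance~$4$---envies an occupant of $M$. Hence the edges at any $v\in V(H)$ are never split between leaves of the two stars. Cubicity together with $|M|=k$ converts this into a bisection of width at most~$k$.
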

\begin{proof}
    We give a polynomial reduction from the \probName{Cubic Bisection} problem. Here, we are given a cubic graph~$H$ with~$2N$ vertices and an integer~$k\in\N$, and the goal is to decide whether a partition~$(X,Y)$ of~$V(H)$ exists such that~$|X| = |Y| = N$ and there are at most~$k$ edges with one endpoint in~$X$ and one endpoint in~$Y$. The problem is \NPc even if there is no partition of~$V(H)$ to two equal parts with less than~$k$ edges between~$X$ and~$Y$, and~$N > k > 4$~\cite[Theorem 3]{DeligkasEKS2024}. Also note that since~$H$ is~$3$-regular, then in every partition~$(X,Y)$ with exactly~$k$ edges between~$X$ and~$Y$ we have exactly~$B = \frac{3N-k}{2}$ edges with both endpoints in the same part. Consequently, we can assume that~$\frac{3N-k}{2}$ is an integer, as otherwise the \probName{Cubic Bisection} instance is trivially a \No-instance.

    \begin{figure}[bt!]
        \centering
        \begin{tikzpicture}
            \node[draw,circle,inner sep=2pt,very thick] (S1) at (1.5,0) {$c_1$};
            \node[draw,circle,inner sep=2pt,very thick] (S2) at (4.5,0) {$c_2$};
            
            \foreach[count=\i] \y in {1.5,1,0.5,-0.5,-1,-1.5} {
                \node[draw,circle,inner sep=1.5pt] (A\i) at (0,\y) {};
                \draw (S1) -- (A\i);
            }
            \node (Adummy) at (0,0.1) {$\vdots$};
    
            \foreach[count=\i] \y in {1,0.5,-0.5,-1} {
                \node[draw,circle,inner sep=1.5pt] (k\i) at (3,\y) {};
                \draw (k\i) edge (S1) edge (S2);
            }
            \node (kdummy) at (3,0.1) {$\vdots$};
            \node (kdummy) at (3,-1.75) {$M$};
            
            \foreach[count=\i] \y in {1.5,1,0.5,-0.5,-1,-1.5} {
                \node[draw,circle,inner sep=1.5pt] (B\i) at (6,\y) {};
                \draw (S2) -- (B\i);
            }
            \node (Bdummy) at (6,0.1) {$\vdots$};

            \begin{pgfonlayer}{background}
                \draw[fill,cbOrange,rounded corners] (-0.5,2) rectangle (0.5,-2); 
                \draw[fill,cbGreen,rounded corners] (2.5,1.5) rectangle (3.5,-1.5); 
                \draw[fill,,cbOrange,rounded corners] ( 5.5,2) rectangle (6.5,-2); 
            \end{pgfonlayer}
        \end{tikzpicture}
        \caption{An illustration of the topology used in the proof of \Cref{thm:EF:pNPh:vc}. Each set of vertices with yellow (light gray) background is of size~$B=\frac{3N-k}{2}$ and the set~$M$ with green (dark gray) background is of size~$k$. Clearly, the vertices~$c_1$ and~$c_2$ form a vertex cover of this graph of size two.}
        \label{fig:EF:pNPh:vc}
    \end{figure}
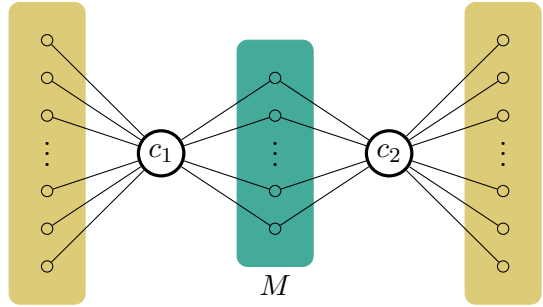

    Let~$\mathcal{I} = (H,k)$ be an instance of the \probName{Cubic Bisection} problem. We construct an equivalent distance preservation game~$\Gamma$ as follows. The topology~$G$ consists of two stars~$S_1$ and~$S_2$ with exactly~$B$ leaves each. Let~$c_1$ and~$c_2$ be centers of~$S_1$ and~$S_2$, respectively. We additionally add~$k$ parallel edges connecting~$c_1$ and~$c_2$ and subdivide each such edge once. We set~$M$ to be the set of vertices resulting from the subdivision. See \Cref{fig:EF:pNPh:vc} for an illustration of the topology. The set of agents~$\agents$ contains one \emph{edge agent}~$a_{u,v}$ for every pair of~$u,v\in V(H)$ such that~$\{u,v\}\in E(H)$; i.e., there is one agent for each edge of~$H$. Additionally, we have two \emph{guard agents}~$g_1$ and~$g_2$. Observe that the number of agents is equal to the number of vertices of~$G$. The preferences of our agents are as follows. For each agent~$a_{u,v}$, we have~$M_{a_{u,v}} = \{ a_{w,w'} \mid |\{u,v\} \cap \{w,w'\}| > 0 \}$; that is, each agent cares about agents representing adjacent edges. For every guard agent~$g_i$ we set~$M_{g_i} = \agents\setminus\{g_1,g_2\}$. The distance edge agents require is always two, while the distance required by the guard agents is always one.
    
    For correctness, let~$\mathcal{I}$ be a \Yes-instance and~$(X,Y)$ be a solution partition. We construct an envy-free allocation for~$\Gamma$ as follows. For every~$\{u,v\} \in E(H)$ we allocate~$a_{u,v}$ to
    \begin{itemize}
        \item an arbitrary empty leaf of~$S_1$ if~$u,v\in X$,
        \item an arbitrary empty leaf of~$S_2$ if~$u,v\in Y$, or
        \item an arbitrary empty vertex of~$M$ otherwise.
    \end{itemize}
    Observe that since~$(X,Y)$ is a bipartition, exactly~$|M| = k$ edge agents are allocated to the vertices of~$M$ and exactly~$B$ agents are allocated to leaves of~$S_1$ and~$S_2$, respectively. Finally, we set~$\alloc(g_1) = c_1$ and~$\alloc(g_2) = c_2$.
    
    Now, we claim that~$\alloc$ is an envy-free allocation. First, every guard agent~$g_i$,~$i\in[2]$, has exactly~$B+k$ edge agents at distance one, and~$B$ edge agents at distance three---the cost for~$g_i$ in~$\alloc$ is~$2B$. Hence, they are not envious of each other. Assume that~$g_i$ envies some element agent~$a$ allocated to a leaf of~$G$. Then, after the swap, at most one edge agent is at distance one, and all other edge agents are at distance at least two. Therefore, the cost for~$g_i$ after swapping is at least~$2B+k-1$, which is larger than~$2B$. The same holds if~$g_i$ were to swap with some agent allocated to~$M$, so the allocation is envy-free for the guard agents. Let~$a_{u,v}$ be an edge agent. We claim that all agents of~$M_{a_{u,v}}$ are at distance two, and therefore, the cost of~$a_{u,v}$ in~$\alloc$ is zero. If~$\alloc(a_{u,v}) \in M$, then it is clearly the case as all the vertices except~$c_1$ and~$c_2$ are at distance two from~$\alloc(a_{u,v})$ and, moreover,~$c_1$ and~$c_2$ are occupied by~$g_1$ and~$g_2$, respectively. Hence,~$a_{u,v}\not\in M$. Without loss of generality, let~$a_{u,v}$ be allocated to a leaf of~$S_1$ and, for the sake of contradiction, let there be an agent~$a_{u,w}$ allocated to a leaf of~$S_2$. Since~$a_{u,v}\in V(S_1)$, then by the definition of the allocation, it must be the case that~$u,v\in X$. By the same argument, since~$a_{u,w}\in V(S_2)$, it must be the case that~$u,w\in Y$. This contradicts that~$(X,Y)$ is a partition of~$V(H)$. Hence, such a situation never occurs, and indeed, the cost of every edge agent is zero, so they are not envious.%

    In the opposite direction, let~$\alloc$ be an envy-free allocation for~$\Gamma$. First, we show that the guard agents are allocated to the star centers. 

    \begin{claim}\label{thm:EF:pNPh:vc:guardAgents}
        We have~$\{ \alloc(g_1), \alloc(g_2) \} = \{ c_1, c_2 \}$.
    \end{claim}
    \begin{claimproof}
        For the sake of contradiction, assume that~$\alloc(g)\notin \{ c_1, c_2 \}$ for some~$g\in \{g_1, g_2\}$. Then, there is at most one edge agent at distance one from~$g$, and the remaining edge agents are at distance at least two, hence the cost for~$g$ is at least~$2B+k-1$. On the other hand, there are at least~$B+k-2$ edge agents at distance one from~$c_1$, at most one agent at distance two, and at most~$B$ many agents at distance three from~$c_1$. Hence, the cost for~$g$ after swapping to~$c_1$ would be at most~$2B+1 < 2B+k-1$, since~$k > 4$. It follows that if agent~$\{ \alloc(g_1), \alloc(g_2) \} \neq \{ c_1, c_2 \}$, then the agent~$g\in \{g_1, g_2\}$ such that~$g$ is not allocated to~$c_1$ nor~$c_2$ envies the agent allocated to~$c_1$. 
    \end{claimproof}

    Note that guard agents have the same preferences, and edge agents do not care about them. Therefore, if~$\alloc(g_1) = c_2$ and~$\alloc(g_2) = c_1$, then also~$\alloc^{g_1\leftrightarrow g_2}$ is an envy-free allocation. Thus, for the rest of the proof, we assume that~$\alloc(g_1) = c_1$ and~$\alloc(g_2) = c_2$.

    Next, we show that edge agents representing edges that share an endpoint cannot be allocated one to a leaf of~$S_1$ and the other to a leaf of~$S_2$.
    
    \begin{claim}\label{thm:EF:pNPh:vc:edgeAgentsInM}
        Let~$v\in V(H)$ and~$\agents_v = \{a_{uw}\mid v\in \{u,w\}\}$ be the set of agents associated with the edges incident on~$v$. Then either~$\alloc(\agents_v)\subseteq V(S_1)\cup M$ or~$\alloc(\agents_v)\subseteq V(S_2)\cup M$. 
    \end{claim}
    \begin{claimproof}
        For the sake of contradiction, assume that for some~$v\in V(H)$, there is an edge agent~$a_{u,v}$ with~$\alloc(a_{u,v})\in V(S_1)$ and an edge agent~$a_{v,w}$ with~$\alloc(a_{v,w})\in V(S_2)$. Note that~$a_{v,w}\in \relSet_{a_{u,v}}$ as the two edges share the common endpoint~$v$ (the same holds for any two agents in~$\agents_v$).  Since by \Cref{thm:EF:pNPh:vc:guardAgents},~$\{ \alloc(g_1), \alloc(g_2) \} = \{ c_1, c_2 \}$, it follows that all edge agents are allocated to~$V(G)\setminus \{c_1,c_2\}$. Therefore,~$\alloc(a_{u,v})$ is a leaf in~$S_1$ and~$\alloc(a_{v,w})$ is a leaf in~$S_2$ and their distance is~$4$. Hence~$\cost(a_{u,v},\pi) \ge 2$. On the other hand, every vertex~$m\in M$ is at distance two to any other vertex in~$V(G)\setminus \{c_1,c_2\}$. Hence, if~$b$ is an agent with~$\alloc(b)\in M$, then~$\cost(a_{u,v},\alloc^{a_{u,v}\leftrightarrow b}) = 0$ and agent~$a_{u,v}$ envies agent~$b$.
    \end{claimproof}

    Now let~$X = \{v\in V(H)\mid \alloc(\agents_v)\cap V(S_1)\neq \emptyset\}$ be the set of vertices~$v\in V(H)$ with some incident edge allocated to a leaf of~$S_1$. Similarly, let~$Y = \{v\in V(H)\mid \alloc(\agents_v)\cap V(S_2)\neq \emptyset\}$ be the set of vertices~$v\in V(H)$ with some incident edge allocated to a leaf of~$S_2$. By \Cref{thm:EF:pNPh:vc:edgeAgentsInM},~$X$ and~$Y$ are disjoint, so~$|X|+|Y|\le |V(H)| = 2N$ and at least one of the two sets has size at most~$N$. Without loss of generality, let us assume~$|X|\le N$ (the case when~$|X| > N$ and~$|Y|< N$ is analogous).
    By \Cref{thm:EF:pNPh:vc:edgeAgentsInM}, agents in~$\agents_X = \bigcup_{v\in X}\agents_v$ are all allocated on vertices in~$M\cup V(S_1)$.
    Let~$F = \{uv\in E(H)\mid \alloc(a_{u,v})\in M\text{ and }\{u,v\}\cap X\neq \emptyset\}$ be the agents in~$\agents_X$ that are allocated to vertices in~$M$. Moreover, let~$F=F_1\cup F_2$, where edges in~$F_2$ have both endpoints in~$X$ and edges in~$F_1$ have exactly one endpoint in~$X$. Since~$H$ is cubic, it follows that~$3|X| = 2B+2|F_2|+|F_1|= 3N-k+2|F_2|+|F_1|$. It follows that~$|X| = N-\frac{k-2|F_2|-|F_1|}{3}$. Let~$X'$ be an arbitrary set of vertices in~$V(H)\setminus X$ of size~$\frac{k-2|F_2|-|F_1|}{3}$ (recall that~$|X|\le N$, so~$|X'|= N-|X|$ is nonnegative). Clearly,~$(X\cup X', V(H)\setminus (X\cup X'))$ is a partition of~$V(H)$ with~$|X\cup X'| = |V(H)\setminus(X\cup X')| = N$. By \Cref{thm:EF:pNPh:vc:edgeAgentsInM} and choice of~$X$, there are at most~$|F_1|\le |F|\le |M|$ many edges with one endpoint in~$X$ and the other in~$V(H)\setminus(X\cup X')$. Finally, since~$H$ is cubic, there are at most~$3|X'|= k-2|F_2|-|F_1|$ edges incident on~$X'$. Hence, there are at most~$|F_1|+k-2|F_2|-|F_1|=k-2|F_2|$ many edges with one endpoint in~$X\cup X'$ and the other in~$V(H)\setminus(X\cup X')$ and~$(H,k)$ is a Yes-instance of  \probName{Cubic Bisection}.%
\end{proof}

We can also try to generalize stars to general trees, which are not ruled out by the previous hardness result. Nevertheless, this is not possible, as we show in the next construction. Observe that the trees used in this construction are of depth two, and, at the same time, stars (rooted in their centers) are the only family of graphs of depth one, so the result is again tight. Moreover, the preferences are again symmetric.

\begin{theorem}
\label{thm:NPh:tree}
    It is \NPc to decide whether a distance preservation game admits an  envy-free allocation, even if the preferences are symmetric and the topology is a tree of depth~2.
\end{theorem}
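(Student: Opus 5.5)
We would reduce from \probName{$3$-Partition} in the strongly \NPc variant with~$B/4 < s_i < B/2$, following the template of \Cref{thm:EF:NPh} but replacing the cliques by stars hanging below a common root. The topology~$G$ would be a root~$r$ with~$N$ children~$c_1,\ldots,c_N$, where each~$c_j$ has exactly~$B$ leaf children; this is a tree of depth~$2$. The agents would be:
\begin{itemize}
\item one \emph{root agent}~$g$;
\item $N$ \emph{hub agents}~$h_1,\ldots,h_N$;
\item $s_i$ \emph{item agents}~$a_i^1,\ldots,a_i^{s_i}$ for every~$s_i\in S$.
\end{itemize}
The number of agents is then exactly~$|V(G)| = 1+N+NB$, so no vertex is empty. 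The preferences are symmetric. Item agents of the same item want each other at distance~$2$, since two leaves of the same star are at distance~$2$ while leaves of different stars are at distance~$4$. Each hub wants~$g$ at distance~$1$, and hubs want each other at distance~$2$. If these hub relations turn out to be too weak, I would also give every hub a relation of distance~$1$ to the item agents; this pins it to a centre, because it then wants its own leaves adjacent.

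For the forward direction, take a partition~$(X_1,\ldots,X_N)$. Put~$g$ on~$r$ and~$h_j$ on~$c_j$, and place the~$B$ item agents of~$X_j$ on the leaves of~$c_j$. Every agent then has cost~$0$, so the allocation is trivially envy-free. If we use the variant in which hubs also care about items, we would instead compute the hub costs and check that swapping cannot help them.

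For the backward direction, we proceed in three steps.
\begin{itemize}
\item \emph{Structural claim.} In any envy-free allocation, $g$ is on~$r$ and the hubs occupy~$\{c_1,\ldots,c_N\}$. The argument is a case analysis: if~$g$ is not on~$r$, then~$g$ or some hub pays for distance mismatches and envies the agent on~$r$ or on a centre, because a swap restores distance~$1$. If an item agent sits on a centre or on~$r$, it has its own group at distance~$1$ or~$3$ instead of~$2$, so it has positive cost. It then envies some leaf agent that is unrelated to it and lies in the star containing the most of its group, since a leaf position yields strictly smaller cost.
\item \emph{Grouping claim.} All item agents of the same~$s_i$ lie in one star. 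This is the majority-star argument of the Claim in \Cref{thm:EF:NPh}. An agent in a minority star swaps with an unrelated leaf agent of the majority star; such an agent exists because~$s_i<B/2$. After the swap, strictly more of its group is at distance~$2$, so it envies that agent.
\item \emph{Reading off the partition.} Each star has exactly~$B$ leaves and each~$s_i$ lies strictly between~$B/4$ and~$B/2$. Hence every star holds exactly three items summing to~$B$, which gives the partition.
\end{itemize}

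The main obstacle is the structural claim. With symmetric preferences we cannot give~$g$ a large one-sided incentive as the guards had in \Cref{thm:EF:pNPh:vc}. So we must verify that every misplacement of~$g$, a hub, or an item agent onto a non-leaf produces an agent with a strictly improving swap. The delicate sub-case is when an item agent swaps into a centre and the displaced hub's cost changes. If the bare construction fails, I would first try padding: add to each star extra leaves together with indifferent-free dummy pairs, or replicate the root agent, so that the costs of hubs and~$g$ dominate and the case analysis goes through.
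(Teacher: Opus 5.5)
Your tree and agent types are essentially the paper's. The proof breaks at the structural claim, which you rightly flag as the crux, and the case analysis you sketch for it is false as stated.

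Place $g$ on a centre $c_1$ and all hubs on leaves of that star (possible once $B\ge N$). Every hub is adjacent to $g$, and the hubs are pairwise at distance $2$. So $g$ and all hubs have cost $0$, none of them envies anybody, and ``$g$ or some hub pays'' fails.

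Your treatment of item agents on non-leaves does not fill the hole either:
\begin{itemize}
\item The root $r$ is at distance exactly $2$ from every leaf, so an item agent on $r$ whose group lies on leaves has cost $0$.
\item An item agent on a centre whose whole group lies on the other centres has cost $0$ as soon as $s_i\le N-1$.
\item For an item agent on a centre, the improving swap is towards the root, not towards a leaf of the majority star. If its group is spread over several stars, no leaf is better.
\end{itemize}
So in this configuration a contradiction would need a new argument about the item agents forced onto $c_2,\ldots,c_N$, presumably with a scaling assumption making every $s_i$ large compared to $N$. None of this is in your proof.

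Your proposed repair breaks the forward direction. By symmetry, hubs wanting item agents at distance $1$ means every item agent wants all $N$ hubs adjacent. In the intended allocation, each item agent then pays $2(N-1)$ for the hubs at distance $3$, and it drops to cost $0$ by swapping with $g$ onto the root. Your hub--hub relations also hurt: with $g$ on $r$ and all hubs on leaves of a single star, each hub pays $1$ and no swap lowers this.

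The missing idea is that symmetry does not rule out a large \emph{two-sided} incentive, provided it is free in the intended allocation. The paper lets $g$ and every item agent want each other at distance $2$. In addition, $g$ and each hub want each other at distance $1$, and each hub cares only about $g$. This costs nothing in the intended allocation, since every leaf is at distance exactly $2$ from the root.

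If $g$ sits on a leaf or a centre, however, it is at distance $4$ or $3$ from all item agents on leaves of other stars, paying roughly $(N-1)B$. After swapping onto the root it pays at most $2N$: one for each item agent on a centre and one for each hub on a leaf. For $B$ large relative to $N$ (obtainable by scaling), this pins $g$ to $r$.

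Since no vertex is empty, a hub on a leaf then envies any centre occupant. So the hubs fill the centres and the item agents fill the leaves. From there your grouping claim and counting argument finish the proof.
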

\begin{proof}
    We reduce from \probName{3-Partition}. This reduction follows a similar idea to that of \Cref{thm:EF:NPh}. We assume that~$N > 2$, as otherwise the instance is easy, and~$B > N$, as we can multiply each element by large enough constant to obtain an equivalent instance.

    Given an instance~$S$ of \probName{$3$-Partition}, we construct an equivalent distance preservation game~$\Gamma$ as follows. We create a topology~$G$ made up of a disjoint union of~$N$ stars~$C_1,\ldots,C_N$, each with exactly~$B$ leaves with center vertices~$u_x$ for~$x \in [1,N]$ , and one apex vertex~$v_g$ connected to all other centers of the stars; i.e.,~$G$ is a tree of depth two rooted at~$v_g$. The set of agents contains exactly~$s_i$ \emph{item agents}~$a_i^1,\ldots,a_i^{s_i}$ for every~$s_i\in S$, one \emph{guard agent}~$g$, and~$N$ \emph{special agents}~$b_1,\ldots,b_N$. For every item agent~$a_i^j$, we have~$M_{a_i^j} = \{ a_i^1,\ldots,a_i^{s_i} \}\setminus \{a_i^j\} \cup \{g\}$. For the guard agent, the set~$M_{g}$ contains all the other agents, and for each special agent~$b_j$ we have~$\relSet_{b_j} = \{g\}$. Each item agent requires to be at distance two from all other agents he cares about. Guard agent requires distance one from special agents and distance two from item agents, and special agents require to be at distance one from the guard agent. Observe that the number of agents is equal to the number of vertices of the topology.

    Assume that we are given a solution to the instance~$S$. We will construct the allocation~$\pi$ as follows. For each triplet~$X_j = \{s_{i1}, s_{i2}, s_{i3}\}$ of the partition solution, we assign all the corresponding agents to the leaves of the same star gadget~$C_j$. Since~$X_j$ is a triplet which sums up exactly to~$B$, all leaves of~$C_j$ have agents assigned to them. We allocate the guard agent~$g$ to the root~$v_g$ and we assign each~$b_j$,~$j\in[N]$, to the center of~$C_j$. Observe that all agents have cost zero in this allocation by construction. This is because all item agents are placed on the same star gadget, and hence distance two from one another. In addition, the guard agent is distance two from every item agent and in the neighborhood of each special agent. Hence~$\alloc$ is envy-free. 

    Now, assume that we have an envy-free allocation~$\alloc$ for~$\Gamma$. We show several auxiliary lemmas that help us prove that there indeed exists a solution for~$\mathcal{I}$. 
    First, we show that agent~$g$ is allocated to~$v_g$.

    \begin{claim}
        It holds that~$\alloc(g) = v_g$.
    \end{claim}
    \begin{claimproof}
        If~$\alloc(g) \not= v_g$, then~$g$ is either allocated to a leaf, or to some center~$u_j$,~$j\in[N]$.
        First, assume that~$v_g$ is allocated to a leaf; he incurs a cost of at least~$B\cdot N + N - B - 1$. However, if he swaps with the agent allocated to~$v_g$, his cost is at most~$2N$, which is strictly smaller. That is,~$g$ is envious towards~$\alloc^{-1}(v_g)$. 
        Next, let~$\alloc(g) = u_j$ for some~$j\in[N]$. Again, the cost for~$g$ is at least~$B\cdot N + N - N - N + 1$. However, by swapping with~$\alloc^{-1}(v_g)$, the cost decrease to at most~$2N$, which is again strictly smaller by our assumptions. Thus,~$g$ must be allocated to~$v_g$.
    \end{claimproof}

    Next, we show that each special agent is allocated to a center of some star.

    \begin{claim}
        For every agent~$b_i$,~$i\in[N]$, we have~$\alloc(b_i) = u_j$ for some~$j\in[N]$.
    \end{claim}
    \begin{claimproof}
        By the previous claim,~$\alloc(g) = v_g$. Hence, if a special agent~$b$ is not allocated to a center, it has to be allocated to a leaf and its cost is~$1$. Since~$\numAgents = |V(G)|$, all centers are occupied and~$b$ envies agents allocated to centers.
    \end{claimproof}

    Without loss of generality, for the rest of the proof, we can assume that each~$b_j$,~$j\in[N]$, is allocated to~$u_j$, as special agents care only about the guard agent, by the previous claim, all of them are allocated to a center, and swapping locations of two special agents in such allocation does not change the cost for any agent.

    Finally, we show that all item agents corresponding to the same item~$s_i$ are allocated to the same star~$C_j$.

    \begin{claim}
        For every~$i\in[3N]$, all item agents~$a_i^1,\ldots,a_i^{s_i}$ are allocated to leaves of the same star~$C_j$.
    \end{claim}
    \begin{claimproof}
        If an item agent is allocated to a center, then there is special agent not allocated to a center, which contradicts the previous claim. Hence, all item agents are allocated to leaves.
        For the sake of contradiction, assume that it is not the case, that is, there are at least two stars~$C_j$ and~$C_{j'}$ such that at least one agent of~$a_i^\ell$,~$i\in[3N]$ and~$\ell\in[s_i]$, is allocated to them. Let~$C_j$ be a star with the largest number of item agents corresponding to item~$s_i$, and let~$a_i^\ell$ be agent such that~$\alloc(a_i^\ell) \in V(C_{j'})$. Since~$a_i^\ell$ has at most by one less agents he cares about in~$C_{j'}$ than in~$C_j$, by swapping position with some agent not in~$\relSet_{a_i^\ell}$ (which clearly exists since~$|\relSet_{a_i^\ell}| < B/2$ and~$|V(C_j)| = B+1$), he decreases his cost by at least one.
    \end{claimproof}

    Now, from the fact that~$g$ is allocated to~$v_g$, that all item agents corresponding to the same~$s_i$ are part of the same star gadget, that each star has exactly~$B$ leaves, and that for every~$s_i$ we have more than~$B/4$ item agents, we directly obtain that if we take~$\mathcal{X} = (X_1,\ldots,X_N)$ so that~$X_j = \{ s_i \mid \alloc(a_i^1) \in V(C_j) \}$, we obtain a solution for~$S$.
\end{proof}

Arguably, one of the structurally simplest graphs is a path. Surprisingly, we show that even on such trivial topologies, our problem remains \NPc.

\begin{theorem}\label{thm:EF:NPh:path}
    It is \NPc to decide whether a distance preservation game~$\Gamma$ admits an envy-free allocation, even if the topology is a path.
\end{theorem}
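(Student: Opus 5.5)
Membership in \NP is immediate: given an allocation, we check envy-freeness by computing all pairwise distances on the path and, for each ordered pair of agents, comparing the cost before and after the swap. For hardness we again reduce from \probName{$3$-Partition} with $B/4 < s_i < B/2$, following the template of \Cref{thm:EF:NPh} and \Cref{thm:NPh:tree}. On a path we cannot use cliques or stars as ``bins'', so the bins will be consecutive blocks of the path separated by positions reserved for special agents. Concretely, the topology is a path on $N\cdot B + (N+1)$ vertices, formed by $N$ blocks $P_1,\dots,P_N$ of $B$ consecutive vertices each, separated by and bordered with single \emph{separator} vertices $w_0,\dots,w_N$.

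\textbf{Agents.} We have $s_i$ item agents $a_i^1,\dots,a_i^{s_i}$ for every $s_i$, and $N+1$ separator agents $b_0,\dots,b_N$. The number of agents equals the number of vertices, so there are no empty vertices. The separator agents form an anchored chain. Agent $b_0$ cares about $b_N$ at ideal distance $N(B+1)$, which forces both to the endpoints of the path. Each $b_j$ cares about $b_{j-1}$ (and possibly $b_0$) at distance $j(B+1)$ from $b_0$, pinning it to $w_j$.

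Each item agent cares about its siblings $a_i^{j'}$ at ideal distance $1$. The smallest cost for a group of size $s_i$ lying contiguously is achieved only by the middle agents, so ideal distance $1$ alone is not sufficient. Instead, I would use ideal distance equal to the actual sibling distance in a contiguous placement. A simpler option is to let each $a_i^j$ care about all siblings with ideal distance $B$, or to rely on the following argument: if siblings are split across two blocks, some agent at the boundary of the split group can swap with a non-sibling inside the majority block and strictly reduce its cost, since distances inside a block are at most $B-1$ while crossing a separator adds at least one.

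\textbf{Correctness.} The forward direction is constructive: we place the triple of each bin contiguously in its block and verify that no item agent can gain by swapping, using that after a swap it is either in the same block or farther away. The backward direction proceeds in three claims, mirroring the earlier proofs:
\begin{enumerate}
\item separator agents occupy exactly the separator vertices, since otherwise the endpoint agent $b_0$ envies whoever sits at an endpoint, and then inductively each $b_j$ envies the occupant of $w_j$;
\item all siblings of an item lie in a single block;
\item with blocks of size exactly $B$ and $s_i > B/4$, every block contains exactly three items summing to $B$.
\end{enumerate}

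\textbf{Main obstacle.} The hardest step is the forward direction together with claim (2). On a path, agents placed at different positions inside a block see different distance profiles, so a sibling may envy another sibling's position, or envy a non-sibling in order to get closer to the group's center. To neutralize this, I would choose the item agents' ideal distances so that cost zero is attainable for every agent in the intended solution. One way is to make the within-item preferences an acyclic chain: $a_i^j$ cares only about $a_i^{j+1}$ at distance $1$, and $a_i^{s_i}$ is indifferent. Then a contiguous block gives cost zero to all item agents, and a group split across a separator forces some agent into positive cost. That agent can strictly improve by swapping with the agent currently adjacent to its chain successor, and this agent exists because there are no empty vertices. I would first try this chain construction and check that splitting a chain across blocks always creates such a strict envy.
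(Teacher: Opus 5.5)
Your final construction is correct and is essentially the paper's reduction. It uses a fully occupied path on $NB+N+1$ vertices, separator agents pinned to the $N+1$ separator vertices, and each item's agents forming a directed chain with ideal distance $1$; every agent with a nonempty relationship set must then have cost zero in any envy-free allocation, since otherwise it could swap with the occupant of a vertex at the right distance from its target. The paper differs only in chaining $b_i\to b_{i-1}$ at distance $B+1$ instead of anchoring at $b_0$, and in making $a_i^1,a_i^2$ mutual instead of leaving $a_i^{s_i}$ indifferent.

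Commit to the chain variant and discard the other options you float, since with all siblings at ideal distance $1$ or at ideal distance $B$ the intended allocation is already not envy-free. Also, $b_0$'s preference alone only forces $b_0$ to an endpoint; $b_N$ is pinned afterwards by your induction.
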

\begin{proof}
    Again, we reduce from the \probName{$3$-Partition} problem. Let~$S$ be a \probName{$3$-Partition} instance. We construct an equivalent distance preservation game~$\Gamma$ as follows. The topology~$G$ is a path with~$N\cdot B + N + 1$ vertices. The set of agents~$\agents$ contains~$N+1$ \emph{boundary agents}~$b_1,\ldots,b_{N+1}$ and~$s_i$-many \emph{element agents}~$a_i^1,\ldots,a_i^{s_i}$ for every~$s_i\in S$. Every boundary agent~$b_i$,~$i\in[2,N+1]$, requires the boundary agent~$b_{i-1}$ at distance~$B+1$ and~$b_1$ requires~$b_2$ at distance~$B+1$. Similarly, for every~$i\in[3N]$, each agent of the element~$a_i^j$,~$j\in[2,s_i]$, must be at distance one from~$a_i^{j-1}$, and~$a_i^1$ must be at distance one from~$a_i^2$. Observe that the number of agents and the number of vertices of the graph are the same, so whenever an agent's preferences are not met by some allocation~$\alloc$, this allocation is not envy-free.

    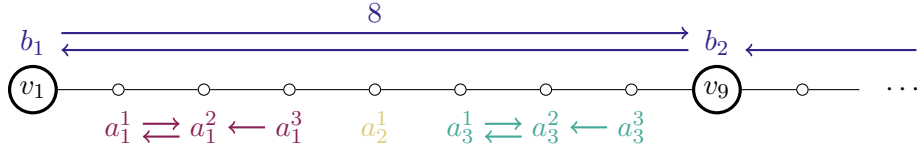
\begin{figure}[bt!]
        \centering
        \begin{tikzpicture}[scale=1.5]
            \node[draw,circle,inner sep=2pt,very thick,label=90:{\textcolor{cbBlue}{$b_1$}}] (B1) at (-1.25,0) {$v_{1}$};

            \node[draw,circle,inner sep=2pt,very thick,label=90:{\textcolor{cbBlue}{$b_2$}}] (B2) at (4.75,0) {$v_{9}$};
            
            \foreach[count=\i] \x in {-0.5,0.25,...,4} {
                \node[draw,circle,inner sep=1.5pt] (v\i) at (\x,0) {};
            }
            \node[draw,circle,inner sep=1.5pt] (v10) at (5.5,0) {};
            \draw (B1) -- (v1) -- (v2) -- (v3) -- (v4) -- (v5) -- (v6) -- (v7) -- (B2) -- (v10) -- (6,0);
            \node at (6.4,0) {$\cdots$};

            \node[below of=v1,cbRed,node distance=0.5cm] (a11) {$a_1^1$};
            \node[below of=v2,cbRed,node distance=0.5cm] (a12) {$a_1^2$};
            \node[below of=v3,cbRed,node distance=0.5cm] (a13) {$a_1^3$};

            \node[below of=v4,cbOrange,node distance=0.5cm] (a21) {$a_2^1$};

            \node[below of=v5,cbGreen,node distance=0.5cm] (a31) {$a_3^1$};
            \node[below of=v6,cbGreen,node distance=0.5cm] (a32) {$a_3^2$};
            \node[below of=v7,cbGreen,node distance=0.5cm] (a33) {$a_3^3$};

            \draw[thick,->,cbBlue] (-1,0.5) -- (4.5,0.5) node[midway,above] {$8$};
            \draw[thick,<-,cbBlue] (-1,0.35) -- (4.5,0.35);
            \draw[thick,<-,cbBlue] (5,0.35) -- (6.5,0.35);

            \draw[thick,->,cbRed] ([yshift=-0.2cm]a11.north east) -- ([yshift=-0.2cm]a12.north west);
            \draw[thick,<-,cbRed] ([yshift=0.15cm]a11.south east) -- ([yshift=0.15cm]a12.south west);
            \draw[thick,<-,cbRed] (a12) -- (a13);

            \draw[thick,->,cbGreen] ([yshift=-0.2cm]a31.north east) -- ([yshift=-0.2cm]a32.north west);
            \draw[thick,<-,cbGreen] ([yshift=0.15cm]a31.south east) -- ([yshift=0.15cm]a32.south west);
            \draw[thick,<-,cbGreen] (a32) -- (a33);
        \end{tikzpicture}
        \caption{An illustration of the construction used to prove \Cref{thm:EF:NPh:path}. An arrow from an agent~$a$ to an agent~$b$ represents that~$b\in M_a$.}
        \label{fig:EF:NPh:path}
    \end{figure}

    For correctness, let~$S$ be a \Yes-instance and~$\mathcal{X}=(X_1,\ldots,X_N)$ be a~$3$-partition of~$S$. We allocate~$b_1$ to~$v_1$, and every~$b_i$,~$i\in[2,N+1]$, to~$v_{(i-1)\cdot B + i}$. After all boundary agents are allocated, there are exactly~$B$ empty vertices between~$b_i$ and~$b_{i+1}$ for every~$i\in[N]$. Next, for every~$j\in[N]$ and every~$X_j = \{s_{i_1},s_{i_2},s_{i_3}\}$, we allocate all element agents corresponding to~$s_i$,~$i\in\{i_1,i_2,i_3\}$, on consecutive vertices between~$b_j$ and~$b_{j+1}$. Note that since~$\mathcal{X}$ is a solution, we have~$\sum_{s\in X_j} s = B$, which is exactly the number of empty vertices between~$b_j$ and~$b_{j+1}$, and also the number of agents corresponding to~$s_{i_1}$,~$s_{i_2}$ and~$s_{i_3}$. If, for each~$s_i$, we increasingly order the corresponding agents according to their identifications, we easily find that the cost for every agent is zero. Therefore,~$\alloc$ is envy-free.

    In the opposite direction, let~$\alloc$ be an envy-free allocation for~$\Gamma$. Recall that there are as many agents as there are vertices, so every vertex is occupied. First, we show that the distance between each pair of consecutive boundary agents is exactly one. For the sake of contradiction, let it not be the case; that is, there are~$b_i$ and~$b_{i+1}$,~$i\in[N]$, such that~$\dist_G(\alloc(b_i),\alloc(b_{i+1}))$. Since the topology~$G$ is a path, for every~$v\in V(G)$, there always exists at least one vertex~$u$ at the distance~$t+1$. Moreover, all vertices are occupied, so the boundary agents would envy a location at the right distance. Consequently, without loss of generality, we can assume that~$\alloc(b_i) = v_{(i-1)\cdot B + i}$. That is, the boundary agent partitions the path into~$N$ sub-paths~$P_1,\ldots,P_N$ of length exactly~$B$. Now, by the same argument as in the case of boundary agents, the set of edge agents must be allocated to a consecutive subset of vertices of some~$P_j$. We define~$\mathcal{X}=(X_1,\ldots,X_N)$ so that we set~$X_j = \{ s_i \mid a_i^1 \in V(P_j) \}$. Since there are more than~$B/4$ and less than~$B/2$ element agents corresponding to each~$s_i$ and the size of each~$V(P_j)$ is exactly~$B$, we obtain that each~$X_j$ is of size exactly three. Moreover, as each~$v\in V(P_j)$ is occupied in~$\alloc$, it must be the case that~$\sum_{s\in X_j} s = B$, as otherwise, agents corresponding to some elements~$s_i$ are not allocated to consecutive vertices of some~$P_j'$, which is a contradiction.
\end{proof}

\subsection{Bounded Number of Agents}

So far, all our hardness proofs have heavily relied on the fact that the number of agents in the instance is equal to the number of vertices of the topology. Due to this, agents were allowed to envy their ideal location. Hence, it is natural to ask whether the hardness of our problem is preserved if some vertices remain empty in every allocation.

In the first result of this subsection, we show that it is the case. Specifically, we show that the problem remains \NPc, even if the number of agents is significantly smaller compared to the number of vertices of the topology~$G$.

\begin{theorem}
\label{thm:EF:NPh:boundedAgents}
    It is \NPc to decide whether a distance preservation game~$\Gamma$ admits an envy-free allocation, even if~${\numAgents \leq \frac{3}{5}|V(G)|+1}$.
\end{theorem}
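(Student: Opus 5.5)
The plan is to modify one of the earlier reductions so that a constant fraction of the vertices must stay empty, while forcing the agents to behave as in the original proof. The most convenient base is the clique-based \probName{$3$-Partition} reduction from \Cref{thm:EF:NPh}. That construction has $N$ cliques $C_1,\ldots,C_N$ of size $B$ and an apex $v_g$ with guard $g$, and all ideal distances are one. I would add empty room in a way that does not let agents escape their intended places.

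Concretely, I will enlarge the topology by attaching extra ``padding'' vertices. A natural option is to attach to each vertex of each clique a pendant path of length two (or pendant vertices arranged suitably), possibly also hung on $v_g$. The agent set stays the same, but I will use the guard agent $g$ with $\relSet_g$ equal to all item agents at distance one. The padding is chosen so that $|V(G)|$ grows by about $\frac{2}{3}$ of the original count, which gives $\numAgents \le \frac{3}{5}|V(G)|+1$. The exact padding pattern will be tuned to hit the constant: e.g.\ two pendant vertices per clique vertex would give ratio $\frac13$, so I will instead use a pattern where roughly $\frac23 NB$ extra vertices are added. One option is one pendant vertex for every clique vertex plus a few extra leaves; another is a shared star hub. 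The additive $+1$ accounts for the guard.

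The correctness proof then has three steps.

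\textbf{Forward direction.} This is as before. Given a $3$-partition, place $g$ on $v_g$ and the item agents of each triple inside the corresponding clique. Every agent then has cost zero, so no one envies.

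\textbf{Guard placement.} Here I would argue that envy-freeness forces $g$ onto $v_g$. Since $g$ wants everyone at distance one and $v_g$ is the only vertex adjacent to all clique vertices, any other position gives $g$ a large cost. Swapping with whoever occupies $v_g$ would reduce that cost. The subtle part is that $v_g$ might be empty. In that case envy is not triggered, since only swaps count. To handle this I will add \emph{filler} agents whose role is to ensure all ``important'' vertices ($v_g$ and clique vertices) are occupied in any envy-free allocation. Alternatively, I can give each item agent a dependence on $g$ at distance one, so that no item agent is happy in the padding and item agents envy agents sitting on clique vertices.

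\textbf{Item agents cluster by clique.} Next I need that each group $a_i^1,\ldots,a_i^{s_i}$ lies inside a single clique, and that no item agent sits on a padding vertex. The latter follows because a padding vertex is at distance at least two from the group mates, so the agent would envy some agent in a clique (or the guard). The clustering claim then repeats the argument of \Cref{thm:EF:NPh}: an agent outside the majority clique of its group envies a non-mate in that clique. Counting then yields the partition.

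The main obstacle is that envy only concerns occupied vertices, so an agent that is badly placed cannot simply be argued to move into a good empty vertex. I must guarantee that every vertex of $\bigcup_j C_j \cup \{v_g\}$ is occupied. My first attempt is to make the number of non-padding agents equal $NB+1$, i.e., exactly fill the cliques and apex. Then I will show that no agent can profitably sit in the padding: any agent there strictly envies some agent on a clique vertex, since the clique vertices are all occupied once no one is in the padding. The argument has to be organized so that this is not circular. I would argue by taking an agent in the padding and exhibiting a specific occupied clique vertex that is strictly better for it. The pigeonhole argument then shows that some clique vertex is occupied by a non-mate, so a strictly improving swap exists.
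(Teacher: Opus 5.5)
There is a genuine gap, exactly at the step you flag as the main obstacle. Your whole argument needs that in every envy-free allocation no agent sits in the padding, so that $V(C_1)\cup\dots\cup V(C_N)\cup\{v_g\}$ is fully occupied and the argument of Theorem~\ref{thm:EF:NPh} applies. You never fix the padding, and you never prove this step.

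The local argument you sketch (take a padded agent and exhibit an occupied clique vertex that is strictly better for it) fails in the critical configuration: $v_g$ is empty, all clique vertices of some $C_j$ are empty, and a whole item group sits on the padding of $C_j$. From an occupied vertex of any other clique, the distances to that group are no smaller than from inside the padding. So a non-mate on a clique vertex elsewhere gives no improving swap, and your pigeonhole step proves nothing.

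Whether this configuration can be excluded at all depends on the padding. Take two pendant leaves per clique vertex. You discard this option only because of a misread ratio: $\numAgents\le\frac35|V(G)|+1$ is an upper bound, so extra padding never hurts. With this padding the reduction is actually false. Take an instance with all $s_i$ even and $N\ge 3$ (still NP-hard by scaling), and build the following allocation.
\begin{itemize}
\item Put $g$ alone on a vertex of $C_1$, and leave $v_g$ and all other clique vertices empty.
\item Place every group on complete sibling pairs of leaves of one of $C_2,\dots,C_N$. First fit succeeds: each $C_j$ offers $B$ pairs and each group needs fewer than $B/4$ of them.
\item An item agent of a group of size $s$ has cost $2s-3$: its sibling mate is at distance $2$ and its other mates at distance $3$.
\item Swapping with any mate keeps that cost at $2s-3$. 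Swapping with a non-mate or with $g$ raises it to at least $2(s-1)$.
\item The guard $g$ has every item at distance $3$ and only gets worse by moving to a leaf.
\end{itemize}
This allocation is therefore envy-free for yes- and no-instances alike. Your alternative of letting item agents also want $g$ at distance one changes nothing, since their distance to $g$ stays $3$ under every such swap. Indifferent ``filler'' agents would not be forced onto the important vertices either.

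With exactly one pendant per clique vertex your route can be completed, but only through a global counting argument your sketch lacks. Assume $N\ge 3$ and all $s_i\ge 4$.
\begin{itemize}
\item An item agent on $v_g$ is envied by $g$.
\item If $g$ is on $v_g$, every padded item agent envies $g$.
\item If $v_g$ is empty, then $g$ must be on a clique vertex. Every non-envious padded item agent has its entire group on the pendants of a clique whose clique vertices are all empty.
\item In that case only $NB$ vertices are usable by $NB+1$ agents, a contradiction.
\end{itemize}
This count is tight, which is why the choice of padding matters so much.

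The paper avoids controlling empty vertices altogether. It reduces from 3-CNF-SAT via the textbook \probName{Clique} construction. There are $3m+1$ agents all wanting each other at distance one, and $2m+1$ universal vertices are added to the $3m$ literal vertices. Since only $3m$ vertices are non-universal, some agent must occupy a universal vertex, and swapping onto it gives any agent cost $0$. Envy-freeness therefore forces every agent to have cost $0$, so the occupied vertices form a clique of size $3m+1$. Such a clique exists exactly when the formula is satisfiable.
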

\begin{proof}
    The proof is a modification of a textbook reduction from CNF-SAT to \probName{Clique} (see, e.g., \citep[Theorem 34.11]{CormenEtal_book}).
    We reduce from the $3$-CNF-SAT problem, which is known to be \NPc \citep{GareyJ1975}. 
    In this problem, we are given a set~$X = \{x_1,\ldots, x_n\}$ of~$n$ variables and a boolean formula~$F$ in CNF with~$m$ clauses, where each clause contains precisely~$3$ literals (either a variable or its negation).~$3$-CNF-SAT then asks whether it is possible to assign each variable either True or False such that each clause contains at least one literal set to True (note that if a literal~$x$ is set to True, then its negation~$\neg x$ is set to False and vice versa). 
    
    Now, given an instance of~$3$-CNF-SAT with the set of variables~$X$ and the set of clauses~$\mathcal{C}$, we construct an instance~$\Gamma$ of DPG as follows. 
    Let~$|A| = 3m+1$,~$\relSet_a = A\setminus \{a\}$ for all~$a\in A$ and~$\dFn_a(b) = 1$ for all~$a\in A$ and~$b\in \relSet_a$. Finally, the topology~$G$ is as follows: 
    \begin{itemize}
        \item~$V(G) = V_1\cup V_2$,~$E(G) = E_1 \cup E_2$.
        \item~$V_1 = \{v_{\ell}^C\mid \ell \text{ is a literal in } C\in \mathcal{C} \}$; that is there are exactly~$3m$ many vertices in~$V_1$,  a vertex for each literal in each clause.
        \item~$E_1 = \{v_{\ell_1}^{C_1}v_{\ell_2}^{C_2}\mid C_1\neq C_2\text{ and }\ell_1\neq\neg\ell_2\}$; that is there is an edge between two vertices, if they represent literals~$\ell_1$ and~$\ell_2$ in different clause such that~$\ell_1$ is not a negation of~$\ell_2$ (and vice versa). 
        \item~$|V_2| = 2m+1$ and~$V_2\cap V_1 = \emptyset$; that is~$V_2$ contains~$2m+1$ fresh vertices not in~$V_1$. 
        \item~$E_2 = \{vv_2\mid v\in V_1\cup V_2\text{ and }v_2\in V_2\setminus \{v\}\}$; that is~$E_2$ contains all edges between any vertex in~$V_2$ and any other vertex.
    \end{itemize}
    
    Notice that~$|V_1| = 3m  < |A|$, so at least one agent will be allocated to~$V_2$ in any allocation~$\alloc$. Let~$a\in A$ be allocated to vertex~$v\in V_2$ in an allocation~$\alloc$. Since~$v$ is adjacent to all other nodes, it follows that~$\cost(a,\alloc) = 0$ and also for any~$b\in A\setminus\{a\}$,~$\cost(b,\alloc^{a\leftrightarrow b}) = 0$. It follows that the occupied vertices have to form a clique, otherwise an agent~$b$ that is not allocated to a vertex adjacent to all remaining occupied vertices would envy~$a$. Since~$|V_2| = 2m+1$, and vertices in~$V_2$ are adjacent to all other nodes, there is a clique of size~$3m+1$ in~$G$ if and only if there is a clique of size~$m$ in the subgraph~$G[V_1]$ of~$G$ induced by vertices in~$V_1$. We note that~$(G[V_1],m)$ is precisely the hard instance of \probName{Clique} obtained in the textbook reduction mentioned earlier (exactly as it appears in \citep[Theorem 34.11]{CormenEtal_book}) and it follows rather straightforwardly from the following two observation: (1) each clause corresponds to an independent set of size~$3$, so we have to pick exactly one vertex, representing a literal, from each clause 
    and (2) if we ever pick a vertex representing literal~$x\in X$, then it does not have an edge to any vertex representing the negation of~$x$. Therefore, there is a clique of size~$m$ in~$G[V_1]$ if and only if we can pick in every clause a literal such that we never pick~$x$ and~$\neg x$ at the same time, or in other words (considering we assign the picked literals truth value True), if the original~$3$-CNF-SAT instance is satisfiable. 
\end{proof}

On a more positive note, if we parameterize by the number of agents, the problem is in complexity class \XP. That is, whenever the number of agents is a fixed constant, the problem is polynomial-time solvable.

\begin{theorem}\label{thm:EF:XP:numAgents}
    There is an algorithm running in~$|V(G)|^\Oh{\numAgents}$ time that decides whether a distance preservation game~$\Gamma$ admits an envy-free allocation. That is, the problem is in \XP when parameterized by the number of agents~$\numAgents$.
\end{theorem}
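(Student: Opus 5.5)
The algorithm is plain brute force over all allocations, with an explicit check of each one. Write $n = |V(G)|$. We enumerate every injective map $\alloc\colon\agents\to V(G)$, which is at most $n^{\numAgents}$ candidates, and accept as soon as one of them is envy-free. Correctness needs no argument: by definition an envy-free allocation is one of these maps, and the algorithm tests exactly the defining condition.

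First, in a preprocessing step we compute all pairwise distances $\dist_G(u,v)$, for example by running BFS from every vertex. This takes $\Oh{n\cdot(n+|E(G)|)} \subseteq \Oh{n^3}$ time. After that, each distance query costs constant time.

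Next, for a fixed candidate $\alloc$ we compute $\cost(a,\alloc)$ for every agent $a$ as $\sum_{b\in\relSet_a}|\dFn_a(b)-\dist_G(\alloc(a),\alloc(b))|$. This takes $\Oh{\numAgents}$ time per agent and $\Oh{\numAgents^2}$ in total. Then, for every ordered pair $(a,b)$ of distinct agents we form $\alloc^{a\leftrightarrow b}$ and compute $\cost(a,\alloc^{a\leftrightarrow b})$. Only the positions of $a$ and $b$ change, so this is again a sum over $\relSet_a$ costing $\Oh{\numAgents}$. If $\cost(a,\alloc) > \cost(a,\alloc^{a\leftrightarrow b})$, we reject $\alloc$. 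Checking one candidate therefore takes $\Oh{\numAgents^3}$ time.

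The total running time is $\Oh{n^3 + n^{\numAgents}\cdot\numAgents^3}$. Since $\numAgents\le n$, we have $\numAgents^3\le n^3$, so this is $n^{\Oh{\numAgents}}$ as claimed. The input size is polynomial in $n$ and $\numAgents$, with the distances $\dFn_a(b)$ encoded in binary; arithmetic on these numbers adds only a polynomial factor, which is also absorbed into the bound.

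There is no real obstacle here. The only points needing care are to enumerate injective maps rather than arbitrary ones, and to observe that the per-candidate check is polynomial, so the exponent stays $\Oh{\numAgents}$. Combined with \Cref{thm:EF:NPh:boundedAgents}, this places the problem in \XP for the parameter $\numAgents$.
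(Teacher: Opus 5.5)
Your proposal is correct and matches the paper's proof: both enumerate all $|V(G)|^{\Oh{\numAgents}}$ assignments of agents to vertices, keep only the injective ones, and check envy-freeness of each in polynomial time. One small point is that membership in \XP follows from the algorithm alone, so your closing appeal to \Cref{thm:EF:NPh:boundedAgents} is unnecessary.
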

\begin{proof}
    The algorithm is a simple brute-force over all~$\numAgents$-sized subsets of vertices. Specifically, we try all possible vectors~$(v_1,\ldots,v_\numAgents)$, where~$v\in V(G)$, and for every such vector, we check that a) all vertices are different and b)~$\alloc$ allocating each~$a_i\in\agents$ to the vertex~$v_i$ is envy-free. As both conditions can be checked in polynomial time and there are~$|V(G)|^\Oh{\numAgents}$ possible vectors, the running time of the algorithm is~$|V(G)|^\Oh{\numAgents}$. Such an algorithm is trivially correct as we try all possible allocations.
\end{proof}

Even with considerable effort, we were not able to complement the previous algorithm with a matching complexity lower-bound. Hence, our work leaves an important open question whether, under this parameterization, the previous algorithm is optimal or is an fixed-parameter algorithm possible.

\begin{quote}\vspace{-1cm}
\begin{openProblem}
\label{prob:open:FPTvsWh:agents}
    When parameterized by the number of agents~$\numAgents$, is it fixed-parameter tractable or \Wh to decide whether a distance preservation game~$\Gamma$ admits an envy-free allocation?
\end{openProblem}
\end{quote}

We conjecture that the correct answer to the previous question is negative, i.e. the problem is \Wh with respect to the number of agents~$\numAgents$.

Since we are not able to affirmatively resolve parameterization by the number of agents, in the remainder of this subsection, we combine parameterization by the number of agents with different restrictions of the topology.

We start with the \emph{vertex cover number}, which has been successfully used in the design of efficient algorithms in various areas of artificial intelligence research; see, e.g.,~\citep{KorhonenP2015,DeligkasEGHO2021,BlazejGKPSS2023,DeligkasEIKS2025} for a few examples. The algorithm is based on the so-called \emph{kernelization} technique, which means that we, in polynomial time, preprocess the instance of our problem by dropping non-essential parts of the input and obtain the so-called \emph{kernel}---a smaller instance, which is equivalent to the original instance and whose size is upper-bounded by some function of our parameter. Thanks to its bounded size, we can brute-force all possible solutions in \FPT time.

\begin{theorem}
\label{thm:EF:FPT:agentsPlusVC}
    When parameterized by the number of agents~$\numAgents$ and the vertex cover number~$\operatorname{vc}(G)$, there is an \FPT algorithm deciding whether a distance preservation game~$\Gamma$ admits an envy-free allocation.
\end{theorem}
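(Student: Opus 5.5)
We plan to reduce the instance to a kernel whose size is bounded by a function of $k=\operatorname{vc}(G)$ and $\numAgents$, and then run the brute-force algorithm of \Cref{thm:EF:XP:numAgents} on it. We first compute a minimum vertex cover $C$ of $G$ in \FPT time, for example by the standard $2^{k}n^{\Oh{1}}$ branching. The set $I=V(G)\setminus C$ is then independent. Each $v\in I$ has $N(v)\subseteq C$, so we partition $I$ into at most $2^k$ \emph{types}, two vertices of $I$ having the same type when they have the same neighbourhood in $C$. Two vertices $u,w$ of the same type are false twins. Hence the transposition of $u$ and $w$ is an automorphism of $G$, and for every third vertex $x$ we have $\dist_G(u,x)=\dist_G(w,x)$, while $\dist_G(u,w)=2$ (or $u=w$).

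\textbf{Reduction rule.} If some type class $T\subseteq I$ has more than $\numAgents+1$ vertices, we delete a vertex of $T$ and repeat. After exhaustive application, every class has at most $\numAgents+1$ vertices, so $|V(G')|\le k+2^k(\numAgents+1)$. The kernel stays connected, since every remaining vertex of $I$ still has a neighbour in $C$ (each vertex of $I$ does, because $G$ is connected and $|V(G)|\ge 2$), and the classes are nonempty whenever they were nonempty before. Distances between the remaining vertices are unchanged. Indeed, a shortest path in $G$ visits at most one vertex of $I$ between consecutive vertices of $C$, and any such internal vertex can be replaced by a surviving vertex of the same type, because each class keeps at least one vertex. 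We also keep $|V(G')|\ge\numAgents$, since we only delete from classes that have more than $\numAgents+1\ge\numAgents$ vertices.

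\textbf{Correctness of the rule.} Envy-freeness only compares $\alloc$ with the swaps $\alloc^{a\leftrightarrow b}$, which use the same occupied set $V^\alloc$. Whether $\alloc$ is envy-free therefore depends only on the pairwise distances among occupied vertices. This is the key simplification, and it is the reason the threshold $\numAgents+1$ is more than enough. For the first direction, take an envy-free $\alloc$ in $G$. Every class $T$ with more than $\numAgents$ vertices contains an empty vertex. Using the twin automorphisms inside $T$, which preserve all distances, we move the occupied vertices so that the deleted vertex is empty. This produces an envy-free allocation that avoids the deleted vertex, and since distances are preserved in $G'$, its restriction is envy-free in $G'$. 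Conversely, an envy-free allocation in $G'$ is also an allocation in $G$ with the same distances among occupied vertices, so it remains envy-free.

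Finally, we apply \Cref{thm:EF:XP:numAgents} to the kernel. This takes $(k+2^k(\numAgents+1))^{\Oh{\numAgents}}$ time, which is \FPT in $\numAgents+\operatorname{vc}(G)$. The step that needs the most care is the distance-preservation claim for the kernel, namely the shortest-path rerouting through a surviving twin. It is routine once we observe that each class keeps at least one vertex, and in fact $\numAgents+1\ge 3$ of them whenever deletions occur.
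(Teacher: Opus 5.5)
Your proposal is correct and takes essentially the same route as the paper. Both compute a vertex cover, group the independent-set vertices into types by their neighbourhood in $C$, and truncate each type class to $\Oh{\numAgents}$ twins, justified by moving agents onto empty twins. Both then finish with a brute force on the kernel. Your threshold $\numAgents+1$, in place of the paper's $\numAgents$, is harmless. Your shortest-path rerouting argument also fills in the distance-preservation (and connectivity) step that the paper only asserts.
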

\begin{proof}
    Let~$\Gamma$ be a distance preservation game, and~$C$ be a vertex cover of the topology~$G$ of optimal size~${\operatorname{vc}(G) = \vartheta}$. Without loss of generality, we can assume that~$C$ is a part of the input, as otherwise, we can compute it in~$1.25284^\Oh{\vartheta}\cdot n^\Oh{1}$ time by a known algorithm~\citep{HarrisN2024}.

    As a first step, we partition the vertices of~$I=V(G)\setminus C$ into~$2^\vartheta$ (possibly empty) types according to their neighborhood in~$C$. Let the partitioning be~$\mathcal{T} = (T_1,\ldots,T_{2^\vartheta})$. The core of our algorithm is the following reduction rule, which we apply exhaustively.

    \begin{rrule}
        Let there exist~$j\in[2^\vartheta]$ such that~$|T_j| > \numAgents$ and let~$T_j = \{v_j^1,\ldots,v_j^{|T_j|}\}$. Set~$G' = G \setminus \{ v_j^i \mid i > \numAgents \}$ and continue with~$\Gamma' = (G',\agents,(M_a)_{a\in\agents},(d_a)_{a\in\agents})$.
    \end{rrule}
    \begin{claimproof}[Safeness.]
        Let~$\Gamma$ be the original game and~$\alloc$ be an envy-free allocation. We show that there is also an envy-free allocation for~$\Gamma'$. If no agent is allocated to a vertex~$v_j^i$ where~$i > \numAgents$, we can take the same allocation and clearly obtain an envy-free solution, as the distances between all non-removed vertices are still the same. Hence,~$\alloc$ uses at least one vertex~$v_j^i$ with~$j > \numAgents$. Let~$i' \leq \numAgents$ be a vertex such that~$\alloc^{-1}(v_j^{i'}) = \emptyset$. Note that such a vertex necessarily exists as there are only~$\numAgents$ agents and at least one of them is not allocated to vertices~$v_j^1,\ldots,v_j^\numAgents$. We take~$\alloc'$ such that~$\alloc'(a) = \alloc(a)$ for every~$a\in\agents$ such that~$\alloc(a) \not= v_j^i$ and~$\alloc'(\alloc^{-1}(v_j^i)) = v_j^{i'}$. Since~$v_j^i$ and~$v_j^{i'}$ are twin vertices, the cost for agent~$\alloc^{-1}(v_j^i)$ remains in~$\alloc'$ the same as it is in~$\alloc$. We can use this argument repeatedly, and we end up with~$\alloc'$ which uses only the vertices of~$v_j^1,\ldots,v_j^\numAgents$. Such~$\alloc'$ is then a valid allocation in~$G'$ and is envy-free by the previous argumentation.

        The opposite direction is clear, as if we have~$\alloc'$, an envy-free allocation for~$\Gamma'$, we can use the same allocation also for~$\Gamma$, as all vertices used by~$\alloc'$ are present in~$G$ and by adding twins, we cannot break envy-freeness.
    \end{claimproof}

    Clearly, the previous reduction rule can be applied in polynomial time. Once the rule cannot be applied anymore, we end up with a topology with~$|V(G)| \leq \vartheta + 2^\Oh{\vartheta}\cdot \numAgents$ vertices. That is, the size of~$G$ is bounded by our parameters, and we can brute-force over all possible allocations as we did in \Cref{thm:EF:XP:numAgents}. The running time of such approach is~$(\vartheta + 2^\Oh{\vartheta}\cdot \numAgents)^\Oh{\numAgents}$, which is clearly in \FPT.
\end{proof}

The next parameter we study is called \emph{neighborhood diversity}~\citep{Lampis2012}. It is a generalization of the vertex cover number. Intuitively, a graph~$G$ is of neighborhood diversity~$\delta$, if its vertices can be partitioned into~$\delta$ types~$T_1,\ldots,T_\delta$ such that a) each type induces an independent set or a clique and b) if there is an edge between vertex~$v\in T_i$ and~$u\in T_j$, then there is a complete bipartite graph between all vertices of~$T_i$ and~$T_j$. Neighborhood diversity is an important parameter in various areas, including kidney exchange~\citep{HerbetHLSS2024}, information diffusion~\citep{KnopSS2022}, social networks analysis~\citep{AzizGN2017,GaspersN2019,GrandiKLST2023}, or coalition formation games~\citep{FioravantesGM2025}. This time, the algorithm exploits the partitioning of vertices of the topology~$G$.

\begin{theorem}\label{thm:EF:nd:agents:FPT}
    When parameterized by the number of agents~$\numAgents$ and the neighborhood diversity~$\operatorname{nd}(G)$, there is an \FPT algorithm deciding whether a distance preservation game~$\Gamma$ admits an envy-free allocation.
\end{theorem}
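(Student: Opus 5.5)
We follow the same kernelization strategy as in the proof of \Cref{thm:EF:FPT:agentsPlusVC}. Compute the neighborhood diversity partition $T_1,\ldots,T_\delta$ of $G$, with $\delta=\operatorname{nd}(G)$. This can be done in polynomial time by grouping vertices into twin classes; true and false twins give cliques and independent sets, respectively. The key fact is that any two vertices $u,v$ of the same type are twins. Hence for every $w\notin\{u,v\}$ we have $\dist_G(u,w)=\dist_G(v,w)$, so the permutation of $V(G)$ exchanging $u$ and $v$ is an automorphism of $G$.

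We then apply the following reduction rule exhaustively. If some type $T_j$ has more than $\numAgents$ vertices, delete all but $\numAgents$ of them.

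Before proving safeness, we check that the rule does not change distances among the remaining vertices. Keeping at least one vertex of each nonempty type preserves adjacency between types. The one delicate case is an independent type $T_j$ of size at least~$2$: two of its vertices are at distance~$2$ via a common neighbour in an adjacent type, and that neighbour survives. A clique type keeps pairwise distance~$1$. Since $G$ is connected and every distance between kept vertices is realised by a shortest path whose internal vertices can be swapped for kept twins of the same type, $G'$ is connected and is an isometric subgraph of $G$ on the kept vertices. Here we use $\numAgents\ge 1$; the case $\numAgents=2$ is settled by \Cref{prop:EF:two_agents}, and when $\numAgents\ge2$ independent types keep at least two vertices.

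Safeness has two directions.

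\emph{From $\Gamma$ to $\Gamma'$.} Given an envy-free $\alloc$ for $\Gamma$, repeatedly move an agent placed on a deleted vertex of $T_j$ to an empty kept vertex of $T_j$. Such an empty vertex exists because $T_j$ keeps $\numAgents$ vertices and at least one agent sits outside them. Each move is realised by an automorphism of $G$, so all pairwise distances among agents, and hence all costs and all swap-costs, are unchanged. Envy-freeness therefore persists, and since $G'$ is isometric the resulting allocation is envy-free in $\Gamma'$.

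\emph{From $\Gamma'$ to $\Gamma$.} An envy-free allocation for $\Gamma'$ is also envy-free in $\Gamma$. Envy is defined only through swaps between agents, which occupy kept vertices, and those distances are identical in $G$ and $G'$. Added empty vertices are irrelevant, since envy-freeness does not involve jumps.

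After the rule is exhausted, $|V(G')|\le \delta\cdot\numAgents$. We then run the brute force of \Cref{thm:EF:XP:numAgents} in $(\delta\numAgents)^{\Oh{\numAgents}}$ time, which is \FPT.

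The main obstacle is the isometry claim, namely that distances in $G'$ equal those in $G$. We handle it by the twin-substitution argument above, treating the independent-type distance-$2$ case with care.
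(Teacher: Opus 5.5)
Your proof is correct, but it takes a different route from the paper. The paper never modifies the topology. It enumerates all $\operatorname{nd}(G)^{|A|}$ maps $\mu$ from agents to types and discards those exceeding a type's capacity. For each remaining $\mu$ it places the agents of $\mu^{-1}(T_j)$ arbitrarily inside $T_j$ and tests envy-freeness once. Its correctness, which the paper calls immediate, rests on the same twin symmetry you use, in a stronger form: every permutation of $V(G)$ that preserves each type is an automorphism. Hence all allocations with the same agent-to-type map are simultaneously envy-free or not, and one representative per map suffices.

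You instead turn the symmetry into a kernel: you shrink each type to $|A|$ vertices, exactly as in the vertex-cover theorem, and then brute-force. This gives you an explicit reduced topology on at most $\operatorname{nd}(G)\cdot|A|$ vertices and treats both results uniformly. It has two costs. First, you need the isometry lemma, which the paper avoids entirely because it deletes no vertices. Second, your running time $(\operatorname{nd}(G)\cdot|A|)^{\mathcal{O}(|A|)}$ is slightly worse than the paper's $\operatorname{nd}(G)^{|A|}\cdot|V(G)|^{\mathcal{O}(1)}$, since you enumerate within-type arrangements that the paper treats as equivalent.

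Your twin-substitution argument for isometry is sound, and it needs only one surviving vertex per type. On a shortest path with at least two edges, each internal vertex lies in a type different from every other vertex of the path, since otherwise twin adjacency gives a shortcut. So your remarks about the case $|A|=2$ and about keeping two vertices of independent types are unnecessary, though harmless.
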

\begin{proof}
    Let~$H$ be a type-graph of~$G$ with vertices~$T_1,\ldots,T_{\operatorname{nd}(G)}$. Observe that we can assume that~$H$ is given on the input, as it can be easily computed in polynomial time~\cite[Theorem 7]{Lampis2012}. Slightly abusing the notation, we use~$|T_j|$,~$j\in[\operatorname{nd}(G)]$, to denote the number of vertices of graph~$G$ of type~$T_j$.

    Our algorithm tries all possible mappings of agents to the vertices of~$H$. Note that there are~$|V(H)|^\numAgents = \operatorname{nd(G)}^\numAgents$ many of them. For each such mapping~$\mu$, we check that for every~$j\in[\operatorname{nd}(G)]$, it holds that~$|\mu^{-1}(T_j)| \leq |T_j|$. If this condition is not satisfied, then the number of agents allocated to the vertices of type~$T_j$ exceeds the number of vertices available for this type, so we reject the possibility and continue with another mapping. On the other hand, if the condition is satisfied, we simply construct an allocation~$\alloc$ such that for every~$j\in[\operatorname{nd}(G)]$, we arbitrarily allocate agents in~$\mu^{-1}(T_j)$ to vertices of type~$T_j$ (by the previous condition, there are enough of such vertices) and check the envy-freeness of such an allocation in polynomial time.

    To wrap up, the running time of the algorithm is~$\operatorname{nd(G)}^\numAgents \cdot |V(G)|^\Oh{1}$, which is clearly in \FPT for the combined parameter the neighborhood diversity of~$G$ and the number of agents. The correctness of the algorithm is immediate.
\end{proof}

Note that neighborhood diversity is also a more general parameter than the distance to clique, so the previous algorithm also yields the following, complementing \Cref{thm:EF:star_clique}.

\begin{corollary}
\label{cor:EF:distToClique:agents:FPT}
    When parameterized by the number of agents~$\numAgents$ and the distance to clique~$c$, there is an \FPT algorithm deciding whether a distance preservation game~$\Gamma$ admits an envy-free allocation.
\end{corollary}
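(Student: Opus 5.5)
The plan is to reduce the corollary to \Cref{thm:EF:nd:agents:FPT}. The only thing to check is that the neighborhood diversity of~$G$ is bounded by a function of the distance to clique~$c$. Once that holds, the algorithm of \Cref{thm:EF:nd:agents:FPT} runs in time~$f(c)^{\numAgents}\cdot|V(G)|^{\Oh{1}}$, which is \FPT for the combined parameter~$\numAgents + c$.

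Let~$D\subseteq V(G)$ be a set of size~$c$ such that~$K = V(G)\setminus D$ induces a clique. We can compute~$D$ in \FPT time in~$c$: branch on a non-adjacent pair of vertices, one of which must lie in~$D$, which gives a~$2^c$ search tree. Alternatively, we need not compute~$D$ at all, since~$\operatorname{nd}(G)$ and the type partition can be obtained in polynomial time as in the proof of \Cref{thm:EF:nd:agents:FPT}.

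To bound~$\operatorname{nd}(G)$, partition the vertices of~$K$ according to their neighborhood in~$D$. This yields at most~$2^c$ classes.
\begin{itemize}
    \item Two vertices~$u,v\in K$ in the same class are adjacent to each other, since~$K$ is a clique.
    \item They are adjacent to the same vertices of~$D$, by the choice of the class.
    \item They are adjacent to all other vertices of~$K$.
\end{itemize}
Hence~$N[u]=N[v]$, so~$u$ and~$v$ are true twins, and each class induces a clique. We make each vertex of~$D$ its own singleton type. Two distinct types are then either completely joined or completely non-adjacent, because adjacency between a class of~$K$ and another class of~$K$ is always complete, and adjacency between a class of~$K$ and a vertex of~$D$ is uniform by definition of the class. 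Therefore~$\operatorname{nd}(G)\le 2^c + c$.

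Plugging this bound into \Cref{thm:EF:nd:agents:FPT} gives running time~$(2^c+c)^{\numAgents}\cdot|V(G)|^{\Oh{1}}$, which proves the corollary. There is no real obstacle here; the only step requiring care is verifying that the type partition satisfies both conditions of neighborhood diversity, which the twin argument above handles.
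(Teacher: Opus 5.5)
Your proposal is correct and takes the same route as the paper, which obtains the corollary from \Cref{thm:EF:nd:agents:FPT} by observing that neighborhood diversity is bounded by a function of the distance to clique. Your twin-class argument makes the bound $\operatorname{nd}(G)\le 2^c+c$ explicit, which the paper only asserts.
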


We can further strengthen the previous results to a more general class of graphs. Specifically, we study topologies of bounded modular-width~\citep{GajarskyLO2013}. This parameter, which is small for dense graphs and is more restrictive than clique-width, comes with a convenient graph decomposition that can be used in the design of dynamic programming algorithms, similarly to standard algorithms over the nice tree decomposition related to the celebrated treewidth. To the best of our knowledge, we are the first to study this parameter in the area of algorithmic game theory and computational social choice.

\begin{theorem}
\label{thm:EF:mw:agents:FPT}
    When parameterized by the number of agents~$\numAgents$ and the modular-width~$\operatorname{mw}(G)$, there is an \FPT algorithm deciding whether a distance preservation game~$\Gamma$ admits an envy-free allocation.
\end{theorem}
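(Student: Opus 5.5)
Our plan is a kernelization in the spirit of \Cref{thm:EF:FPT:agentsPlusVC}, now using the modular decomposition instead of twin classes. Recall that a graph of modular-width $w$ is either a single vertex, a disjoint union, a join, or a substitution $H[G_1,\ldots,G_k]$ with $k\le w$. Here each $G_i$ is a module, meaning every vertex outside $G_i$ is adjacent to all or none of $V(G_i)$. We assume the decomposition is given, since it is computable in linear time. The topology $G$ is connected, so the root operation is a join or a substitution with connected quotient graph $H$. The key property we exploit is that for every module $X\subsetneq V(G)$ that is not all of $G$, and for $u,v\in X$, the distance $\dist_G(u,v)$ is $1$ if $uv\in E(G)$ and $2$ otherwise. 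The value $2$ holds because $X$ has an outside neighbour adjacent to all of $X$ (connectivity of $G$). Moreover, for a vertex $y\notin X$, the distance $\dist_G(y,x)$ is the same for all $x\in X$, because any shortest path from $y$ can be rerouted through the common outside neighbours of $X$.

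We will therefore reduce the graph by processing the top-level modules $G_1,\ldots,G_k$ of the root. Distances inside each $G_i$ are determined only by adjacency inside $G_i$, together with the fact that non-adjacent pairs are at distance $2$. Distances between different modules $G_i,G_j$ are given by $\dist_H(i,j)$; the only exception is when $i=j$, which falls back to the previous case. Hence an allocation is described by two pieces of information. The first is which agents go into which module; there are at most $k^{\numAgents}\le w^{\numAgents}$ choices. The second is the graph induced on the at most $\numAgents$ occupied vertices inside each module, viewed only through adjacency. Envy-freeness, however, also involves empty vertices? No: envy here concerns only swaps between agents, so only the occupied vertices matter. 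The problem therefore reduces to deciding, for each module $G_i$ and each set $S_i$ of agents assigned to it, which ``labelled induced subgraphs'' on $|S_i|$ vertices $G_i$ contains. There are at most $2^{\binom{\numAgents}{2}}$ such subgraphs up to labelling by agents, so the choice is bounded by a function of $\numAgents$. Once we guess, for every module, the induced graph on its agents, the whole distance matrix among the agents is fixed, and we can check envy-freeness directly.

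The main obstacle is deciding whether a graph $F$ on at most $\numAgents$ vertices is an induced subgraph of $G_i$, which is hard in general. To handle it, we run a bottom-up dynamic program over the modular decomposition. For each node $X$ of the decomposition, we store the set $\mathcal{F}(X)$ of all graphs on at most $\numAgents$ vertices, labelled by subsets of agents, that appear as induced subgraphs of $G[X]$. At a leaf, $\mathcal{F}(X)$ is trivial. At a node $X=H_X[X_1,\ldots,X_{k'}]$, a graph $F$ lies in $\mathcal{F}(X)$ if and only if there is a partition of $V(F)$ into parts $P_1,\ldots,P_{k'}$ with three properties: $F[P_j]\in\mathcal{F}(X_j)$ for every $j$; for $j\ne j'$, all pairs between $P_j$ and $P_{j'}$ are edges exactly when $jj'\in E(H_X)$; and empty parts are allowed. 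Enumerating partitions costs at most $w^{\numAgents}$ per graph, and the number of graphs is $2^{\Oh{\numAgents^2}}$ times the number of labellings, which is $\numAgents^{\Oh{\numAgents}}$. Each node is therefore processed in $f(\numAgents,w)$ time, and there are polynomially many nodes.

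At the root we perform the same enumeration, but instead of mere membership we evaluate envy-freeness. The distances are computed as described above: $\dist_H$ between parts, and $1$ or $2$ within a part according to $F[P_j]$. We accept if some combination is envy-free. Correctness follows because realizability of $F[P_j]$ inside $G_j$ is exactly what is needed for an injective placement, and the module properties guarantee that the resulting distances are as computed. One special case remains: if the root is a join, $H$ is a clique and all distances are at most $2$, which is covered by the same argument.
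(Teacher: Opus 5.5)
Your proposal is correct and follows essentially the same route as the paper. Both run a bottom-up dynamic program over the modular decomposition in which each non-root node records which agent-labelled patterns are realizable: your labelled induced subgraphs are exactly the paper's tables $D_\nu[X,\overline{\dist}]$ with values in $\{1,2\}$, an edge meaning distance $1$. Children are combined by enumerating partitions consistent with the quotient graph, and at the root the full distance matrix is taken from $\dist_H$ between top-level modules before envy-freeness is checked directly. Calling the approach a ``kernelization'' in your opening sentence is a misnomer, but the algorithm you actually describe is this DP, and your justifications of the in-module and cross-module distance facts are sound.
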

\begin{proof}
    We assume that the graph~$G$ is connected.
    This implies that in the root node of the decomposition, every vertex of the associated type-graph has degree at least~$1$.
    Therefore, in any other node of the decomposition, the distance of the vertices assigned to the nodes is at most~$2$ (since in the root node we join both vertices with at least one common neighbor).

    Furthermore, we observe that the distance between any two vertices of~$G$ is bounded by~$\operatorname{mw}(G)$.
    If the two vertices belong to the same node in the root node of the decomposition, we are done.
    Otherwise, we observe that the two vertices are connected by a path that is a subgraph of the root node type-graph; thus, it is of length at most~$\operatorname{mw}(G)$.

    The algorithm will be a dynamic programming style working on the decomposition.
    Before we dive into it in full detail, we begin with intuition.
    We will store a table with boolean entries for every node~$\nu$ except for the root node of the modular decomposition of~$G$:
    \[
        D_\nu\left[X \subseteq \agents, \overline{\dist}\colon X \times X \to \{0,1,2\} \right] \,.
    \]
    The semantics is that we want to assign agents in~$X$ to this node and we further specify what is the prescribed distance between the vertices to which we allocate the agents to -- the value~$\overline{\dist}(a,b)$ (we note that this function is symmetric and~$\overline{\dist}(a,a) = 0$ holds for every~$a \in X$ and for distinct~$a,b \in X$ it holds~$\overline{\dist}(a,b) \in \{1,2\}$).
    In the node~$\nu$ with the type-graph~$T_\nu$ with vertex set~$[\eta]$, given~$(X, \overline{\dist})$, we try every possible splitting~$(X_1, \ldots, X_\eta)$ of~$X$ on the nodes in~$T_\nu$ as well as the following checking that the prescribed distances comply with the distances implied by the partition~$(X_1, \ldots, X_\eta)$.
    We then use the stored values~$T_i[X_i, \overline{\dist}|_{X_i \times X_i}]$ and if all of them are \texttt{true}, we set~$D_\nu[X, \overline{\dist}]$ to \texttt{true}, otherwise we set it to \texttt{false}.
    Then, in the root node, we try all partitions of~$\agents$ on the node of the type graph and all combinations of distance functions for each node.
    For every such combination for which the appropriate table entry is \texttt{true}, we
    \begin{enumerate}
        \item compute the cost for each agent~$a \in \agents$ and
        \item check if the resulting allocation is envy-free.
    \end{enumerate}
    Note that the above computation is done on a graph with the number of vertices bounded in~$\numAgents$ (since we only need vertices with assigned agents and possibly twice as many vertices to ``subdivide edges'' and acquire distances~2).

    \smallskip
    \noindent\textbf{Leaf node.}
    Recall that a leaf node is either a clique or an independent set.
    Furthermore, the observation from the beginning of the proof yields that the distances in an independent set are in fact all~$2$.
    Therefore we set~$D_{\nu}[X, \overline{\dist}] = \texttt{true}$ if and only if there are at least~$|X|$ vertices and
    \begin{description}
        \item[clique]~$\overline{\dist}(a,b) = 1$ for distinct~$a,b \in X$ or
        \item[indep.]~$\overline{\dist}(a,b) = 2$ for distinct~$a,b \in X$.
    \end{description}
    This step of the algorithm is correct, since we cannot have different distances and we cannot assign more agents than there are available vertices.
    Furthermore, since all the vertices are indistinguishable, it does not matter which particular vertices we have selected and thus the table is set to \texttt{true} if and only if a prescribed allocation exists.

    \smallskip
    \noindent\textbf{Internal node.}
    Given a type-graph~$T_\nu$ with vertex set~$[\eta]$ for an internal node~$\nu$,~$X \subseteq \agents$, and a symmetric function~$\overline{\dist}\colon X \times X \to \{0,1,2\}$ we proceed as follows.
    Recall that~$\eta \le \operatorname{mw}(G)$.
    We generate all partitions~$\mathcal{X} = (X_1, \ldots, X_\eta)$ of the set~$X$.
    For each such partition, we compute
    \begin{align*}
        T_\nu[X, \overline{\dist}] &:= \bigwedge_{i \in [\eta]} T_i[X_i, \overline{\dist}|_{X_i \times X_i}] \wedge
        \bigwedge_{\substack{i,j \in \binom{\eta}{2}, \\ a \in X_i, b \in X_j}} \min\{\dist(i,j),2\} = \overline{\dist}(a,b)
        \,.
    \end{align*}
    As for the correctness if there is an allocation of~$X$ to the graph represented by~$T_\nu$ complying with~$\overline{\dist}$, then there exists a partition~$\mathcal{X}$ and an allocation for~$\mathcal{X}$ that complies with~$\overline{\dist}$
    \begin{enumerate}
        \item  on the type-graph-level which we check using~$\bigwedge_{\substack{i,j \in \binom{\eta}{2}, \\ a \in X_i, b \in X_j}} \min\{\dist(i,j),2\} = \overline{\dist}(a,b)$ as well as
        \item on the intra-type-level that is checked by~$T_i[X_i, \overline{\dist}|_{X_i \times X_i}]$.
    \end{enumerate}
    The other direction follows from the same argumentation -- if for partition~$\mathcal{X}$ there is an assignment that complies with~$\overline{\dist}$ on both type-graph and intra-node-level, then this assignment complies with~$\overline{\dist}$ in the graph generated by the decomposition.

    \smallskip
    \noindent\textbf{Root node.}
    Let~$T_\nu$ be the type-graph for the root node of the modular decomposition of~$G$ and let~$[\eta]$ be its vertex set.
    We try all partitions~$\mathcal{X} = (X_1, \ldots, X_\eta)$ of the set~$\agents$ and combinations of all symmetric functions~$\overline{\dist}_i \colon X_i \times X_i \to \{0,1,2\}$ for~$i\in [\eta]$.
    Fix~$\mathcal{X} = (X_1, \ldots, X_\eta)$ and a collection~$(\overline{\dist}_i)_{i=1}^\eta$.
    Note that this allows us to compute the distance for every pair of distinct agents~$a,b \in \agents$.
    Either~$a$ and~$b$ belong to the same part~$X_i$, in which case we use~$\overline{\dist}_i(a,b)$ or~$a \in X_i$ and~$b \in X_j$ in which case we use~$\dist(i,j)$ in~$T_\nu$.﻿﻿
    Given such a~$\dist$ function, we can compute for each agent~$a \in \agents$ their cost from the definition.
    Furthermore, we can calculate for every pair of distinct agents~$a,b \in \agents$ if~$a$ envies the position of~$b$.
    Now, if for at least one trial we succeed, we return \Yes, otherwise we return \No.
    In the opposite direction, assume that there exists an envy-free allocation of~$\agents$ in~$G$.
    Then, this allocation naturally yields a partition of~$\agents$ and distance functions for the actual nodes and vertices.
    Recall that within the nodes we can have distance at most~2 and thus, there exists a selection of functions~$(\overline{\dist}_i)_{i=1}^\eta$ compliant with this solution.
    If this is the case, all associated DP tables must be set to \texttt{true}; consequently, our algorithm returns \Yes.
    If there is no envy-free allocation, then for all envy-free distance functions~$\overline{\dist}_i$ there is at least one for which we cannot verify the existence of an allocation compliant with it.
    Thus, the algorithm returns \No.

    The time complexity of the above algorithm comes from the computation in each of the nodes.
    The computation performed on the leaf nodes is clearly~$\Oh{\agents}$.
    In every internal node, as well as in the root node, we perform an exhaustive search for partition~$\mathcal{X}$ and distance functions~$\overline{\dist}$.
    There are at most~$|V(G)|$ such nodes,~$\operatorname{mw}(G)^{\numAgents}$ such partitions, and~$2^{\binom{\numAgents}{2}}$ such functions (these only assign~$\{ 1,2 \}$ and are symmetric).
    Overall, the running time can be upper-bounded by
    \begin{align*}
        |V(G)| \cdot \operatorname{mw}(G)^{\numAgents} \cdot 2^{\binom{\numAgents}{2}}
        &\le
        |V(G)| \cdot \operatorname{mw}(G)^{\numAgents} \cdot 2^{\numAgents^2} \\
        &\le
        |V(G)| \cdot 2^{\Oh{\numAgents \cdot (\log \operatorname{mw}(G) + \numAgents)}} \,,
    \end{align*}
    yielding an \FPT time with respect to~$\operatorname{mw}(G)$ and~$\numAgents$.
\end{proof}

It remains unclear how much further we can push the tractability boundary. That is, it would be interesting to provide the complexity classification of the problem with respect to, e.g., tree-width or clique-width of the topology.

We conclude this subsection with one more positive result. This time, we combine the number of agents with the diameter of the topology. Note that small-world property (and, therefore, bounded diameter) has been observed, both theoretically and empirically, in many real-world networks~\citep{WattsS1998,Walsh1999}.

The algorithm is again based on kernelization. However, this time, we use a surprising connection between our problem and a famous theorem of Ramsey, which is very well-known from combinatorics.

\begin{lemma}[Ramsey's Theorem (see, e.g., \citep{erdos1935combinatorial} or \citep{balister2024upper}]\label{lem:ramsey}
    Let~$G$ be a complete graph, whose edges are colored with~$q$ different colors, and $k\in\mathbb{N}$ be an integer. If~$|V(G)|\ge q^{qk}$, then there exists~$S\subseteq V(G)$ such that~$|S|\ge k$ and all edges between the vertices in~$S$ have the same color. 
\end{lemma}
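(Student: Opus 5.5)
We would prove this statement by the standard ``neighborhood-chasing'' argument (Erd\H{o}s--Szekeres style), not by the sharper bounds from the literature. The trivial cases come first. If~$q = 1$, then~$|V(G)| \ge 1^{k} = 1$ is not enough by itself, so we treat it directly: all edges share one color and~$S = V(G)$ works, provided~$|V(G)| \ge k$. The degenerate case~$k \le 1$ is immediate, since any single vertex works. From now on we assume~$q \ge 2$ and~$k \ge 2$, and set~$n = |V(G)| \ge q^{qk}$.

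The core step builds greedily a sequence of vertices~$v_1, v_2, \ldots, v_L$ with~$L = q(k-1)+1$, together with nested sets~$V(G) = W_0 \supseteq W_1 \supseteq \cdots$. At step~$t$ we pick any~$v_{t+1} \in W_t$. We split~$W_t \setminus \{v_{t+1}\}$ into~$q$ classes according to the color of the edge joining each vertex to~$v_{t+1}$, and we let~$W_{t+1}$ be a largest class. We record its color as~$c(v_{t+1})$. By construction, for every~$i < j$ the vertex~$v_j$ lies in~$W_i$, so the edge~$v_iv_j$ has color~$c(v_i)$. In other words, the color of an edge is determined by its earlier endpoint.

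Next we check that the process survives~$L$ steps, i.e.\ that~$W_t \neq \emptyset$ for all~$t \le L-1$. We have~$|W_{t+1}| \ge (|W_t| - 1)/q$. We will show by induction that~$|W_t| \ge q^{qk-t} - 1$. The inductive step needs~$(q^{qk-t} - 2)/q \ge q^{qk-t-1} - 1$, which is equivalent to~$2/q \le 1$ and hence holds for~$q \ge 2$. For~$t \le L-1 = q(k-1)$ this gives~$|W_t| \ge q^{q} - 1 \ge 1$, so the vertex~$v_{t+1}$ always exists. This bookkeeping of the ``$-1$'' losses is the only step needing any care. It is exactly why the crude bound~$q^{qk}$, rather than~$q^{q(k-1)}$, is convenient: the surplus factor absorbs the losses.

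Finally, we apply the pigeonhole principle to the labels~$c(v_1), \ldots, c(v_L)$. Since~$L = q(k-1)+1$, some color~$c$ is the label of at least~$k$ of the~$v_i$. Let~$S$ be the set of these vertices. For any two of them~$v_i, v_j$ with~$i < j$, the edge~$v_iv_j$ has color~$c(v_i) = c$. Hence~$S$ is monochromatic and~$|S| \ge k$, which proves \Cref{lem:ramsey}.
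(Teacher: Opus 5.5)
Your proof is correct for $q\ge 2$ (and trivially for $k\le 1$). The paper gives no proof of this lemma; it simply cites Erd\H{o}s--Szekeres and Balister et al. Your write-up is therefore a self-contained replacement for that citation, and it is the classical neighborhood-chasing argument behind bounds of this exponential form. The citation buys the authors only brevity, since the sharper recent bounds are not needed for the FPT application.

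Your $q=1$ caveat should be stated more bluntly. There the lemma as written is false: $K_2$ with its single edge colored and $k=3$ satisfies $|V(G)|\ge 1^{3}$ but contains no $3$-vertex set. So the extra hypothesis $|V(G)|\ge k$ is genuinely needed. It is harmless in \Cref{thm:EF:diameter:agents:FPT}, where $q$ is the diameter, $k=|A|$, and the model guarantees $|V(G)|\ge |A|$.

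A minor point: the ``$-1$'' losses are not what forces the exponent $qk$. By integrality, $|W_t|\ge q^{m}$ with $m\ge 1$ already gives $|W_{t+1}|\ge\lceil (q^{m}-1)/q\rceil=q^{m-1}$. Hence $q^{q(k-1)}$ vertices would suffice for the $q(k-1)+1$ steps.
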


Informally speaking, Ramsey's theorem states that for every integer~$k$, each large enough edge-colored graph~$G$ contains a monochromatic clique of size~$k$. We use this as follows. If our graph is large enough with respect to our parameters, using the previous lemma on an auxiliary graph, we find a set of vertices of size~$\numAgents$ such that the vertices are equidistant. Then, any allocation of the agents on such locations is an envy-free solution. Otherwise, the size of our graph is bounded with respect to the parameters, and we can again brute-force over all possible allocations.

\begin{theorem}
\label{thm:EF:diameter:agents:FPT}
     When parameterized by the number of agents~$\numAgents$ and the diameter of the topology, there is an \FPT algorithm deciding whether a distance preservation game~$\Gamma$ admits an envy-free allocation.
\end{theorem}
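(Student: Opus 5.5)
The plan is to prove a dichotomy on $|V(G)|$, using \Cref{lem:ramsey} on the distance structure of the topology. Let $D$ be the diameter of $G$. Build the auxiliary complete graph $K$ on vertex set $V(G)$ and color each edge $\{u,v\}$ with the color $\dist_G(u,v)\in[D]$. This uses $q = D$ colors. Apply \Cref{lem:ramsey} with $k=\numAgents$.

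\textbf{Large case.} If $|V(G)|\ge D^{D\cdot\numAgents}$, the lemma gives a set $S\subseteq V(G)$ with $|S|\ge\numAgents$ whose pairwise distances all equal some value $t$. Allocate the agents injectively and arbitrarily to vertices of $S$, and call this allocation $\alloc$. We claim $\alloc$ is envy-free. Fix agents $a,b$ and look at the swap $\alloc^{a\leftrightarrow b}$.

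\begin{itemize}
\item For every agent $c\in\relSet_a\setminus\{b\}$, both $\alloc(a)$ and $\alloc(b)$ lie in $S$, as does $\alloc(c)$. So the distance from $a$ to $c$ is $t$ before and after the swap.
\item For $c=b$, the distance between $a$ and $b$ is unchanged by symmetry of $\dist_G$.
\end{itemize}

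Hence $\cost(a,\alloc)=\cost(a,\alloc^{a\leftrightarrow b})$, so no agent envies another and the answer is \Yes. In this case the algorithm answers \Yes immediately, without needing to find $S$ explicitly. The decision problem only asks for existence, and the size test alone certifies it.

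\textbf{Small case.} Otherwise $|V(G)|< D^{D\numAgents}$. Run the brute force of \Cref{thm:EF:XP:numAgents}, which takes time $|V(G)|^{\Oh{\numAgents}}\le D^{\Oh{D\numAgents^2}}$. This is a function of the parameters only, plus polynomial overhead for checking envy-freeness.

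Computing the diameter and comparing $|V(G)|$ to the threshold takes polynomial time, so the overall algorithm is \FPT in $\numAgents + D$. The cases $D\le 1$ are handled separately: $D=1$ means $G$ is a clique, which is covered by \Cref{thm:EF:star_clique}.

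The only delicate point is matching the exact form of \Cref{lem:ramsey}: the number of colors is the diameter, and the monochromatic set must have size at least $\numAgents$. I expect no real obstacle beyond this. The key insight is simply that equidistant positions make every swap cost-neutral.
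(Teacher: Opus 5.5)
Your proposal is correct and matches the paper's proof. It uses the same dichotomy at the threshold $D^{D\numAgents}$: brute force over all allocations in the small case, and in the large case Ramsey's theorem on the distance-colored complete graph, which yields an equidistant set on which every allocation is envy-free. Your separate handling of $D\le 1$ is a harmless extra precaution: with a single color the lemma as stated is degenerate, but $|V(G)|\ge\numAgents$ makes its conclusion hold anyway.
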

\begin{proof}
    Let~$\Gamma=(\agents,(\relSet_a)_{a\in\agents},(\dFn_a)_{a\in\agents},G)$ be the input instance and let~$d$ be the diameter of~$G$. If~$|V(G)|\le d^{d\numAgents}$, then we can try all~$\binom{d^{d\numAgents}}{\numAgents}\le d^{d\numAgents^2}$ possible subset~$S$ of vertices of size~$\numAgents$ and all~$\numAgents!$ many allocation of agents to vertices in~$S$. For each allocation, we can in polynomial time compute all pairwise distances and the cost for each agent and check whether the allocation is  envy-free. Hence, if~$|V(G)|\le d^{d\numAgents}$, we can decide whether~$\Gamma$ admits an  envy-free allocation in \FPT time.
    
    Now, note that if for an allocation~$\alloc$ the distance between any pair of occupied vertices is the same, then for any pair of agent~$a,b\in \agents$ we have~$\cost(a,\alloc) = \cost(a, \alloc^{a\leftrightarrow b})$ and~$\alloc$ is envy-free. We show that if~$|V(G)| > d^{d\numAgents}$, then \Cref{lem:ramsey} implies that~$G$ has a subset of~$\numAgents$ many vertices with the same pairwise distance and hence~$\Gamma$ admits an envy-free allocation. 
    Let us define an auxiliary edge-colored complete graph~$G^{\texttt{col}}$ with the color-set~$[d]$ as follows:~$V(G^{\texttt{col}}) = V(G)$,~$E(G^{\texttt{col}}) = \{ab\mid a,b\in V(G)\text{ and }a\neq b\}$ (that is~$G^{\texttt{col}}$ is a complete graph on the same vertex set as~$G$). We color the edge~$e = ab\in E(G^{\texttt{col}})$ by the color~$\dist_G(a,b)$. By \Cref{lem:ramsey},~$G^{\texttt{col}}$ contains a set of vertices~$S$ of size~$\numAgents$ such that all edges between pairs of vertices in~$S$ have the same color. By the definition of~$G^{\texttt{col}}$, this implies that all pairs of vertices in~$S$ are at the same distance in~$G$. But, as we already observed, it follows that if we allocate agents on~$S$ arbitrarily, we get an envy-free allocation. 
\end{proof}

\subsection{Restricted Preferences}

The last dimension of the problem we can exploit in order to obtain tractable algorithms are agent's preferences. One natural restriction could be a bound on the size of relationship sets. And indeed, if each relationship set is empty, an arbitrary allocation is a solution. However, even though it was not stated formally, the construction used in the proof of \Cref{thm:EF:NPh:path} is so that the size of each relationship set is at most one. Therefore, this direction is not promising.

Consequently, we focus on even more structured preference graphs. In particular, in our first result, we assume that~$\Gpref$ is an in-star. This means that there is one ``super-star'' agent~$a^*$ and for each agent~$b\in\agents$ we have~$\relSet_b = \{a^*\}$ or~$\relSet_b = \emptyset$. The algorithm is, relatively to the restriction, surprisingly non-trivial dynamic programming.

\begin{theorem}\label{thm:EF:Gpref:instar:poly}
    If~$\Gpref$ is an in-star, there is a polynomial time algorithm deciding whether a distance preservation game~$\Gamma$ admits an envy-free allocation
\end{theorem}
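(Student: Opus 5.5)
We guess the vertex $v^* = \alloc(a^*)$, trying all $|V(G)|$ options, and then decide whether the remaining agents can be placed. Since $\Gpref$ is an in-star, the super-star $a^*$ is indifferent, so only the agents $b$ with $\relSet_b=\{a^*\}$ can envy anyone. The cost of such an agent depends only on $\dist_G(\alloc(b),v^*)$. Call $\ell(v)=\dist_G(v,v^*)$ the \emph{level} of vertex $v$, and let $L_t$ be the set of vertices on level $t$. For each level $t$, let $n_t=|L_t|$; computing these takes one BFS.

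We first characterize envy-freeness once $v^*$ is fixed. Consider an interested agent $b$ with ideal distance $\delta_b$ at level $t$.

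If $b$ swaps with an agent $c\neq a^*$ at level $t'$, the new cost of $b$ is $|\delta_b-t'|$. So $b$ does not envy $c$ iff $|\delta_b-t|\le|\delta_b-t'|$.

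If $b$ swaps with $a^*$ itself, then $b$ moves to $v^*$ and $a^*$ moves to $\alloc(b)$, so the distance is still $t$ and nothing changes.

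Hence $\alloc$ is envy-free iff every interested agent's level is among the levels closest to its ideal distance, where we only consider levels occupied by at least one agent other than $a^*$.

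The problem thus reduces to a purely combinatorial assignment. Choose a set $S$ of occupied levels (excluding level $0$) together with occupancy counts $1\le k_t\le n_t$ for $t\in S$ and $\sum_t k_t=\numAgents-1$. Then assign each interested agent $b$ to a level in $S$ that minimizes $|\delta_b-t|$ over $S$, and assign the indifferent agents to levels in $S$ arbitrarily, matching the counts.

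The key structural fact is that, for fixed $S$, the optimal level for $b$ is the element of $S$ nearest to $\delta_b$; there are at most two such elements, in case of a tie. So we process levels in increasing order with a dynamic program. The state is $(t,p,r)$: $t$ is the last chosen occupied level, $p$ is the number of agents already placed, and $r$ records which interested agents have been committed.

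Because the interested agents can be sorted by $\delta_b$, the set of agents whose closest level is determined by consecutive chosen levels $t<t'$ is an interval. Specifically, it consists of the agents with $\delta_b\in(t,t']$, which split at the midpoint $(t+t')/2$; agents exactly at the midpoint are free to go to either level. So we do not need a subset $r$; counts suffice.

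The transition from chosen level $t$ to the next chosen level $t'$ must check feasibility of the counts. For level $t$, its forced agents are those with $\delta_b$ between the midpoint to the previous level and the midpoint to $t'$. The number of forced agents must be at most $n_t$. Tied agents and indifferent agents are flexible. We additionally handle the first chosen level, which absorbs all $\delta_b$ below it, and the last chosen level, which absorbs all $\delta_b$ above it.

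To keep the DP polynomial, we carry the following information:
\begin{itemize}
\item the pending number of tied agents at the current midpoint;
\item the number of indifferent agents used so far.
\end{itemize}
A level is feasible if its forced agents plus however many flexible agents we route to it is between $1$ and $n_t$. Indifferent agents give global slack: they can fill any level up to its capacity, so we just need total capacity $\sum_{t\in S}n_t\ge\numAgents-1$ and each level to receive at least one agent. Whether a level receives at least one agent can be tracked by counting.

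The main obstacle I expect is the bookkeeping of ties. An agent equidistant from two consecutive chosen levels may go to either one, and this interacts with both the capacity constraint and the nonemptiness constraint. I would first try a greedy rule: send tied agents to the lower level unless its capacity is exceeded. If that fails, I would add the number of tied agents deferred to the next level as an extra DP coordinate (bounded by $\numAgents$), which keeps everything polynomial: $\Oh{|V|^2\numAgents^2}$ states per choice of $v^*$.

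Correctness follows from the characterization above together with the fact that vertices within a level are interchangeable for every cost. Once the level counts are fixed, any concrete placement realizes the same distances for all cost computations.
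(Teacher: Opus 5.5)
Your proposal is correct and follows essentially the same route as the paper. You guess $\alloc(a^*)$, layer the graph by BFS around it, and use the fact that every interested agent must sit on a layer that is nearest to its ideal distance among the layers occupied by agents other than $a^*$ (this is exactly the paper's structural claim); you then decide feasibility with a polynomial dynamic program over the layers that only tracks counts.

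The difference is only in the bookkeeping. You index the DP by consecutive occupied levels and resolve ties by the midpoint rule with an explicit tie-count coordinate; that coordinate already makes the DP exact, so the greedy shortcut is unnecessary. The paper instead indexes by the ideal-distance groups $\agents_j$, the current layer, and the number of empty vertices in it, and treats ties as a split of $\agents_j$ between $L_{j-\delta}$ and $L_{j+\delta}$. Your formulation makes the recurrence and its correctness noticeably more transparent.
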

\begin{proof}
    \newcommand{\DP}{\operatorname{DP}}
    Let~$a^*$ be the only agent in which other agents are interested. First, we guess the vertex to which this agent is allocated in a hypothetical solution~$\alloc$. Let this vertex be called~$v$.

    Next, we run the Breadth-First Search (BFS) algorithm from the vertex~$v$ to partition the rest of the vertices of the topology into layers~$L_1,\ldots,L_\ell$ according to their distance from~$v$. Furthermore, we partition the remaining agents into~$\ell+1$ potentially empty parts~$\agents_0,\ldots,\agents_\ell$ such that each~$\agents_0$ contains all agents with empty relationship set,~$\agents_j$,~$j\in[\ell-1]$, contains all agents~$a\in\agents$ with~$\dFn_a(a^*) = j$, and finally~$\agents_\ell$ contains all agents with~$\dFn_a(a^*) \geq \ell$. Before we introduce the core sub-procedure of our algorithm, we show an auxiliary lemma that is crucial for its correctness.

    \begin{claim}\label{thm:EF:Gpref:instar:poly:solStructure}
        Let~$\agents_j$ be a part and~$a\in\agents_j$ be an agent of this part so that~$\alloc(a) = L_{j+x}$. Then, there is no agent~$b\in\agents$ such that~$\alloc(b) \in L_{j-|x|+1} \cup \cdots\cup L_{j+|x|-1}$.
    \end{claim}
    \begin{claimproof}
        Let~$a$ be an agent of~$\agents_j$ allocated to the farthest layer of~$L_j$, say~$L_{j+x}$ from some~$x\not= 0$, and, for the sake of contradiction, assume that there is an agent~$b\in\agents$ such that~$\alloc(b) \in L_{j+y}$, where~$|y| < |x|$. Then, we have~$\cost_a(\alloc) = |x|$. However,~$\cost_a(\alloc^{a\leftrightarrow b}) = |y|$, which means that~$\cost_a(\alloc) > \cost_a(\alloc^{a\leftrightarrow b})$. This contradicts that~$\alloc$ is an envy-free allocation.
    \end{claimproof}

    Intuitively, the claim settles that the cost of all agents of the same~$\agents_j$ is the same, and, moreover, there cannot be an agent allocated closer to~$L_j$ than any agent of~$\agents_j$. Thanks to this property of outcomes, we can find a solution, if one exists, using dynamic programming over the~$A_j$'s and, simultaneously, over the BFS layers.

    Formally, we have a dynamic programming table~$\DP[j,i,e,s]$, where 
    \begin{itemize}
        \item~$j\in[\ell]$ is the number of the currently processed non-empty part,
        \item~$i\in[\ell]$ is the number of the currently processed layer,
        \item~$e\in[|L_i|]_0$ is the number of empty vertices in the layer~$L_i$, and
        \item~$s\in[\agents_0]_0$ is the number of already allocated agents of~$A_0$.
    \end{itemize}
    We call the tuple~$\sigma=(j,i,e,s)$ a \emph{signature}. Given a signature~$(j,i,e,s)$, the dynamic table~$\DP$ stores \texttt{true} if and only if there exists a \emph{partial allocation}~$\alloc_{\sigma}$ (we drop the superscript if there is no possibility of confusion) of~$\{a_0^1,\ldots,a_0^{s}\} \cup \bigcup_{k=0}^{j} \agents_k$ to~$L_1,\ldots,L_i$ such that
    \begin{enumerate}
        \item at least one agent of~$A_j$ is allocated to~$L_i$,
        \item exactly~$e$ vertices of~$L_i$ are not used by the allocation,
        \item the partial allocation satisfies \Cref{thm:EF:Gpref:instar:poly:solStructure}.
    \end{enumerate}
    If such a partial allocation~$\alloc_\sigma$ is not possible, the table stores for~$\sigma$ the value \texttt{false}.

    Now, we describe how the dynamic table is computed. 
    We start with the base case of~$i = 1$; that is, we are in the first layer. As there is no previous layer, we have to allocate all vertices of~$A_1,\ldots,A_j$ together with~$s$ vertices of~$A_0$ to this layer. Hence, we just check whether there are enough vertices in~$L_1$ or not and store the corresponding value.
    \[
        \DP[j,1,e,s] = \begin{cases}
            \texttt{true} & \text{if } e + s + \sum_{k=1}^j |\agents_k| = |L_1| \text{ and}\\
            \texttt{false} & \text{otherwise.}\\
        \end{cases}
    \]
    
    The second base case occurs when~$j$ is equal to the smallest~$k\in[\ell]$ such that~$|A_k| \not= \emptyset$; we denote this value by~$j_{\min}$. Additionally, let~$\delta = |i - j_{\min}|$. We distinguish two cases based on whether~$i \leq j_{\min}$ or~$i > j_{\min}$. In the first case, we have to allocate all agents of~$|A_{j_{\min}}|$ to~$L_i$, while in the second case, we can split them between layers~$L_i$ and~$L_{j_{\min} - \delta}$. The computation is then defined as follows.

    \[
        \DP[j,i,e,s] = \begin{cases}
            \texttt{true}& 
            \text{if } i \leq j_{\min} \text{ and } \\
            & |A_{j_{\min}}| + e + s \geq |L_i| \text{ and } \\ & |A_{j_{\min}}| + e \leq |L_i|
            \text{ and } \\ & \sum_{k=1}^{i-1} |L_k| \geq s - (|L_i| - |A_{j_{\min}}|-e)\,, \\
            \texttt{true} & 
                \text{if } i > j_{\min} \text{ and }|A_{j_{min}}| + e + s = |L_i|\,,\\
            \texttt{true} & 
                \text{if } i > j_{\min} \text{ and } \delta >= 1
                \text{ and }\\& \exists x_1\in[|A_{j_{\min}}|]
                \exists x_2\in[|A_{j_{\min}}|]_0\\&\phantom{xxx}\text{s.t. } x_1 + x_2 = |A_{j_{\min}}| 
                \text{ and }\\&
                \exists s_1,s_2\in[s]_0\colon s_1 + s_2 = s \\&\phantom{xxx}
                    \text{s.t. } x_1 + s_1 + e = |L_i|\text{ and } \\&\phantom{xxx}
                    x_2 \leq |L_{j_{\min} - \delta}|\text{ and}\\&\phantom{xxx}
                     \sum_{k=1}^{j_{\min}-\delta}|L_k| \geq x_2 + s_2\,\text{, and} \\
            \texttt{false} & \text{otherwise.}
        \end{cases}
    \]

    \newcommand{\jp}{j_{\operatorname{prev}}}
    Now, assume that~$i \geq 2$ and~$j > j_{\min}$. To simplify the description of the computation, we distinguish two cases:~$i \leq j$ and~$i > j$. We use~$\jp$ to denote the largest~$k < j$ such that~$A_k \not= \emptyset$. Observe that such~$\jp$ always exists; at worst,~$\jp = j_{\min}$. We use~$\delta'$ to denote~$|i - \jp|$ and set~$s_i = |L_i|-|A_j|-e$, that is, if~$s_i > 0$, then it is the number of agents of~$A_0$ allocated to~$L_i$.
    
    \[
        \DP[j,i,e,s] = \begin{cases}
            \DP[\jp,i,e+|A_j|,s] &\\
                \phantom{xxxxx}\text{if } \jp \geq i \text{ and } e + |A_j| \leq |L_i|\,,&\\
            \DP[\jp,i,e+|A_j|,s] &\\
                \phantom{xxxxxxx}\lor \bigvee\limits_{\mathclap{\substack{i'\in[\jp+1,i-1]\\e'\in[|L_{i'}|]_0}}}
                \DP[\jp,i',e',\max(0,s-s_i - \sum\limits_{\mathclap{k\in[i'+1,i-1]}}|L_k|)]
            &\\
                \phantom{xxxxxxx}\lor \bigvee\limits_{\mathclap{\substack{i'\in[\max(1,\jp-\delta'),\jp]\\e'\in[|L_{i'}|]_0}}}
                \DP[\jp,i',e',\max(0,s-s_i - \sum\limits_{\mathclap{k\in[\jp-i',i-1]}}|L_k|)]
                &\\
                \phantom{xxxxx}
                \text{if } \jp < i \text{ and } e + |A_j| \leq |L_i|&\\\phantom{xxxxxxxx} \text{ and } e + s + |A_j| \geq |L_i| \text{\,, and} &\\
            \texttt{false} &\\ \phantom{xxxxx}\text{otherwise.}
        \end{cases}
    \]

    To finalize the description of the computation, let~$i > j$ and~$\delta = i - j$. Now, it may happen that, in a solution, the agents of~$\agents_j$ are divided between~$L_i = L_{j+\delta}$ and~$L_{j-\delta}$. However, due to \Cref{thm:EF:Gpref:instar:poly:solStructure}, it has to be the case that there are no agents allocated to the layers between these two.
    \[
        \DP[j,i,e,s] = \begin{cases}
            \DP[\jp,i,e+|A_j|,s] &\\
                \phantom{xxxxx}\text{if } j - \delta \leq 0 \text{ and } e + |A_j| \leq |L_i|\,,&\\
            \DP[\jp,i,e+|A_j|,s] &\\
            \phantom{xxxxxxx}\lor \bigvee\limits_{\mathclap{\substack{i'\in[\max(1,\jp-\delta'),j-\delta]\\e'\in[|L_{i'}|]_0}}} 
                \DP[\jp,i',e',\max(0,s-s_i)]&\\
            \phantom{xxxxxxx}\lor \bigvee\limits_{\mathclap{\substack{x_1,x_2\in[|A_j|]\\x_1+x_2 = |A_j|\\|L_i|-x_1-e \leq s\\e'\in[|L_{i'}|-x_2]_0}}} 
                \DP[\jp,j-\delta,e',s-(|L_i|-x_1-e)] 
                &\\
            \phantom{xxxxx}\text{if } j - \delta > 0\text{ and }\jp \geq j-\delta\,,&\\
            \DP[\jp,i,e+|A_j|,s]&\\
            \phantom{xxxxxxx}\lor \bigvee\limits_{\mathclap{\substack{i'\in[\jp,j-\delta]\\e'\in[|L_{i'}|]_0}}} 
                \DP[\jp,i',e',\max(0,s-s_i-\sum\limits_{\mathclap{k=i'+1}}^{j-\delta}|L_k|)]&\\
            \phantom{xxxxxxxxx}\lor \bigvee\limits_{\mathclap{\substack{i'\in[\max(1,\jp-(j-\jp)),\jp-1]]\\e'\in[|L_{i'}|]_0}}} 
                \DP[\jp,i',e',\max(0,s-s_i-\sum\limits_{\mathclap{k=\jp+i'}}^{j-\delta}|L_k|)]&\\
            \phantom{xxxxxxx}\lor \bigvee\limits_{\mathclap{\substack{x_1,x_2\in[|A_j|]\\x_1+x_2=|A_j|\\e'\in[|L_{i'}|-x_2]_0\\s'\in[|L_{j-\delta}|-x_2]_0}}} 
                (\DP[\jp,j-\delta,e',s-(|L_i|-e-x_1)]&\\
            \phantom{xxxxxxx}\bigvee\limits_{\mathclap{\substack{i'\in[\jp,j-\delta-1]\\e'\in[|L_{i'}|]_0}}}
                \DP[\jp,i',e',\max(0,s-(|L_i|-e-x_1)-s'-\sum\limits_{k=i'+1}^{j-\delta-1} |L_k|)]&\\
            \phantom{xxxxxxxxxxx}\bigvee\limits_{\mathclap{\substack{i'\in[\max(1,\jp-(j-\delta-\jp),\jp-1]\\e'\in[|L_{i'}|]_0}}}
                \DP[\jp,i',e',\max(0,s-(|L_i|-e-x_1)-s'-\sum\limits_{\jp+(\jp-i')}^{j-\delta-1} |L_k|)]&\\
            \phantom{xxxxx}\text{if } j - \delta > 0\text{ and }\jp < j-\delta\text{\,, and}&\\
            \texttt{false}&\\
            \phantom{xxxxx}\text{otherwise.}
        \end{cases}
    \]

    We show the correctness of the computation using two auxiliary lemmas. First, we show that whenever the table stores \texttt{true} for some signature~$\sigma$, then there exists a corresponding partial partition~$\alloc_\sigma$.

    \begin{claim}
        If~$\DP[\sigma] = \texttt{true}$ for some signature~$\sigma$, then there exists a corresponding partial allocation~$\alloc_\sigma$.
    \end{claim}
    \begin{claimproof}
        First, let~$i = 1$. In this case,~$\DP$ stores \texttt{true} if and only if~$e + s + \sum_{k=1}^{j}|A_k| = |L_1|$. In other words, there are enough empty vertices in~$L_1$ to accommodate all agents of~$\bigcup_{k=1}^j A_k$,~$s$ agents of~$A_0$, and keep~$e$ vertices empty. Hence, we take a partial allocation~$\alloc_\sigma$ so that we arbitrarily allocate all agents~$\bigcup_{k=1}^j A_k$ together with agents~$a_0^1,\ldots,a_0^s$ to~$V(L_i)$. \Cref{thm:EF:Gpref:instar:poly:solStructure} is easily satisfied since all agents are in the same layer and the number of empty vertices of~$L_i$ is exactly~$|L_i| - s - \bigcup_{k=1}^j |A_k| = e$.
    \end{claimproof}

    Next, we show that whenever there exists a partial solution~$\alloc_\sigma$ corresponding to a signature~$\sigma$, then also the dynamic programming table stores \texttt{true} for this signature.

    \begin{claim}
        Let~$\sigma=(j,i,e,s)$ be a signature. If there exists a partial solution~$\alloc_\sigma$ compatible with~$\sigma$, then~$\DP[j,i,e,s] = \texttt{true}$.
    \end{claim}
    \begin{claimproof}
        First, let~$i=1$ and~$\alloc_\sigma$ be a partial allocation corresponding to~$\sigma=(j,1,e,s)$. Since~$i=1$, there are no previous layers and hence all agents of~$\bigcup_{k=1}^j A_k$ together with~$s$ agents of~$A_0$ must be allocated to~$L_1$. Moreover, exactly~$e$ vertices of~$L_i$ are not used by the allocation. That is,~$|L_i| - e - s - |\bigcup_{k=1}^j A_j| = |L_i| - e - s - \sum_{k=1}^j |A_k| = 0$, which implies~$e + s + \sum_{k=1}^j |A_k| = |L_i|$. However, in this case, the dynamic programming table stores \texttt{true} for~$\sigma$ by the definition of the computation.
    \end{claimproof}

    The correctness of our approach then follows directly by the previous claims. Once the dynamic programming table~$\DP$ is correctly computed, we check whether an envy-free allocation is possible by finding~$i\in[\ell]$ and~$e\in[|L_i|]_0$ such that~$\DP[\ell',i,e,|A_0|] = \texttt{true}$, where~$\ell'\in[\ell]$ is the largest integer such that~$\agents_{\ell'} \not= \emptyset$. If no such cell exists, we return \No. For the running time, observe that there are~$\operatorname{diam}(G)\cdot\operatorname{diam}(G)\cdot n\cdot\numAgents \in \Oh{n^4}$ different signatures~$\sigma$. The most time-consuming computations are for signatures where~$j < i$, where it can take~$\Oh{n^3}$ time. Therefore, the overall running time is~$\Oh{n^7}$, which means that the algorithm runs in polynomial time.
\end{proof}

In the previous result, we assumed that there is exactly one agent the other agents care about. Next, we turn our attention to a similarly restricted case. In particular, we show that if there is exactly one agent which is not indifferent, then a polynomial time algorithm is possible.

\begin{theorem}\label{thm:EF:Gpref:outStar:poly}
    If~$\Gpref$ is an out-star, there is a polynomial time algorithm deciding whether a distance preservation game~$\Gamma$ admits an envy-free allocation.
\end{theorem}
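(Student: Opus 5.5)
Let $c$ be the center of the out-star, so $\relSet_c = \agents\setminus\{c\}$ (or a subset) and every other agent is indifferent. Indifferent agents never envy, so an allocation is envy-free if and only if $c$ does not envy any other agent. Swapping $c$ with an agent $b$ moves $c$ to $\alloc(b)$ and $b$ to $\alloc(c)$, while all other agents stay put. The condition is therefore, for every $b\neq c$,
\[
\cost(c,\alloc)\le \sum_{b'\in\relSet_c\setminus\{b\}}\bigl|\dFn_c(b')-\dist_G(\alloc(b),\alloc(b'))\bigr| + [b\in\relSet_c]\,\bigl|\dFn_c(b)-\dist_G(\alloc(b),\alloc(c))\bigr|.
\]
The plan is to design allocations for which this is easy to guarantee. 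The key idea is that $c$'s cost is minimized when $c$'s location is ``good'' relative to everyone else.

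\textbf{Step 1 (guess the location of $c$).} Try every vertex $v$ as $\alloc(c)$ and compute the BFS layers $L_1,\dots,L_\ell$ from $v$.

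\textbf{Step 2 (reduce to an assignment-style problem).} With $\alloc(c)=v$ fixed, $\cost(c,\alloc)=\sum_{b\in\relSet_c}|\dFn_c(b)-\dist(v,\alloc(b))|$. This depends only on which layer each agent lands in, so a min-cost placement is a transportation/bipartite matching problem: agents are matched to vertices with cost $|\dFn_c(b)-\dist(v,u)|$, and indifferent agents have cost zero.

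\textbf{Step 3 (handle the envy conditions).} This is the main obstacle: the right-hand side of the inequality depends on distances from $\alloc(b)$, not from $v$. My first attempt is an exchange argument. Start from an allocation $\alloc$ minimizing $\cost(c,\cdot)$ over all allocations and all choices of $v$, i.e.\ a global minimum of $c$'s cost over all injective maps. If $c$ envied $b$, then $\alloc^{c\leftrightarrow b}$ would be an allocation with strictly smaller cost for $c$, contradicting global minimality. Hence a global minimizer of $\cost(c,\cdot)$ is always envy-free. This gives existence, so the answer is always \Yes, and it remains only to compute such a minimizer.

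\textbf{Step 4 (compute the minimizer).} For each $v$, solve the min-cost bipartite matching from Step 2 between $\agents\setminus\{c\}$ and $V\setminus\{v\}$ using the Hungarian algorithm. Then take the best $v$. This runs in polynomial time and outputs a global minimizer, and by Step 3 that minimizer is envy-free.

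The delicate point is making sure Step 3's reasoning is airtight: a swap yields an allocation in the same search space, with $c$ at $\alloc(b)$ and $b$ at $v$. That is exactly why the minimization in Step 4 must range over all $v$ jointly with the matching, rather than over a single fixed $v$.
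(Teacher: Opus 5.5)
Your proof is correct, and it reaches the conclusion by a different route from the paper. Both arguments rest on the same observation. Every agent other than the center $c$ is indifferent, so an allocation is envy-free exactly when no swap $\alloc^{c\leftrightarrow b}$ lowers $\cost(c,\cdot)$. In other words, it must be a local minimum of $c$'s cost with respect to swaps.

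The paper reaches such a local minimum by swap dynamics. It repeatedly performs $c$'s best improving swap, and it bounds the number of iterations by $\operatorname{diam}(G)\cdot\numAgents$, since each swap lowers $c$'s cost by at least one. You instead go straight to a global minimum over all injective maps, computed by one rectangular assignment problem per candidate location of $c$.

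The paper's route is more elementary: a plain local-search loop, with no matching machinery. Yours needs the Hungarian algorithm, but it buys several things:
\begin{itemize}
\item It is strongly polynomial and needs no argument about the range of $c$'s cost, which matters when ideal distances are large.
\item Its output is additionally optimal for $c$, and even jump stable, since a jump of $c$ also yields an injective map.
\item Your Step~3 makes explicit what the paper leaves implicit. A global minimiser exists by finiteness, so the decision answer is always \textsc{Yes}, and Steps~1, 2 and~4 are needed only to output a witness allocation.
\end{itemize}
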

\begin{proof}
    The algorithm is a simple swap dynamics that starts with an arbitrary allocation~$\alloc$ and, through a sequence of swaps improving agent's~$a$ cost, it reaches an envy-free allocation; see \Cref{algo:EF:Gpref:outStar:poly} for a formal description.

    \begin{algorithm}[tb]
    \caption{An algorithm finding an envy-free allocation if the preference graph~$\Gpref$ is an out-star.}
    \label{algo:EF:Gpref:outStar:poly}
    \begin{algorithmic}[1]
        \STATE Fix an arbitrary allocation~$\alloc$
        \WHILE{$\exists b\in\agents\setminus\{a\}$ s. t.~$\cost_a(\alloc) > \cost_a(\alloc^{a \leftrightarrow b})$}\label{algo:EF:Gpref:outStar:poly:EFcheck}
            \STATE~$c = \arg\max_{b\in\agents}(\cost_a(\alloc)-\cost_a(\alloc^{a\leftrightarrow b}))$  
            \STATE~$\alloc = \alloc^{a\leftrightarrow c}$
        \ENDWHILE
        \RETURN Envy-free allocation~$\alloc$
    \end{algorithmic}
    \end{algorithm}

    For correctness, it is easy to see that the algorithm always returns an envy-free allocation, since it terminates only if the condition on line \ref{algo:EF:Gpref:outStar:poly:EFcheck}, which checks for envy-freeness for agent~$a$, is not satisfied. For running time, after each execution of the loop in line \ref{algo:EF:Gpref:outStar:poly:EFcheck}, the cost of agent~$a$ is reduced by at least one (and all other agents are indifferent, so their cost is constant zero). Moreover, the initial cost is at most~$\operatorname{diam}(G)\cdot\numAgents \in \Oh{n^2}$, and the cost cannot be less than zero. That is, after at most a quadratic number of improving steps, the algorithm returns an envy-free allocation.
\end{proof}

We conclude with one more intractability result. Specifically, we show that even if the number of agents with a non-empty relationship set is our parameter, the problem is \Wh. This result is complemented by an \XP algorithm similar to that of \Cref{thm:EF:XP:numAgents}, but requires additional steps to ensure that the important agents do not envy any of the indifferent agents.

\begin{theorem}
\label{thm:EF:Gpref:nonZeroDegree:poly}
    When parameterized by the number of agents with non-zero degree in the preference graph~$\Gpref$, it is~\Wh and in \XP to decide whether a distance preservation game~$\Gamma$ admits an envy-free allocation.
\end{theorem}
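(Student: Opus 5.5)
Let $k$ denote the number of agents with non-zero degree in $\Gpref$; call them \emph{important} and the rest \emph{plain} (isolated in $\Gpref$, so empty relationship sets and nobody cares about them). The statement has two parts, which we handle separately.

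\textbf{XP algorithm.} We guess the positions of the $k$ important agents, i.e.\ an injective map from them to $V(G)$, giving $|V(G)|^{k}$ guesses. For each guess we check that no important agent envies another important agent; this is a direct cost computation. It remains to place the $|\agents|-k$ plain agents on the remaining vertices. Plain agents have cost zero, so they never envy. An important agent $a$ envies a plain agent at vertex $u$ iff moving $a$ to $u$ (with the plain agent taking $\alloc(a)$, which nobody cares about) strictly decreases $\cost(a,\cdot)$. This depends only on $u$ and the guessed positions. Hence we compute the set $F$ of \emph{forbidden} vertices $u$ (not used by important agents) for which some important agent would strictly gain by moving to $u$. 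Plain agents must go to $V(G)\setminus(F\cup V^{\alloc})$, so we accept iff at least $|\agents|-k$ such vertices remain. Correctness follows because envy between important and plain agents is fully captured by $F$. The running time is $|V(G)|^{\Oh{k}}$.

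\textbf{W[1]-hardness.} We reduce from \probName{Multicolored Clique} (or from \probName{Clique}, adapting the proof of \Cref{thm:EF:NPh:boundedAgents}). The idea is to use $k$ important agents, each wanting distance one to all others, and many plain agents that fill the non-clique area, so that empty vertices cannot be exploited. In \Cref{thm:EF:NPh:boundedAgents} the number of agents is $3m+1$, which is not bounded by the parameter. Instead we take a \probName{Clique} instance $(H,k)$ and build a topology consisting of $H$ plus gadgets. The important agents $a_1,\dots,a_k$ have $\relSet_{a_i}=\{a_j: j\neq i\}$ with all distances $1$. The number of plain agents is $|V(G)|-k$, so every vertex is occupied.

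Envy-freeness then forces the following. Every important agent whose current cost is positive must not strictly improve by swapping with any occupant. Since all vertices are occupied, an important agent $a$ envies any vertex adjacent to all other important agents that is not currently $a$'s own. We therefore want: if the important agents form a clique, all costs are zero and the allocation is envy-free; otherwise some agent can strictly improve. The step I expect to be the main obstacle is guaranteeing this improvement when there is no clique, since an agent might not be able to reach a vertex that is better for it.

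To handle it, I would add a \emph{universal helper}: a vertex adjacent to everything, which by the argument of \Cref{thm:EF:NPh:boundedAgents} makes any agent sitting elsewhere non-adjacent to a peer envious of it. We then require enough important agents that at most a controlled number fit on the helper vertices. Concretely, there are $k+1$ important agents, one helper vertex and $H$, and we use the clique-forcing argument from that proof. Verifying that plain agents placed on the helper do not break this (an important agent then envies the plain agent on the helper if it is not already adjacent to all peers) completes the equivalence: an envy-free allocation exists iff the important agents occupy a clique, iff $H$ has a $k$-clique. The parameter is $k+1$, so the reduction is a parameterized reduction and \W[1]-hardness follows.
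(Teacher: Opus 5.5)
Your proposal is correct and follows essentially the same route as the paper. The XP part guesses the positions of the agents with non-zero degree, checks envy among them, and then counts the free vertices on which no such agent would strictly improve. The W[1]-hardness part, in its final form, is the paper's reduction from \probName{Clique}: a universal vertex, $k+1$ mutually interested agents with ideal distance one, and indifferent agents filling all remaining vertices. Your observation that the universal vertex must be occupied, and that every important agent not adjacent to all its peers envies whoever sits there (plain or important), is exactly the argument that gives the $k$-clique in the converse direction.
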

\begin{proof}
    We reduce from the \probName{Clique}. In this problem, the input is a graph~$G=(V,E)$ and~$k \in\mathbb{N}$. The question is: does there exist a subset of vertices~$C\subseteq V$ such that~$|X|=k$ and~$X$ induces a complete subgraph (i.e.,~$C$ forms a clique)?

    Given an instance~$S$ of \probName{Clique} with a graph~$G=(V,E)$, we construct an equivalent distance preservation game~$\Gamma$ as follows. The topology~$H$ consists of the vertices~$V \cup u^*$, where~$u^*$ is a vertex adjacent to all vertices in~$V$. We refer to~$u^*$ as the universal vertex.
    We create~$k+1$ “interested" agents~$a_1, \dots,a_{k+1}$ and~$|V(H)| - k -1$ so-called “neutral" agents.
    For each neutral agent~$a$, we have~$M_a = \emptyset$, i.e., they have no preferences for any other agent. 
    For each interested agent~$a_i$, we have~$M_{a_i} = \{ a_1,\ldots,a_{k+1} \}\setminus \{a_i\}$. Each agent requires to be distance one from all other interested agents. 

    Assume we are given a solution~$S$ to \probName{Clique}, i.e., we have a clique of size~$k$ in graph~$G$, which is by construction also in graph~$H$. We allocate~$k$ of the interested agents to the vertices in clique~$C$ in graph~$H$ and the final interested agent to the universal vertex~$u^*$.
    All neutral agents can be arbitrarily allocated to the remaining vertices in~$H$. This allocation~$\pi$ is trivially envy-free as every interested agent is distance one from one another as they are on all placed on a clique of size~$k+1$, i.e.,~$C\cup u^*$.

    In the other direction, given an envy-free allocation~$\pi$ to the preservation game~$\Gamma$, we argue that there must exist a clique of size~$k$ in graph~$G$.
    Since the number of agents in~$\Gamma$ is the same as the number of vertices in~$V$, there must be an agent at the universal vertex~$u^*$. Recall that the neutral agents have no preference for where they are allocated. Any interested agent allocated to the universal vertex will have cost~$0$ as they are at distance~$1$ from every other vertex. Consider now some interested agent~$a_i$;~$a_i$ will envy the agent at~$u^*$ unless~$a_i$ also has cost~$0$ at his current vertex. For an agent to have cost~$0$ he must be at distance~$1$ from all of the other~$k$ interested agents. Therefore, we can only have an envy-free solution if all of the~$k$ agents are on a clique in graph~$H$ and the additional interested agent is allocated to~$u^*$. Hence, if we have an envy-free solution for the topology~$H$, we have a clique of size~$k$ in graph~$G$.

    It remains to show that deciding whether a distance preservation game~$\Gamma$ admits an envy-free allocation is in XP parameterized by the number of agents with non-zero degree in~$\Gpref$. The algorithm is a rather simple generalization of the algorithm in \Cref{thm:EF:XP:numAgents}. Let~$\agents_{>0} = \{a_1,a_2,\ldots, a_{|\agents_{>0}|}\}$ be the set of agents with non-zero degree in~$\Gpref$. We again try all vectors~$(v_1,\ldots,v_{|\agents_{>0}|})$, where~$v_i\in V(G)$, for all~$i\in [|\agents_{>0}|]$.  We check that all vertices in the vector are distinct and if we allocate agent~$a_i$ to the vertex~$v_i$ for all~$i\in [|\agents_{>0}|]$ we get an allocation that is envy-free among agents in~$\agents_{>0}$. If that is the case, we still need to allocate agents in~$\agents\setminus\agents_{>0}$. Notice that such agents have cost~$0$ wherever they are allocated and their allocation does not change the costs of agents in~$\agents_{>0}$. Moreover, while an agent~$a\in\agents_{>0}$ can still envy an agent~$b\in \agents\setminus\agents_{>0}$, the cost of~$a$ in~$\alloc^{a\leftrightarrow b}$ depends only on allocations of agents in~$\agents_{>0}$, for whom we are trying all possibilities and for whom we have fixed allocation based on the vector~$(v_1,\ldots, v_{|\agents_{>0}|})$. Hence for every vertex~$v\in V(G)\setminus \{v_1,\ldots, v_{|\agents_{>0}|}\}$ and every agent~$a_i\in \agents_{>0}$, we can compute whether the agent~$a_i$ would envy an agent~$b\in \agents\setminus\agents_{>0}$ if~$b$ were to be allocated on~$v$. This is easily done by computing the current cost of~$a_i$ and the cost after moving~$a_i$ to~$v$. This partitions vertices in~$V(G)\setminus \{v_1, \ldots, v_{|\agents{_>0}|}\}$ into two sets~$V_{\text{envy}}$ and~$V_{\text{no-envy}}$, such that if any agent in~$\agents_{>0}$ would envy an agent on vertex~$v$, then~$v$ is in~$V_{\text{envy}}$ and~$v$ is in~$V_{\text{no-envy}}$. It is now straightforward to see that if~$|V_{\text{no-envy}}| \ge |\agents\setminus\agents_{>0}|$, then we can allocate all agents with degree zero in~$\Gpref$ to an arbitrary vertex in~$|V_{\text{no-envy}}|$ and the final allocation is envy-free. Otherwise, in any extension of the allocations that allocate~$a_i$ to~$v_i$ for all~$i\in [|\agents_{>0}|]$ would allocate at least one agent~$b\in \agents\setminus\agents_{>0}$ on a vertex in~$V_{\text{envy}}$ and there would be an agent in~$\agents_{>0}$ that envies~$b$. Notice that we try all possible allocations of agents with non-zero degree in~$\Gpref$. For each such allocation, we either find an allocation of remaining agents that is envy-free or show that it is impossible to extend to an envy-free allocation. Hence, the algorithm correctly decides~$\Gamma$. Finally, we try at most~$|V(G)|^{|\agents_{>0}|}$ allocations and for each, we only need to compute the cost of an agent~$|\agents_{>0}|\cdot |V(G)|$ many times, hence the algorithm has \XP running time. 
\end{proof}

Note that parameterization by the number of indifferent agents is hopeless, as we have \NPcness already for instances with no such agent (cf. \Cref{thm:EF:NPh}).

\section{Jump and Swap Stability}

In the remainder of the paper, we briefly explore the stability notions of jump and swap stability. First, we show that these notions are also not always satisfiable. We start with jump stability and show that already with two agents, stable allocation is not guaranteed to exist.

\begin{proposition}
\label{prop:jumpSwap:twoAgents:noExist}
    A jump stable allocation is not guaranteed to exist, even if~$\numAgents = 2$ and the topology is a path.
\end{proposition}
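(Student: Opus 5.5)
We construct an explicit counterexample with two agents $a$ and $b$ on a path, using asymmetric preferences. Symmetric preferences cannot work, because the paper later shows stable allocations always exist in that case. The idea is a chase: $a$ wants to be close to $b$, while $b$ wants to be far from $a$. Concretely, let $\relSet_a = \{b\}$ with $\dFn_a(b) = 1$, and $\relSet_b = \{a\}$ with $\dFn_b(a) = 2$. The topology $G$ is the path $v_1 v_2 v_3$ on three vertices, so exactly one vertex is empty in every allocation.

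We then verify by exhaustive case analysis on $\dist_G(\alloc(a),\alloc(b)) \in \{1,2\}$, which covers every allocation.

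\emph{Distance $1$.} Agent $b$ has cost $1$. The two agents occupy two adjacent vertices, which up to symmetry are $v_1,v_2$ or $v_2,v_3$. If $b$ is on an endpoint and $a$ is on $v_2$, then $b$ is adjacent to $a$ wherever it stands, so we look at $a$ instead; $a$ is already satisfied, so we must recheck $b$. In fact $b$ cannot reach distance $2$ from $v_2$, so this allocation is jump stable. The path $P_3$ therefore fails, and we need a longer path.

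We switch to the path $P_4 = v_1v_2v_3v_4$ and check the cases again.

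\emph{Distance $1$.} The agents occupy $\{v_1,v_2\}$, $\{v_2,v_3\}$ or $\{v_3,v_4\}$. In each case $b$, who has cost $1$, can jump to a free vertex at distance $2$ from $a$, unless no such free vertex exists. For example, with $a$ on $v_2$ and $b$ on $v_3$, the vertex $v_4$ is at distance $2$ from $v_2$, so $b$ jumps there. With $a$ on $v_3$ and $b$ on $v_2$, the vertex $v_1$ is at distance $2$ from $v_3$, so $b$ jumps there. With $a$ on $v_1$ and $b$ on $v_2$, $b$ jumps to $v_3$. If $a$ is on $v_2$ and $b$ is on $v_1$, then $b$ jumps to $v_4$, which is at distance $2$ from $v_2$.

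\emph{Distance $\ge 2$.} Agent $a$ has cost at least $1$. Any free vertex adjacent to $b$ lets $a$ jump there and reach cost $0$. On a path with at least four vertices, $b$ always has a neighbour that is either free or occupied by $a$. If $a$ occupied it, the distance would be $1$, a contradiction. So a free neighbour of $b$ exists and $a$ deviates.

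Hence no allocation is jump stable. The main obstacle is choosing the path length so that both agents always have an improving empty vertex. On $P_3$ the configuration with $a$ in the centre survives, which is why we take $P_4$. The proof then reduces to confirming, by symmetry, the few placements listed above.
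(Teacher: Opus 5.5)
Your final argument is correct and is essentially the paper's proof. It uses the same instance ($a$ wants $b$ at distance $1$, $b$ wants $a$ at distance $2$) on a path with four vertices, and the same split into the distance-$1$ case, where $b$ jumps to a free vertex at distance $2$, and the distance-$\ge 2$ case, where $a$ jumps next to $b$. The only differences are that you check the placements on $P_4$ explicitly (and should drop the abandoned $P_3$ attempt from a clean write-up), whereas the paper argues for any path on at least four vertices.
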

\begin{proof}
    Assume that the topology is a path on at least four vertices, we have two agents~$a_1$ and~$a_2$ interested in each other, and~$\dFn_{a_1}(a_2) = 1$ and~$\dFn_{a_2}(a_1) = 2$. For the sake of contradiction, assume that there is an envy-free allocation. If~$a_2$ is at distance one from~$a_1$, then~$a_2$ prefers to jump to a location at distance two from~$\alloc(a_1)$, and such a vertex always exists, as the path is of length at least four. Similarly, if~$a_1$ is not at the distance exactly one from~$a_2$,~$a_1$ prefers to jump to a location at distance one from~$a_2$, which again always exists and is empty. These are the only possible allocations, so the proposition holds.
\end{proof}

For swap stability, instances with two agents always admit a swap-stable solution, as such instances always admit an envy-free solution. Hence, we need at least three agents to show non-existence.

\begin{proposition}
\label{prop:swap:notExist}
    A swap stable allocation is not guaranteed to exist, even if~$\numAgents = 3$ and the topology is a path.
\end{proposition}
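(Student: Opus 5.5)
The plan is an explicit counterexample: the topology is the path $v_1v_2v_3$ with three vertices, so every vertex is occupied. Since the path has a reflection symmetry and only distances matter, an allocation is determined, up to cost-preserving symmetry, by which agent sits on the middle vertex $v_2$. This leaves three essentially different allocations to rule out.

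First I would record which swaps can possibly form a swap deviation. Swapping the two agents at the endpoints changes no pairwise distance. The two endpoint agents stay at distance $2$ from each other and at distance $1$ from the middle agent. So such a swap changes no cost and is never a swap deviation. The only relevant swaps exchange the middle agent $m$ with an endpoint agent $x$, while the third agent $z$ stays put.

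Before the swap the distances are $\dist(m,x)=1$, $\dist(m,z)=1$, $\dist(x,z)=2$. Afterwards they are $\dist(m,x)=1$, $\dist(x,z)=1$, $\dist(m,z)=2$. Hence $m$'s distance to $z$ goes from $1$ to $2$, and $x$'s distance to $z$ goes from $2$ to $1$.

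Next I would choose preferences forming a cyclic pattern of such deviations. With agents $a,b,c$, I want:
\begin{itemize}
\item $a$ in the middle to be broken by swapping with $b$: $a$ wants $c$ at distance $2$, and $b$ wants $c$ at distance $1$;
\item $b$ in the middle to be broken by swapping with $c$: $b$ wants $a$ at distance $2$, and $c$ wants $a$ at distance $1$;
\item $c$ in the middle to be broken by swapping with $a$: $c$ wants $b$ at distance $2$, and $a$ wants $b$ at distance $1$.
\end{itemize}
These requirements are mutually consistent. I set $\relSet_a=\{b,c\}$ with $\dFn_a(b)=1$ and $\dFn_a(c)=2$. I set $\relSet_b=\{a,c\}$ with $\dFn_b(c)=1$ and $\dFn_b(a)=2$. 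I set $\relSet_c=\{a,b\}$ with $\dFn_c(a)=1$ and $\dFn_c(b)=2$. The instance is invariant under the cyclic relabelling $a\to b\to c\to a$.

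Finally I would verify one case and invoke the symmetry for the others. With $a$ in the middle, $a$ has cost $0+1=1$ and $b$ has cost $1+1=2$. In $\alloc^{a\leftrightarrow b}$, $a$ has cost $0$ and $b$ has cost $1+0=1$. Both strictly decrease, so this is a swap deviation, and by the cyclic symmetry the other two cases behave the same way. Therefore no swap stable allocation exists.

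The only delicate point is confirming that each agent's change in cost comes solely from the single pair whose distance changes. Every other distance, including the one between the two swapping agents, is preserved. This follows from the distance computation above.
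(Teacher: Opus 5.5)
Your proposal is correct and follows the paper's proof. Your instance is exactly the paper's instance, with $a,b,c$ playing the roles of $a_1,a_2,a_3$ on the path $v_1v_2v_3$. Like the paper, you verify one middle-agent swap deviation and appeal to symmetry for the remaining allocations; your explicit cyclic-relabelling and reflection argument, and your observation that swapping the two endpoint agents changes nothing, just make the paper's ``the other cases are symmetric'' precise.
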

\begin{proof}
    Let~$\agents = \{a_1,a_2,a_3\}$ and the topology be a path with three vertices~$v_1$,~$v_2$,~$v_3$. For every agent~$a\in\agents$, we set~$\relSet_a = \agents\setminus\{a\}$. The distance functions are then as follows:~$\dFn_{a_1}(a_2) = 1$,~$\dFn_{a_1}(a_3) = 2$,~$\dFn_{a_2}(a_1) = 2$,~$\dFn_{a_2}(a_3) = 1$,~$\dFn_{a_3}(a_1) = 1$, and~$\dFn_{a_3}(a_2) = 2$. That is, each agent wants one agent at distance one and the other at distance two, but these relations are not symmetric. For the sake of contradiction, let~$\alloc$ be a swap stable allocation. 
    Without loss of generality suppose that~$\alloc(a_1) = v_1$,~$\alloc(a_2) = v_2$, and~$\alloc(a_3) = v_3$; the other cases are symmetric to this one. Now, agent~$a_2$ is at distance one from both~$a_1$ and~$a_3$, so his cost is~$1$. Agent~$a_3$ is at distance one from~$a_2$ and at distance two from~$a_1$, but wants to have it the other way around; i.e., the cost for her is two. If these two agents swap, both of their costs decrease, so~$\alloc$ is not swap stable.
\end{proof}

If the assumed stability notion is envy-freeness, it is \NPc to decide the existence even if the preferences are symmetric. As we show in our first algorithmic result, the situation with jump and swap stability is much more positive. Specifically, by a potential argument, a simple best-response dynamic converges in polynomial time.

\begin{theorem}
\label{thm:swapJump:symmetric:polytime}
    If the preferences are symmetric, a jump stable and a swap stable allocation always exist and can be found in polynomial time.
\end{theorem}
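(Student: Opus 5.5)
We use a potential argument. Define the potential of an allocation~$\alloc$ as the social cost
\[
    \Phi(\alloc) = \sum_{a\in\agents}\cost(a,\alloc) = \sum_{a\in\agents}\sum_{b\in\relSet_a}\Big|\dFn_a(b)-\dist_G(\alloc(a),\alloc(b))\Big|\,.
\]
Under symmetric preferences, $b\in\relSet_a$ iff $a\in\relSet_b$, and $\dFn_a(b)=\dFn_b(a)$. Hence every unordered related pair $\{a,b\}$ contributes the same term twice, and $\Phi(\alloc) = 2\sum_{\{a,b\}}|\dFn_a(b)-\dist_G(\alloc(a),\alloc(b))|$. The algorithm is the natural improvement dynamics. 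Start from an arbitrary allocation. While some agent admits a jump deviation (respectively, some pair admits a swap deviation), perform it. Stop when no such deviation exists; the final allocation is then jump stable (respectively, swap stable) by definition.

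\emph{Jump dynamics.} Suppose~$a$ jumps to an empty vertex~$v$. The only terms of~$\Phi$ that change are those involving~$a$, i.e., the pairs $\{a,b\}$ with $b\in\relSet_a$; symmetry guarantees these are exactly the pairs in which $a$ participates. Their total is $\cost(a,\cdot)$, so $\Phi(\alloc^{a\mapsto v})-\Phi(\alloc) = 2\big(\cost(a,\alloc^{a\mapsto v})-\cost(a,\alloc)\big) \le -2$, since costs are integers. Thus~$\Phi$ is an exact potential.

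\emph{Swap dynamics.} Suppose $a$ and $b$ swap. Terms for pairs not involving $a$ or $b$ are unchanged. If $a$ and $b$ are related, the term for $\{a,b\}$ is unchanged as well, because their mutual distance is preserved. The remaining changed terms are pairs $\{a,c\}$ and $\{b,c\}$ with $c\notin\{a,b\}$. Their total change equals $\big(\cost(a,\alloc^{a\leftrightarrow b})-\cost(a,\alloc)\big)+\big(\cost(b,\alloc^{a\leftrightarrow b})-\cost(b,\alloc)\big)$, counted twice. The only subtlety is the shared pair $\{a,b\}$, which we must confirm contributes zero to both individual cost differences. It does, since $\dist_G$ is symmetric. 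So a swap deviation, where both costs strictly drop, decreases $\Phi$ by at least~$4$.

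\emph{Running time.} $\Phi$ is a nonnegative integer. Each term is at most $\max(\dFn_a(b),\operatorname{diam}(G))$, so initially $\Phi \le \sum_{a}\sum_{b\in\relSet_a}\max(\dFn_a(b),n)$. This is polynomial if the ideal distances are encoded in unary, or are bounded by~$n$. The main obstacle is the case of large binary-encoded distances. We handle it by first replacing each $\dFn_a(b)>\operatorname{diam}(G)$ with $\operatorname{diam}(G)$. Under this cap, every realizable distance $t\le\operatorname{diam}(G)$ satisfies $|\dFn_a(b)-t| = (\dFn_a(b)-\operatorname{diam}(G)) + (\operatorname{diam}(G)-t)$, so each cost shifts by a constant independent of the allocation. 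The set of improving deviations is therefore unchanged, while $\Phi\in\Oh{\numAgents^2 n}$.

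Each iteration finds an improving move in polynomial time by trying all $\Oh{\numAgents n}$ jumps or $\Oh{\numAgents^2}$ swaps, and computes all costs using precomputed BFS distances. Since $\Phi$ drops by at least~$2$ per iteration, the dynamics performs polynomially many steps. Existence follows from the same termination argument.
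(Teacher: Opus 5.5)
Your proposal is correct and essentially matches the paper's proof: both use the social cost as a potential, made polynomially bounded by capping ideal distances (the paper caps at $|V(G)|$, you cap at $\operatorname{diam}(G)$). This potential drops by twice the deviator's gain under a jump and by twice the sum of both agents' gains under a swap, so improvement dynamics terminate after polynomially many steps. Your explicit check that the shared pair $\{a,b\}$ contributes nothing in the swap case fills in a detail the paper covers only by ``an analogous argument.''
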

\begin{proof}
    We prove the theorem by showing the existence of a polynomially bounded potential, and so a simple best-response dynamic converges in polynomial time.
    For an allocation~$\alloc$, we let~$P(\alloc) = \sum_{a\in \agents}(\cost(a,\alloc) - \sum_{b\in\relSet_a}\max\{0, \dFn_a(b)-|V(G)|\})$ be the sum of costs of all agents, where we normalize all distance functions to be at most~$|V(G)|$ by letting~$\dFn_a(b) = |V(G)|$ whenever~$\dFn_a(b) > |V(G)|$\footnote{Note that indeed if~$\dFn_a(b) > |V(G)|$, then because~$\dist(\alloc(a), \alloc(b))\le |V(G)|$, we have~$|\dFn_a(b)-\dist(\alloc(a), \alloc(b))| - (\dFn_a(b)-|V(G)|) = \dFn_a(b)-\dist(\alloc(a), \alloc(b)) - (\dFn_a(b)-|V(G)|) = ||V(G)| - \dist(\alloc(a), \alloc(b))|$.}. It follows that~$0\le P(\alloc)\le \numAgents^2\cdot |V(G)|$. Now, let us compute how the value of~$P(\alloc)$ changes in the case of a single jump or a single swap. First, note that~$\sum_{a\in \agents}\sum_{b\in \relSet_a}\max\{0, \dFn_a(b)-|V(G)|\}$ does not depend on~$\alloc$, so we only need to consider how the sum of costs changes. Let us first consider the jump of an agent~$a$ to a vertex~$v$. Since only~$a$ and the agents~$b$ with~$a\in \relSet_b$ are affected and since~$a\in \relSet_b$ if and only if~$b\in \relSet_a$ in a symmetric instance, we get~$P(\alloc)- P(\alloc^{a\mapsto v}) = \cost(a,\alloc) - \cost(a,\alloc^{a\mapsto v}) + \sum_{b\in M_a}(|\dFn_b(a) - \dist(\alloc(a), \alloc(b))| - |\dFn_b(a) - \dist(\alloc^{a\mapsto v}(a), \alloc^{a\mapsto v}(b))|)$. Note that~$\dFn_b(a) = \dFn_a(b)$, because the preferences are symmetric. So~$\sum_{b\in \relSet_a}|\dFn_b(a) - \dist(\alloc(a), \alloc(b))| = |\dFn_a(b) - \dist(\alloc(a), \alloc(b))| = \cost (a, \alloc)$ and similarly,~$\sum_{b\in \relSet_a}|\dFn_b(a) - \dist(\alloc^{a\mapsto v}(a), \alloc^{a\mapsto v}(b))| = \cost (a, \alloc^{a\mapsto v})$. Therefore,~$P(\alloc)- P(\alloc^{a\mapsto v}) = 2(\cost(a,\alloc) - \cost(a,\alloc^{a\mapsto v}))$. It follows that indeed~$P$ is a polynomially bounded potential. Hence, we start from an arbitrary allocation~$\alloc$ and follow a simple best response dynamic, where 1) we check for every agent whether they can improve their cost by jumping to an empty vertex, 2) if so, let the agent jump there and repeat. By the above argument, in every jump, the value of~$P(\alloc)$ decreases by two times the improvement of the cost of the agent that jumped. Hence, after at most~$\numAgents^2|V(G)|$ many jumps, we reach a jump stable allocation. By an analogous argument, we get that if~$a$ and~$b$ swap, then~$P(\alloc) - P(\alloc^{a\leftrightarrow b}) = 2(\cost(a,\alloc)-\cost(a,\alloc^{a\leftrightarrow b})+\cost(b,\alloc)-\cost(b,\alloc^{a\leftrightarrow b}))$ and starting from arbitrary allocation and letting agents swap if they both prefer the swap converges to a swap stable allocation in polynomially-many swaps. 
\end{proof}

Our last result settles the polynomial-time solvability of the problem for jump and swap stability in the case where the preferences are acyclic. The core idea of the algorithm is to find a topological ordering according to agents' preferences and let them pick their best location in the opposite of this ordering. The proof is similar to that of \citet{AzizCLNW2025} for an analogous setting in the original distance preservation games.

\begin{theorem}
\label{thm:swapJump:acyclic:polytime}
    If the preference graph~$\Gpref$ is acyclic, a jump stable and a swap stable allocation always exist and can be found in polynomial time.
\end{theorem}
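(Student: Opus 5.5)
We fix a topological ordering of~$\Gpref$ in which every arc goes from an earlier to a later agent. Concretely, we write the agents as~$a_1,\ldots,a_n$ so that~$(a_i,a_j)\in\Gpref$ implies~$i>j$; that is, every agent cares only about agents with smaller indices. Such an ordering exists because~$\Gpref$ is acyclic, and it can be computed in linear time. We then build the allocation greedily in the order~$a_1,a_2,\ldots,a_n$. Agent~$a_1$ is indifferent and is placed on an arbitrary vertex. When it is agent~$a_i$'s turn, the positions of all agents in~$\relSet_{a_i}\subseteq\{a_1,\ldots,a_{i-1}\}$ are already fixed. We place~$a_i$ on a vertex~$v\in V\setminus\{\alloc(a_1),\ldots,\alloc(a_{i-1})\}$ minimizing~$\sum_{b\in\relSet_{a_i}}|\dFn_{a_i}(b)-\dist_G(v,\alloc(b))|$. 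Computing all-pairs distances by BFS once makes each step polynomial, so the whole procedure runs in polynomial time.

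For jump stability, we argue as follows. Suppose~$a_i$ could improve by jumping to an empty vertex~$v$ of the final allocation. The cost of~$a_i$ depends only on the positions of~$a_1,\ldots,a_{i-1}$, and these do not change after~$a_i$ is placed. Moreover,~$v$ is empty at the end, so it was also empty, and hence available, when~$a_i$ chose its vertex. This contradicts the minimality of~$a_i$'s choice.

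For swap stability, we need a second argument, because a swap of~$a_i$ with~$a_j$ also moves an agent that~$a_i$ may care about. Take any pair~$a_i,a_j$ with~$i<j$. We claim that~$a_i$ cannot strictly gain from the swap. There are two cases.
\begin{itemize}
\item If~$a_j\notin\relSet_{a_i}$, which holds automatically since~$j>i$, then after the swap~$a_i$ sits on~$\alloc(a_j)$ while all agents of~$\relSet_{a_i}$ keep their positions. The vertex~$\alloc(a_j)$ was empty when~$a_i$ chose, since~$a_j$ was placed later. So~$a_i$ weakly preferred its own vertex, and the swap is not strictly improving for~$a_i$.
\item For the swap to be a swap deviation, both agents must strictly gain. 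Since~$a_i$, the agent with the smaller index, never strictly gains, no swap deviation exists.
\end{itemize}
This handles both jump and swap stability with the same greedy allocation.

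The only delicate point is the orientation of the ordering, namely that agents with no out-arcs (the sinks of~$\Gpref$) must be placed first. The key fact is that~$\relSet_{a_i}$ contains only lower-indexed agents, so no swap or jump ever moves an agent that the lower-indexed participant cares about, except possibly the higher-indexed partner. That partner is excluded from~$\relSet_{a_i}$ by acyclicity. Once this is checked, both stability notions follow directly.
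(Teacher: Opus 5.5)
Your proposal is correct and follows the paper's proof: fix a topological ordering, place the agents with no out-arcs first, and give each agent in turn a cost-minimizing empty vertex. Jump stability holds because an agent's cost depends only on agents placed before it, and swap stability holds because in any pair the earlier-placed agent cannot strictly gain, since the partner's vertex was still empty when it chose.
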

\begin{proof}
    The proof is basically identical to the analogous result by \citet{AzizCLNW2025} for the model, where agents are allocated on a line segment. Since~$\Gpref$ is acyclic, we can compute a topological ordering of the vertices of~$\Gpref$ such that all edges are directed from left to right in this order. We then repeatedly select the rightmost unallocated agent, compute their cost on all empty vertices so far, and place them on a vertex that minimizes the cost. Note that since all edges are directed from left to right, when we allocate an agent~$a$, all agents in~$\relSet_a$ are already allocated. Therefore, the cost on any node for~$a$ is already fixed when we are allocating~$a$. It follows that, after the full allocation is finalized, the agent~$a$ does not want to move to any other vertex that is either empty or occupied by an agent to the left of~$a$ in the ordering that selected their vertex after~$a$. Therefore, the allocation is jump stable and swap stable (since in any pair of agents, the one to the right in the ordering does not want to swap). 
\end{proof}

\section{Discussion}\label{sec:discussion}

Our paper provides an almost complete landscape of the complexity of envy-free allocations in graphical distance preservation games. Our results illustrate that restrictions in the input instances are required to reach tractability.
We believe that Open Problem~\ref{prob:open:FPTvsWh:agents} is the most important question that remains open on this front, which deserves further study. Below, we highlight a few more future directions that merit further study.

\paragraph{``At most'' and ``at least'' distances.}
Two very natural classes of ideal distances are what we term {\em ``at least''} and {\em ``at most''} distances. 
These are two-step functions where one of the steps is zero, and the other one is linear. %
There, if agent~$i$ wants their distance from~$j$ to be at least (resp. at most)~$d$, then the cost is zero if under the allocation the condition is satisfied and defined as normal otherwise.
Observe that when an agent has ideal distance at most 1, then it is equivalent to having ideal distance exactly 1, hence the \NPcness for envy-free allocations from \Cref{thm:EF:NPh} holds for this case. 
However, the problem remains open when agents have ``at least'' ideal distances, which is arguably a very natural class. Does our problem become easier under these distance functions?

\paragraph{Jump and Swap Stability.}
Our initial results for jump and stable allocations leave ample space for further study. Of course, there are the directions of studying the complexity of the problem, for which we conjecture it is \NPc in general, under the same lens as envy-free allocations. Study how it behaves when we constrain the topology, or the graph, or agent's preferences. 
One less obvious direction that we would like to highlight is the existence of jump and swap stable allocations on paths under ``at least'' and ``at most'' distance functions. 
Despite our efforts to resolve the question, in both directions, the problem remained hard to crack. 

\paragraph{Maximize Social Welfare.}
An orthogonal research direction is to study the maximization of social-welfare in graphical distance games; this objective was extensively studied in the model of~\citet{AzizLSV2025}. Observe that this is a very interesting problem, since it is at least as hard as \probName{Subgraph Isomorphism}, an important problem for various AI subfields~\citep{HoffmannMR2017,McCreeshPST2018}; this is when~$\Gpref$ is symmetric and the ideal distance for every agent is~$1$.
What are the graph classes that allow for efficient algorithms for this problem?

\clearpage
\section*{Acknowledgments}
This project has received funding from the European Research Council (ERC) under the European Union’s Horizon 2020 research and innovation programme (grant agreement No 101002854) and was co-funded by the European Union under the project Robotics and advanced industrial production (reg. no. CZ.02.01.01/00/22\_008/0004590). Argyrios Deligkas acknowledges the support of the EPSRC grant EP/X039862/1.
    
\begin{center}
    \vspace{0.25cm}
    \includegraphics[width=5cm]{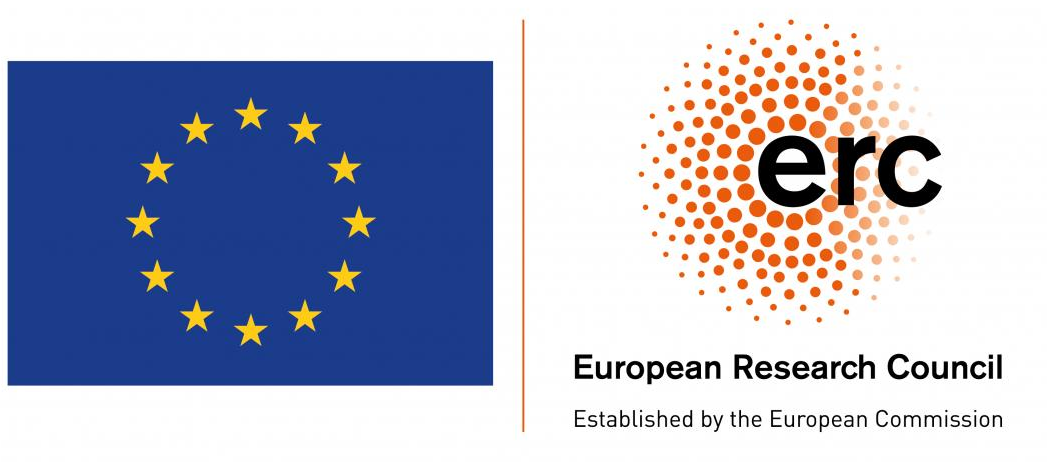}
\end{center}

\bibliographystyle{plainnat}
\bibliography{references}

\begin{thebibliography}{54}
\providecommand{\natexlab}[1]{#1}
\providecommand{\url}[1]{\texttt{#1}}
\expandafter\ifx\csname urlstyle\endcsname\relax
  \providecommand{\doi}[1]{doi: #1}\else
  \providecommand{\doi}{doi: \begingroup \urlstyle{rm}\Url}\fi

\bibitem[Agarwal et~al.(2021)Agarwal, Elkind, Gan, Igarashi, Suksompong, and
  Voudouris]{AgarwalEGISV2021}
Aishwarya Agarwal, Edith Elkind, Jiarui Gan, Ayumi Igarashi, Warut Suksompong,
  and Alexandros~A. Voudouris.
\newblock {S}chelling games on graphs.
\newblock \emph{Artificial Intelligence}, 301:\penalty0 103576, 2021.
\newblock \doi{10.1016/j.artint.2021.103576}.

\bibitem[Aziz et~al.(2017)Aziz, Gaspers, and Najeebullah]{AzizGN2017}
Haris Aziz, Serge Gaspers, and Kamran Najeebullah.
\newblock Weakening covert networks by minimizing inverse geodesic length.
\newblock In \emph{Proceedings of the 26th International Joint Conference on
  Artificial Intelligence, {IJCAI}~'17}, pages 779--785. ijcai.org, 2017.
\newblock \doi{10.24963/ijcai.2017/108}.

\bibitem[Aziz et~al.(2023)Aziz, Suksompong, Sun, and Walsh]{AzizSSW2023}
Haris Aziz, Warut Suksompong, Zhaohong Sun, and Toby Walsh.
\newblock Fairness concepts for indivisible items with externalities.
\newblock In \emph{Proceedings of the 37th {AAAI} Conference on Artificial
  Intelligence, {AAAI}~'23}, pages 5472--5480. {AAAI} Press, 2023.
\newblock \doi{10.1609/aaai.v37i5.25680}.

\bibitem[Aziz et~al.(2025{\natexlab{a}})Aziz, Chan, Lederer, Narang, and
  Walsh]{AzizCLNW2025}
Haris Aziz, Hau Chan, Patrick Lederer, Shivika Narang, and Toby Walsh.
\newblock Distance preservation games.
\newblock In \emph{Proceedings of the 34th International Joint Conference on
  Artificial Intelligence, {IJCAI}~'25}, pages 3735--3743. ijcai.org,
  2025{\natexlab{a}}.
\newblock \doi{10.24963/ijcai.2025/415}.

\bibitem[Aziz et~al.(2025{\natexlab{b}})Aziz, Lisowski, Suzuki, and
  Vollen]{AzizLSV2025}
Haris Aziz, Grzegorz Lisowski, Mashbat Suzuki, and Jeremy Vollen.
\newblock Neighborhood stability in assignments on graphs.
\newblock In \emph{Proceedings of the 24th International Conference on
  Autonomous Agents and Multiagent Systems, {AAMAS}~'25}, pages 2420--2422,
  2025{\natexlab{b}}.
\newblock URL \url{https://dl.acm.org/doi/10.5555/3709347.3743889}.

\bibitem[Balister et~al.(2026)Balister, Bollob{\'a}s, Campos, Griffiths,
  Hurley, Morris, Sahasrabudhe, and Tiba]{balister2024upper}
Paul Balister, B{\'e}la Bollob{\'a}s, Marcelo Campos, Simon Griffiths, Eoin
  Hurley, Robert Morris, Julian Sahasrabudhe, and Marius Tiba.
\newblock Upper bounds for multicolour {R}amsey numbers.
\newblock \emph{Journal of the American Mathematical Society}, 2026.
\newblock \doi{10.1090/jams/1069}.

\bibitem[Balliu et~al.(2019)Balliu, Flammini, Melideo, and
  Olivetti]{BalliuFMO2019}
Alkida Balliu, Michele Flammini, Giovanna Melideo, and Dennis Olivetti.
\newblock On non-cooperativeness in social distance games.
\newblock \emph{Journal of Artificial Intelligence Research}, 66:\penalty0
  625--653, 2019.
\newblock \doi{10.1613/jair.1.11808}.

\bibitem[Balliu et~al.(2022)Balliu, Flammini, Melideo, and
  Olivetti]{BalliuFMO2022}
Alkida Balliu, Michele Flammini, Giovanna Melideo, and Dennis Olivetti.
\newblock On {P}areto optimality in social distance games.
\newblock \emph{Artificial Intelligence}, 312:\penalty0 103768, 2022.
\newblock \doi{10.1016/j.artint.2022.103768}.

\bibitem[Berriaud et~al.(2023)Berriaud, Constantinescu, and
  Wattenhofer]{BerriaudCW2023}
Damien Berriaud, Andrei Constantinescu, and Roger Wattenhofer.
\newblock Stable dinner party seating arrangements.
\newblock In \emph{Proceedings of the 19th International Conference on Web and
  Internet Economics, {WINE}~'23}, volume 14413 of \emph{LNCS}, pages 3--20.
  Springer, 2023.
\newblock \doi{10.1007/978-3-031-48974-7\_1}.

\bibitem[Bil{\`{o}} et~al.(2022)Bil{\`{o}}, Bil{\`{o}}, Lenzner, and
  Molitor]{BiloBLM2022}
Davide Bil{\`{o}}, Vittorio Bil{\`{o}}, Pascal Lenzner, and Louise Molitor.
\newblock Topological influence and locality in swap {S}chelling games.
\newblock \emph{Autonomous Agents and Multi-Agent Systems}, 36\penalty0
  (2):\penalty0 47, 2022.
\newblock \doi{10.1007/s10458-022-09573-7}.

\bibitem[Bil{\`{o}} et~al.(2023)Bil{\`{o}}, Bil{\`{o}}, D{\"{o}}ring, Lenzner,
  Molitor, and Schmidt]{BiloBDLMS2023}
Davide Bil{\`{o}}, Vittorio Bil{\`{o}}, Michelle D{\"{o}}ring, Pascal Lenzner,
  Louise Molitor, and Jonas Schmidt.
\newblock {S}chelling games with continuous types.
\newblock In \emph{Proceedings of the 32nd International Joint Conference on
  Artificial Intelligence, {IJCAI}~'23}, pages 2520--2527. ijcai.org, 2023.
\newblock \doi{10.24963/ijcai.2023/280}.

\bibitem[Bla{\v{z}}ej et~al.(2023)Bla{\v{z}}ej, Ganian, Knop, Pokorn{\'{y}},
  Schierreich, and Simonov]{BlazejGKPSS2023}
V{\'{a}}clav Bla{\v{z}}ej, Robert Ganian, Du{\v{s}}an Knop, Jan Pokorn{\'{y}},
  {\v{S}}imon Schierreich, and Kirill Simonov.
\newblock The parameterized complexity of network microaggregation.
\newblock In \emph{Proceedings of the 37th {AAAI} Conference on Artificial
  Intelligence, {AAAI}~'23}, pages 6262--6270. {AAAI} Press, 2023.
\newblock \doi{10.1609/aaai.v37i5.25771}.

\bibitem[Bodlaender et~al.(2025)Bodlaender, Hanaka, Jaffke, Ono, Otachi, and
  van~der Zanden]{BodlaenderHJOOZ2020}
Hans~L. Bodlaender, Tesshu Hanaka, Lars Jaffke, Hirotaka Ono, Yota Otachi, and
  Tom~C. van~der Zanden.
\newblock Hedonic seat arrangement problems.
\newblock \emph{Autonomous Agents and Multi-Agent Systems}, 39\penalty0
  (2):\penalty0 33, 2025.
\newblock \doi{10.1007/s10458-025-09711-x}.

\bibitem[Br{\^{a}}nzei and Larson(2011)]{BranzeiL2011}
Simina Br{\^{a}}nzei and Kate Larson.
\newblock Social distance games.
\newblock In \emph{Proceedings of the 22nd International Joint Conference on
  Artificial Intelligence, {IJCAI}~'11}, pages 91--96. {IJCAI/AAAI}, 2011.
\newblock \doi{10.5591/978-1-57735-516-8/IJCAI11-027}.

\bibitem[Bullinger and Suksompong(2024)]{BullingerS2023}
Martin Bullinger and Warut Suksompong.
\newblock Topological distance games.
\newblock \emph{Theoretical Computer Science}, 981:\penalty0 114238, 2024.
\newblock \doi{10.1016/j.tcs.2023.114238}.

\bibitem[Bullinger et~al.(2021)Bullinger, Suksompong, and
  Voudouris]{BullingerSV2021}
Martin Bullinger, Warut Suksompong, and Alexandros~A. Voudouris.
\newblock Welfare guarantees in {S}chelling segregation.
\newblock \emph{Journal of Artificial Intelligence Research}, 71:\penalty0
  143--174, 2021.
\newblock \doi{10.1613/jair.1.12771}.

\bibitem[Ceylan et~al.(2023)Ceylan, Chen, and Roy]{CeylanCR2023}
Esra Ceylan, Jiehua Chen, and Sanjukta Roy.
\newblock Optimal seat arrangement: {W}hat are the hard and easy cases?
\newblock In \emph{Proceedings of the 32nd International Joint Conference on
  Artificial Intelligence, {IJCAI}~'23}, pages 2563--2571. ijcai.org, 2023.
\newblock \doi{10.24963/ijcai.2023/285}.

\bibitem[Chauhan et~al.(2018)Chauhan, Lenzner, and Molitor]{ChauhanLM2018}
Ankit Chauhan, Pascal Lenzner, and Louise Molitor.
\newblock {S}chelling segregation with strategic agents.
\newblock In \emph{Proceedings of the 11th International Symposium on
  Algorithmic Game Theory, {SAGT}~'18}, volume 11059 of \emph{LNCS}, pages
  137--149. Springer, 2018.
\newblock \doi{10.1007/978-3-319-99660-8\_13}.

\bibitem[Cormen et~al.(2009)Cormen, Leiserson, Rivest, and
  Stein]{CormenEtal_book}
Thomas~H. Cormen, Charles~E. Leiserson, Ronald~L. Rivest, and Clifford Stein.
\newblock \emph{Introduction to Algorithms}.
\newblock {MIT} Press, 3 edition, 2009.
\newblock ISBN 978-0-262-03384-8.
\newblock URL \url{http://mitpress.mit.edu/books/introduction-algorithms}.

\bibitem[Deligkas et~al.(2021)Deligkas, Eiben, Ganian, Hamm, and
  Ordyniak]{DeligkasEGHO2021}
Argyrios Deligkas, Eduard Eiben, Robert Ganian, Thekla Hamm, and Sebastian
  Ordyniak.
\newblock The parameterized complexity of connected fair division.
\newblock In \emph{Proceedings of the 30th International Joint Conference on
  Artificial Intelligence, {IJCAI}~'21}, pages 139--145. ijcai.org, 2021.
\newblock \doi{10.24963/ijcai.2021/20}.

\bibitem[Deligkas et~al.(2024{\natexlab{a}})Deligkas, Eiben, and
  Goldsmith]{DeligkasEG2024}
Argyrios Deligkas, Eduard Eiben, and Tiger{-}Lily Goldsmith.
\newblock The parameterized complexity of welfare guarantees in {S}chelling
  segregation.
\newblock \emph{Theoretical Computer Science}, 1017:\penalty0 114783,
  2024{\natexlab{a}}.
\newblock \doi{10.1016/j.tcs.2024.114783}.

\bibitem[Deligkas et~al.(2024{\natexlab{b}})Deligkas, Eiben, Knop, and
  Schierreich]{DeligkasEKS2024b}
Argyrios Deligkas, Eduard Eiben, Du{\v{s}}an Knop, and {\v{S}}imon Schierreich.
\newblock Individual rationality in topological distance games is surprisingly
  hard.
\newblock In \emph{Proceedings of the 33rd International Joint Conference on
  Artificial Intelligence, {IJCAI}~'24}, pages 2782--2790. ijcai.org,
  2024{\natexlab{b}}.
\newblock \doi{10.24963/ijcai.2024/308}.

\bibitem[Deligkas et~al.(2024{\natexlab{c}})Deligkas, Eiben, Korchemna, and
  Schierreich]{DeligkasEKS2024}
Argyrios Deligkas, Eduard Eiben, Viktoriia Korchemna, and {\v{S}}imon
  Schierreich.
\newblock The complexity of fair division of indivisible items with
  externalities.
\newblock In \emph{Proceedings of the 38th {AAAI} Conference on Artificial
  Intelligence, {AAAI}~'24}, pages 9653--9661. {AAAI} Press,
  2024{\natexlab{c}}.
\newblock \doi{10.1609/aaai.v38i9.28822}.

\bibitem[Deligkas et~al.(2025)Deligkas, Eiben, Ioannidis, Knop, and
  Schierreich]{DeligkasEIKS2025}
Argyrios Deligkas, Eduard Eiben, Stavros~D. Ioannidis, Dušan Knop, and
  {\v{S}}imon Schierreich.
\newblock Balanced and fair partitioning of friends.
\newblock In \emph{Proceedings of the 39th AAAI Conference on Artificial
  Intelligence, AAAI~'25}, pages 13754--13762. {AAAI} Press, 2025.
\newblock \doi{10.1609/aaai.v39i13.33503}.

\bibitem[Deligkas et~al.(2026)Deligkas, Eiben, Goldsmith, Knop, and
  Schierreich]{DeligkasEGKS2026}
Argyrios Deligkas, Eduard Eiben, Tiger-Lily Goldsmith, Dušan Knop, and
  {\v{S}}imon Schierreich.
\newblock Stability in distance preservation games on graphs.
\newblock In \emph{Proceedings of the 25th International Conference on
  Autonomous Agents and Multiagent Systems, {AAMAS}~'26}. IFAAMAS, 2026.

\bibitem[Erd\H{o}s and Szekeres(1935)]{erdos1935combinatorial}
Paul Erd\H{o}s and George Szekeres.
\newblock A combinatorial problem in geometry.
\newblock \emph{Compositio Mathematica}, 2:\penalty0 463--470, 1935.

\bibitem[Fioravantes et~al.(2025)Fioravantes, Gahlawat, and
  Melissinos]{FioravantesGM2025}
Foivos Fioravantes, Harmender Gahlawat, and Nikolaos Melissinos.
\newblock Exact algorithms and lower bounds for forming coalitions of
  constrained maximum size.
\newblock In \emph{Proceedings of the 39th AAAI Conference on Artificial
  Intelligence, AAAI~'25}, pages 13847--13855. {AAAI} Press, 2025.
\newblock \doi{10.1609/aaai.v39i13.33514}.

\bibitem[Foley(1967)]{Foley1967}
Duncan~Karl Foley.
\newblock Resource allocation and the public sector.
\newblock \emph{Yale Economic Essays}, 7:\penalty0 45--98, 1967.

\bibitem[Friedrich et~al.(2023)Friedrich, Lenzner, Molitor, and
  Seifert]{FriedrichLMS2023}
Tobias Friedrich, Pascal Lenzner, Louise Molitor, and Lars Seifert.
\newblock Single-peaked jump {S}chelling games.
\newblock In \emph{Proceedings of the 16th International Symposium on
  Algorithmic Game Theory, {SAGT}~'23}, volume 14238 of \emph{LNCS}, pages
  111--126. Springer, 2023.
\newblock \doi{10.1007/978-3-031-43254-5\_7}.

\bibitem[Gajarsk{\'{y}} et~al.(2013)Gajarsk{\'{y}}, Lampis, and
  Ordyniak]{GajarskyLO2013}
Jakub Gajarsk{\'{y}}, Michael Lampis, and Sebastian Ordyniak.
\newblock Parameterized algorithms for modular-width.
\newblock In \emph{Proceedings of the 8th International Symposium on
  Parameterized and Exact Computation, {IPEC}~'13}, volume 8246 of \emph{LNCS},
  pages 163--176. Springer, 2013.
\newblock \doi{10.1007/978-3-319-03898-8\_15}.

\bibitem[Ganian et~al.(2023)Ganian, Hamm, Knop, Roy, Schierreich, and
  Such{\'{y}}]{GanianHKRSS2023}
Robert Ganian, Thekla Hamm, Dušan Knop, Sanjukta Roy, {\v{S}}imon Schierreich,
  and Ondřej Such{\'{y}}.
\newblock Maximizing social welfare in score-based social distance games.
\newblock In \emph{Proceedings of the 19th Conference on Theoretical Aspects of
  Rationality and Knowledge, {TARK}~'23}, volume 379 of \emph{{EPTCS}}, pages
  272--286, 2023.
\newblock \doi{10.4204/eptcs.379.22}.

\bibitem[Garey and Johnson(1975)]{GareyJ1975}
Michael~R. Garey and David~S. Johnson.
\newblock Complexity results for multiprocessor scheduling under resource
  constraints.
\newblock \emph{{SIAM} Journal on Computing}, 4\penalty0 (4):\penalty0
  397--411, 1975.
\newblock \doi{10.1137/0204035}.
\newblock URL \url{https://doi.org/10.1137/0204035}.

\bibitem[Gaspers and Najeebullah(2019)]{GaspersN2019}
Serge Gaspers and Kamran Najeebullah.
\newblock Optimal surveillance of covert networks by minimizing inverse
  geodesic length.
\newblock In \emph{Proceedings of the 33rd {AAAI} Conference on Artificial
  Intelligence, {AAAI}~'19}, pages 533--540. {AAAI} Press, 2019.
\newblock \doi{10.1609/aaai.v33i01.3301533}.

\bibitem[Grandi et~al.(2023)Grandi, Kanesh, Lisowski, Sridharan, and
  Turrini]{GrandiKLST2023}
Umberto Grandi, Lawqueen Kanesh, Grzegorz Lisowski, Ramanujan Sridharan, and
  Paolo Turrini.
\newblock Identifying and eliminating majority illusion in social networks.
\newblock In \emph{Proceedings of the 37th {AAAI} Conference on Artificial
  Intelligence, {AAAI}~'23}, pages 5062--5069. {AAAI} Press, 2023.
\newblock \doi{10.1609/aaai.v37i4.25634}.

\bibitem[Harris and Narayanaswamy(2024)]{HarrisN2024}
David~G. Harris and N.~S. Narayanaswamy.
\newblock A faster algorithm for vertex cover parameterized by solution size.
\newblock In \emph{Proceedings of the 41st International Symposium on
  Theoretical Aspects of Computer Science, {STACS}~'24}, volume 289 of
  \emph{LIPIcs}, pages 40:1--40:18, 2024.
\newblock \doi{10.4230/LIPIcs.STACS.2024.40}.

\bibitem[{H{\'{e}}bert{-}Johnson} et~al.(2024){H{\'{e}}bert{-}Johnson},
  Lokshtanov, Sonar, and Surianarayanan]{HerbetHLSS2024}
{\'{U}}rsula {H{\'{e}}bert{-}Johnson}, Daniel Lokshtanov, Chinmay Sonar, and
  Vaishali Surianarayanan.
\newblock Parameterized complexity of kidney exchange revisited.
\newblock In \emph{Proceedings of the 33rd International Joint Conference on
  Artificial Intelligence, {IJCAI}~'24}, pages 76--84. ijcai.org, 2024.
\newblock \doi{10.24963/ijcai.2024/9}.

\bibitem[Hoffmann et~al.(2017)Hoffmann, McCreesh, and Reilly]{HoffmannMR2017}
Ruth Hoffmann, Ciaran McCreesh, and Craig Reilly.
\newblock Between subgraph isomorphism and maximum common subgraph.
\newblock In \emph{Proceedings of the 31st {AAAI} Conference on Artificial
  Intelligence, {AAAI}~'17}, pages 3907--3914. {AAAI} Press, 2017.
\newblock \doi{10.1609/aaai.v31i1.11137}.

\bibitem[Kaklamanis et~al.(2018)Kaklamanis, Kanellopoulos, and
  Patouchas]{KaklamanisKP2018}
Christos Kaklamanis, Panagiotis Kanellopoulos, and Dimitris Patouchas.
\newblock On the price of stability of social distance games.
\newblock In \emph{Proceedings of the 11th International Symposium on
  Algorithmic Game Theory, {SAGT}~'18}, volume 11059 of \emph{LNCS}, pages
  125--136. Springer, 2018.
\newblock \doi{10.1007/978-3-319-99660-8\_12}.

\bibitem[Kanellopoulos et~al.(2023)Kanellopoulos, Kyropoulou, and
  Voudouris]{KanellopoulosKV2022}
Panagiotis Kanellopoulos, Maria Kyropoulou, and Alexandros~A. Voudouris.
\newblock Not all strangers are the same: {T}he impact of tolerance in
  {S}chelling games.
\newblock \emph{Theoretical Computer Science}, 971:\penalty0 114065, 2023.
\newblock \doi{10.1016/j.tcs.2023.114065}.

\bibitem[Knop et~al.(2022)Knop, Schierreich, and Such{\'{y}}]{KnopSS2022}
Du{\v{s}}an Knop, {\v{S}}imon Schierreich, and Ond{\v{r}}ej Such{\'{y}}.
\newblock Balancing the spread of two opinions in sparse social networks
  (student abstract).
\newblock In \emph{Proceedings of the 36th {AAAI} Conference on Artificial
  Intelligence, {AAAI}~'22}, pages 12987--12988. {AAAI} Press, 2022.
\newblock \doi{10.1609/aaai.v36i11.21630}.

\bibitem[Knop and Schierreich(2023)]{KnopS2023}
Dušan Knop and {\v{S}}imon Schierreich.
\newblock Host community respecting refugee housing.
\newblock In \emph{Proceedings of the 22nd International Conference on
  Autonomous Agents and Multiagent Systems, {AAMAS}~'23}, pages 966--975.
  {IFAAMAS}, 2023.
\newblock URL \url{https://dl.acm.org/doi/10.5555/3545946.3598736}.

\bibitem[Korhonen and Parviainen(2015)]{KorhonenP2015}
Janne~H. Korhonen and Pekka Parviainen.
\newblock Tractable bayesian network structure learning with bounded vertex
  cover number.
\newblock In \emph{Proceedings of the 29th Annual Conference on Neural
  Information Processing Systems, {NIPS}~'15}, pages 622--630, 2015.
\newblock URL
  \url{https://proceedings.neurips.cc/paper/2015/hash/66368270ffd51418ec58bd793f2d9b1b-Abstract.html}.

\bibitem[Kreisel et~al.(2024)Kreisel, Boehmer, Froese, and
  Niedermeier]{KreiselBFN2024}
Luca Kreisel, Niclas Boehmer, Vincent Froese, and Rolf Niedermeier.
\newblock Equilibria in {S}chelling games: {C}omputational hardness and
  robustness.
\newblock \emph{Autonomous Agents and Multi-Agent Systems}, 38\penalty0
  (1):\penalty0 9, 2024.
\newblock \doi{10.1007/s10458-023-09632-7}.

\bibitem[Lampis(2012)]{Lampis2012}
Michael Lampis.
\newblock Algorithmic meta-theorems for restrictions of treewidth.
\newblock \emph{Algorithmica}, 64\penalty0 (1):\penalty0 19--37, 2012.
\newblock \doi{10.1007/s00453-011-9554-x}.

\bibitem[Lisowski and Schierreich(2025)]{LisowskiS2025}
Grzegorz Lisowski and {\v{S}}imon Schierreich.
\newblock Stability in newcomers' housing: {A} story about anonymous
  preferences and beyond.
\newblock In \emph{Proceedings of the 22nd European Conference on Multi-Agent
  Systems, {EUMAS}~'25}, LNCS. {Springer}, 2025.

\bibitem[McCreesh et~al.(2018)McCreesh, Prosser, Solnon, and
  Trimble]{McCreeshPST2018}
Ciaran McCreesh, Patrick Prosser, Christine Solnon, and James Trimble.
\newblock When subgraph isomorphism is really hard, and why this matters for
  graph databases.
\newblock \emph{Journal of Artificial Intelligence Research}, 61:\penalty0
  723--759, 2018.
\newblock \doi{10.1613/jair.5768}.

\bibitem[Narayanan et~al.(2025{\natexlab{a}})Narayanan, Opatrny, Tummala, and
  Voudouris]{NarayananOTV2025}
Lata Narayanan, Jaroslav Opatrny, Shanmukha Tummala, and Alexandros~A.
  Voudouris.
\newblock Variety-seeking jump games on graphs.
\newblock In \emph{Proceedings of the 34th International Joint Conference on
  Artificial Intelligence, {IJCAI}~'25}, pages 4005--4013. ijcai.org,
  2025{\natexlab{a}}.
\newblock \doi{10.24963/ijcai.2025/446}.

\bibitem[Narayanan et~al.(2025{\natexlab{b}})Narayanan, Sabbagh, and
  Voudouris]{NarayananSV2025}
Lata Narayanan, Yasaman Sabbagh, and Alexandros~A. Voudouris.
\newblock Diversity-seeking jump games in networks.
\newblock \emph{Autonomous Agents and Multi-Agent Systems}, 39\penalty0
  (2):\penalty0 32, 2025{\natexlab{b}}.
\newblock \doi{10.1007/s10458-025-09714-8}.

\bibitem[Schierreich(2023)]{Schierreich2023}
{\v{S}}imon Schierreich.
\newblock Anonymous refugee housing with upper-bounds.
\newblock \emph{CoRR}, abs/2308.09501, 2023.
\newblock \doi{10.48550/ARXIV.2308.09501}.
\newblock URL \url{https://doi.org/10.48550/arXiv.2308.09501}.

\bibitem[Schierreich(2024)]{Schierreich2024}
{\v{S}}imon Schierreich.
\newblock Two-stage refugee resettlement models: {C}omputational aspects of the
  second stage.
\newblock In \emph{Proceedings of the 7th {AAAI/ACM} Conference on AI, Ethics,
  and Society, {AIES}~'24}, pages 50--51. {AAAI} Press, 2024.
\newblock \doi{10.1609/aies.v7i2.31908}.

\bibitem[Velez(2016)]{Velez2016}
Rodrigo~A. Velez.
\newblock Fairness and externalities.
\newblock \emph{Theoretical Economics}, 11:\penalty0 381--410, 2016.
\newblock \doi{10.3982/TE1651}.

\bibitem[Walsh(1999)]{Walsh1999}
Toby Walsh.
\newblock Search in a small world.
\newblock In \emph{Proceedings of the 16th International Joint Conference on
  Artificial Intelligence, {IJCAI}~'99}, pages 1172--1177. Morgan Kaufmann,
  1999.

\bibitem[Watts and Strogatz(1998)]{WattsS1998}
Duncan~J. Watts and Steven~H. Strogatz.
\newblock Collective dynamics of `small-world' networks.
\newblock \emph{Nature}, 393\penalty0 (6684):\penalty0 440--442, 1998.
\newblock \doi{10.1038/30918}.

\bibitem[Wilczynski(2023)]{Wilczynski2023}
Ana{\"{e}}lle Wilczynski.
\newblock Ordinal hedonic seat arrangement under restricted preference domains:
  {S}wap stability and popularity.
\newblock In \emph{Proceedings of the 32nd International Joint Conference on
  Artificial Intelligence, {IJCAI}~'23}, pages 2906--2914. ijcai.org, 2023.
\newblock \doi{10.24963/ijcai.2023/324}.

\end{thebibliography}

\clearpage

\appendix
\addtocontents{toc}{\protect\setcounter{tocdepth}{1}}

\end{document}